\newtheorem{theorem}{Theorem}[section]
\newtheorem{definition}[theorem]{Definition}
\newtheorem{claim}[theorem]{Claim}
\newtheorem{lemma}[theorem]{Lemma}
\newenvironment{citemize}{
  \begin{itemize}
}{
  \end{itemize}
}
\newenvironment{cenumerate}{
  \begin{enumerate}
}{
  \end{enumerate}
}
\newcommand{\hide}[1]{}
\newcommand{\ignore}[1]{}
\newcommand{\esm}[1]{\ensuremath{#1}}
\newcommand{\mr}[1]{\esm{\mathrm{#1}}}
\newcommand{\ms}[1]{\esm{\mathsf{#1}}}
\DeclareMathAlphabet{\mathcalalt}{OMS}{zplm}{m}{n}
\newcommand{\mc}[1]{\esm{\mathcalalt{#1}}}
\newcommand{\mbb}[1]{\esm{\mathbb{#1}}}
\newcommand{\Z}{\esm{\mathbb{Z}}}
\newcommand{\N}{\esm{\mathbb{N}}}
\newcommand{\calA}{\mc{A}}
\newcommand{\calB}{\mc{B}}
\newcommand{\calC}{\mc{C}}
\newcommand{\calD}{\mc{D}}
\newcommand{\calK}{\mc{K}}
\newcommand{\calM}{\mc{M}}
\newcommand{\calO}{\mc{O}}
\newcommand{\calS}{\mc{S}}
\newcommand{\calZ}{\mc{Z}}
\newcommand{\bbG}{\mbb{G}}
\newcommand{\abs}[1]{\esm{\left| #1 \right|}}
\newcommand{\set}[1]{\esm{\left\{ #1 \right\}}}
\newcommand{\zo}{\{0,1\}}
\newcommand{\ord}[1]{\esm{{#1}^{\mr{th}}}}
\newcommand{\getsr}{\xleftarrow{\textsc{r}}}
\newcommand{\sind}[2]{#1^{(#2)}}
\newcommand{\para}[1]{\paragraph{#1}}
\newcommand{\mathsc}[1]{{\normalfont \textsc{#1}}}
\newcommand{\msc}[1]{\esm{\mathsc{#1}}}
\newcommand{\mpk}{\msc{mpk}}
\newcommand{\msk}{\msc{msk}}
\newcommand{\id}{\msc{id}}
\newcommand{\hid}{\overline{\id}}
\newcommand{\sk}{\msc{sk}}
\newcommand{\skid}{\sk_\id}
\newcommand{\ct}{\msc{ct}}
\newcommand{\hct}{\overline{\ct}}
\newcommand{\sig}{\textsc{sig}}
\newcommand{\prg}{\ms{PRG}}
\newcommand{\hyb}{\ms{H}}
\newcommand{\Extract}{\ms{Extract}}
\newcommand{\ksk}{\atk}
\newcommand{\setup}{\ms{Setup}}
\newcommand{\extract}{\ms{Extract}}
\newcommand{\enc}{\ms{Encrypt}}
\newcommand{\dec}{\ms{Decrypt}}
\newcommand{\Test}{\ms{Test}}
\newcommand{\protoleft}[1]{\xleftarrow{\mathmakebox[26em]{\displaystyle #1}}}
\newcommand{\protoright}[1]{\xrightarrow{\mathmakebox[26em]{\displaystyle #1}}}
\newcommand{\protoleftlong}[1]{\xleftarrow{\mathmakebox[34em]{\displaystyle #1}}}
\newcommand{\protorightlong}[1]{\xrightarrow{\mathmakebox[34em]{\displaystyle #1}}}
\newcommand{\peenc}{\mathsf{PE.Enc}}
\newcommand{\pedec}{\mathsf{PE.Dec}}
\newcommand{\kdf}{\mathsf{KDF}}
\newcommand{\htk}{\mathsf{htk}}
\newcommand{\exk}{\mathsf{exk}}
\newcommand{\atk}{\mathsf{atk}}
\newcommand{\eadk}{\mathsf{eadk}}
\newcommand{\sid}{\mathsf{sid}}
\newcommand{\bid}{\mathsf{bid}}
\newcommand{\hsid}{\overline{\sid}}
\newcommand{\hbid}{\overline{\bid}}
\newcommand{\hhtk}{\overline{\htk}}
\newcommand{\hexk}{\overline{\exk}}
\newcommand{\hatk}{\overline{\atk}}
\newcommand{\headk}{\overline{\eadk}}
\newcommand{\expt}{\mathsf{Expt}}
\newcommand{\statereveal}{\ms{StateReveal}}
\newcommand{\keyreveal}{\ms{KeyReveal}}
\newcommand{\broadcastreveal}{\ms{BroadcastReveal}}
\newcommand{\corrupt}{\ms{Corrupt}}
\newcommand{\ptest}{T}
\mathchardef\mhyphen="2D
\newcommand{\indidcca}{\ms{IND \mhyphen ID \mhyphen CCA}}
\newcommand{\indcpa}{\ms{IND \mhyphen CPA}}
\newcommand{\negl}{\ms{negl}}
\newcommand{\Funs}{\ms{Funs}}
\newcommand{\Enc}{\ms{Encrypt}}
\newcommand{\Dec}{\ms{Decrypt}}
\newcommand{\LoR}{\textbf{LoR}}
\begin{document}

\title{Privacy, Discovery, and Authentication for the Internet of Things \\ {\normalsize (Extended Version)}}

\author[1]{David J. Wu}
\author[2]{Ankur Taly}
\author[2]{Asim Shankar}
\author[1]{Dan Boneh}
\affil[1]{Stanford University}
\affil[2]{Google}
\date{}
\maketitle

\begin{abstract}
Automatic service discovery is essential to realizing the full potential of
the Internet of Things (IoT). While discovery protocols like Multicast DNS,
Apple AirDrop, and Bluetooth Low Energy have gained widespread adoption across
both IoT and mobile devices, most of these protocols do not offer any form of privacy control
for the service, and often leak sensitive information such as service type,
device hostname, device owner's identity, and more in the clear.

To address the need for better privacy in both the IoT and the mobile landscape,
we develop two protocols for private service discovery and private mutual
authentication. Our protocols provide private and authentic service
advertisements, zero round-trip (0-RTT) mutual authentication, and are
provably secure in the Canetti-Krawczyk key-exchange model. In contrast to
alternatives, our protocols are lightweight and require minimal modification
to existing key-exchange protocols. We integrate our protocols into an
existing open-source distributed applications framework, and provide
benchmarks on multiple hardware platforms: Intel Edisons, Raspberry Pis, smartphones,
laptops, and desktops. Finally, we discuss some privacy limitations of the Apple
AirDrop protocol (a peer-to-peer file sharing mechanism) and show how to
improve the privacy of Apple AirDrop using our private mutual authentication
protocol.
\end{abstract}


\section{Introduction}

Consider a smart home with dozens of IoT devices: an alarm system, a
nanny camera, health monitoring devices, house controls (e.g.,
lighting, heating), and electronics.  Many of these devices need to be
controlled by multiple people, including residents, guests, employees,
and repairmen.  The devices must be easily discoverable by 
all these people.

To provide a good experience, IoT devices advertise the
services they offer using a service discovery mechanism.
Examples include Mutlticast DNS (mDNS)~\cite{CK13a, CK13b}, Apple
Bonjour~\cite{Bonjour}, Bluetooth Low Energy (BLE)~\cite{BLE}, and
Universal Plug-N-Play (UPnP)~\cite{UPnP}.  These mechanisms require
only a broadcast communication channel between the devices (unlike
older discovery protocols~\cite{Jini, CZHJK99, ZMN05} that need
a directory service).  Moreover, these protocols adhere to the zero
configuration networking charter (\emph{Zeroconf})~\cite{Zeroconf} and
can operate with minimal user intervention.

Privacy is an important feature often missing in zero-configuration
service discovery protocols (e.g., Zeroconf)~\cite{KM14a, KM14b,
  KBSW13, PGMSW07}. Services broadcast extensive information about
themselves in the clear to make it easy for clients to discover
them. Advertisements often include sensitive information such as
service type, device hostname, and the device owner's identity.
\ignore{While this may not pose a privacy threat when the service is
  running on a public device (e.g., a vending machine), it certainly}
This poses a threat when the service is running on a private device
(e.g., an alarm system or a smart watch). Identities obtained from
personal devices can be used for user profiling, tracking, and
launching social engineering attacks. A recent study~\cite{KBSW13}
revealed that 59\% of all devices advertise their owner's name in the
clear, which is considered harmful by more than 90\% of the
device owners.  Indeed, one would not want random visitors, or
passerbys, to ``discover'' the alarm system in their home.  Only
authorized clients, such as the home owner and her family, a
technician, or local police, should be able to discover this device.

\medskip
In this work, we address this problem by building a new discovery and
authentication mechanism that respects the privacy of both sides. 

\para{Private service discovery.} Our goal is to ensure that services
are only discoverable by an authorized set of clients. This problem is
challenging as on one hand, services want to advertise themselves only
after confirming that the client trying to discover them is authorized
to see them.  On the other hand, clients want to reveal their identity
only after verifying that the service they are talking to is the
desired one.  In particular, a client device, such as a smartphone,
should not simply identify itself to every device in the wild that requests
it.  This leads to a chicken-and-egg problem reminiscent of the
settings addressed by secret handshakes and hidden
credentials~\cite{BDSSSW03,LDB05,JKT06,AKB07,HBSO03,FAL04}.

\para{Private mutual authentication.} A closely related privacy problem arises
during  authentication between mutually suspicious entities. Most existing
mutual authentication protocols (SIGMA~\cite{CK02,Kra03}, JFK~\cite{ABBCIKR04},
and TLS~\cite{DR08}) require one of the parties (typically the server) to
reveal its identity to its peer before the other, effectively making that
party's identity public to anyone who communicates with it. This is
undesirable when the participants are personal end-user devices,
where neither device is inclined to reveal its identity before learning that
of its peer. Private mutual authentication is the problem of designing a
mutual authentication protocol wherein each end learns the identity of its peer
only if it satisfies the peer's authorization policy.\footnote{While
protocols like SIGMA-I~\cite{Kra03,CK02} and TLS~1.3~\cite{KW15,Res15}
can ensure privacy against passive adversaries, they
do {\em not} provide privacy against active attackers.}

\para{An application.}
Our private discovery protocols apply broadly to many identification 
and key-exchange settings.
Here we describe a common mobile-to-mobile example: peer-to-peer file sharing.
Protocols such as AirDrop and Shoutr
have become popular among mobile
users for sharing photos and other content with their friends. These
peer-to-peer protocols typically work by having a participant
start a sharing service and making it publicly discoverable. The
other device then discovers the service and connects to it to complete the file
transfer. While this offers a seamless sharing experience, it compromises
privacy for the device that makes itself discoverable---nearby
devices on the same network can also listen to the advertisement and obtain 
identifiers from it. A private service discovery
mechanism would make the service advertisement available only to the intended
devices and no one else. The AirDrop protocol offers a ``contacts-only''
mode for additional privacy, but as we show in
Section~\ref{sec:airdrop-case-study}, this mechanism leaks significant 
private information.
The private discovery protocols we develop in this paper provide an
efficient solution to  these problems.

\subsection{Our Contributions}
This paper presents private mutual authentication and service discovery
protocols for IoT and mobile settings. Given the network connectivity constraints
implicit to these settings, our protocols do not require devices to maintain
constant connectivity to an external proxy or directory service in
the cloud. Furthermore, the protocols do not require the participants
to have an out-of-band shared secret, thereby allowing
devices with no pre-existing relationships to discover each other (in accordance with
their respective privacy policies).

In Section~\ref{sec:threatmodel}, we motivate the desired features we seek in our protocols
by presenting a case study of the Apple AirDrop protocol---specifically, its ``contacts-only" mode
for private file sharing. We describe several privacy vulnerabilities in the design of 
AirDrop, which we have disclosed to Apple. In light of these vulnerabilities,
we define the robust privacy guarantees that we seek in our protocols.

\para{Protocol construction.}
Our protocols are designed for distributed public-key infrastructures, such
as the Simple Distributed Security Infrastructure (SDSI)~\cite{RL96}.
Each principal has a public and private key-pair (for a signature scheme), and a
hierarchical human-readable name bound to its public key by a certificate
chain. The key primitive in our design is an encryption scheme that
allows messages to be encrypted under
an authorization policy so that it can be decrypted only by principals
satisfying the policy. Using this primitive, we design a mutual authentication
protocol where one party sends its identity (certificate chain)
encrypted under its authorization policy. This protects the privacy of
that party. The other party maintains its privacy by revealing
its identity only {\em after} verifying the first party's identity. The same primitive is also used
to construct a private service discovery protocol by having a service encrypt
its advertisement under its authorization policy before broadcasting.

The service advertisements in our discovery protocol carry a signed semi-static Diffie-Hellman (DH)
key. The signature provides authenticity for the advertisements
and protects clients from 
connecting to an impostor service.
The semi-static DH key enables clients to establish a secure 
session with the service using zero round-trips (0-RTT), similar to what
is provided in TLS 1.3~\cite{Res15,KW15}.

The authorization policies considered in this work are based on name prefixes. For instance,
a technician Bob from HomeSecurity Corp.\ may have the name
\texttt{HomeSecurityCorp/Technician/Bob}, and a home security system might have a 
policy that only principals whose name starts with \texttt{HomeSecurityCorp/Technician} are allowed
to discover it. Encrypting messages under a prefix-based authorization
policy is possible using a prefix encryption scheme~\cite{LW14}, which can be constructed 
using off-the-shelf identity-based encryption (IBE) schemes~\cite{BF01,BB04a}. 

\para{Protocol analysis.} We give a full specification of our
private mutual authentication and service discovery protocols in
Sections~\ref{sec:private-mutual-auth} and~\ref{sec:discovery-protocol}.
We also discuss a range of practical issues related to our protocol such
as replay protection, ensuring perfect forward secrecy, and amortizing the
overhead of the prefix encryption.
In Appendices~\ref{app:mutual-auth-analysis}
and~\ref{app:discovery-protocol-security},
we provide a rigorous proof of the security and privacy of
both protocols in the Canetti-Krawczyk key-exchange model~\cite{CK01,CK02,Kra03}.

\para{Implementation and evaluation.}
We implemented and deployed our protocols in the \emph{Vanadium} open-source
distributed application framework~\cite{Vanadium}. 
We measured the end-to-end latency overhead for our
private mutual authentication protocol on an Intel Edison, a Raspberry Pi, a smartphone,
a laptop, and a desktop. On the desktop, the protocol completes in 9.5~ms, which corresponds to a
1.8x slowdown over the SIGMA-I protocol that does {\em not} provide mutual privacy.
On the Nexus~5X and the Raspberry Pi, the protocol completes in just over 300~ms 
(about a 3.8x slowdown over SIGMA-I), which makes it suitable for
user-interactive services such as AirDrop and home security system controls
that do not have high throughput requirements.

For the discovery protocol, a service's private discovery message consists
of approximately~$820$ bytes of data. Since mDNS broadcasts support
up to $1300$ bytes of data, it is straightforward to deploy our
discovery protocol over mDNS.
In Section~\ref{sec:experiments}, we also discuss mechanisms for
deploying the protocol over Bluetooth Low Energy and other protocols
where the size of the advertisement packets are
much more constrained.

Based on our benchmarks, our protocols are practical on a range of IoT devices,
such as thermostats (e.g., Nest), security systems
(e.g., Dropcam), and smart
switches (e.g., Belkin Wemo).
All of these devices have hardware comparable to a Pi or an Intel Edison.
In fact, the Intel Edison is marketed primarily
as a platform for building IoT applications. Moreover,
as our AirDrop analysis demonstrates, many of the privacy issues we describe
are not limited to only the IoT setting. Indeed, in
Section~\ref{sec:fixing-airdrop}, we show how our private mutual authentication and
discovery protocols can be efficiently deployed to solve privacy problems
in peer-to-peer interactions on smartphones. On more constrained processors
such as the ARM Cortex M0, however, we expect
the handshakes to take several seconds to complete. This makes our protocols less
suitable in Cortex M0 applications that require fast session setup. Nonetheless,
our protocols are sufficient for a wide range of existing IoT and mobile scenarios.


\section{Desired Protocol Features}\label{sec:threatmodel}
In this section, we define the privacy properties and features that we
seek in our protocols.  We begin with a case study of Apple's AirDrop protocol,
and use it to motivate our privacy concerns and desired features.

\subsection{Case Study: Apple AirDrop}
\label{sec:airdrop-case-study}

AirDrop is a protocol for 
transferring files between two devices running
OS X (Yosemite or later) or
iOS (version 7 or later). It is designed to work
whenever two AirDrop-enabled devices are close to 
each other and even when they do not have Internet access. 
AirDrop uses both Bluetooth Low Energy (BLE) and 
Apple's peer-to-peer WiFi technology ({\tt awdl}) for device
discovery and file transfer. 

To receive files, devices make themselves discoverable by
senders. AirDrop offers two modes for making devices discoverable:
\emph{everyone}, which makes the device discoverable by all nearby devices, and
\emph{contacts-only} (default), which makes the receiving device discoverable only by
senders in its contacts.  The contacts-only mode is meant to be
a privacy mechanism and can be viewed as a solution to the private service
discovery problem for the ``contacts-only'' policy.

\para{Protocol overview.} \label{airdrop-protocol}
We analyzed the AirDrop protocol to understand its privacy
properties and see how it solves the chicken-and-egg
problem of private mutual authentication.
Below, we present a high-level description of the AirDrop protocol in
contacts-only mode, based on the iOS9 security guide~\cite{Apple-iOS15} and 
our experiments with observing AirDrop flows between a MacBook Pro
and an iPhone.
\begin{cenumerate}
  \item When a sender opens the sharing pane on her device, the
  device advertises a truncated hash of the sender's identity
   over BLE, and simultaneously
  queries for a service with label {\tt \_AirDrop.\_tcp.local}
  over mDNS.
  \item A nearby receiving device matches the hash of the sender's identity
  against its contacts list. If a match is found, the receiving device
  starts a service and advertises
  its instance name, {\tt awdl} IP address, and port number over mDNS, under
  the label {\tt \_AirDrop.\_tcp.local}.
  \item The sending device obtains the receiving device's service
  advertisement, and initiates a TLS (version 1.2) connection to it over {\tt awdl}.
  The TLS handshake uses client authentication, wherein each device sends
  its iCloud identity certificate\footnote{All AirDrop-enabled devices 
  have an RSA public and private key pair and an iCloud certificate
  for the owner's identity.}
  to the other in the clear. The handshake
  fails if either party receives a certificate for someone not in their
  contacts list.
  \item Once the TLS connection is established, the
  receiver's description is displayed on the sending device's sharing pane. The
  sender selects the receiver as well as the files to be shared which are then
  sent over the established TLS channel.
\end{cenumerate}

\para{Privacy weaknesses in Apple AirDrop.}
Our analysis indicates that AirDrop employs two main privacy checks in
contacts-only mode. First, a receiving device responds only if the sender's
identifier (received over BLE) matches one of its contacts, and second, a
communication channel is established between a sender and receiver only if
their respective certificates match a contact on their peer's device.  While
necessary, these checks are insufficient to protect the privacy of the sender
and receiver. Below, we enumerate some of the privacy problems with the existing
protocol.
\begin{citemize}
  \item \textbf{Sender and receiver privacy and tracking.} The use of TLS~1.2 with client
  authentication causes both the sender and receiver to exchange certificates 
  in the clear. This makes their identities, as specified by their certificates,
  visible to even a {\em passive} eavesdropper on the network. Moreover, the public keys in the certificates 
  allow the eavesdropper to track the sender and receiver in the future. While using a key-exchange
  protocol like SIGMA-I~\cite{Kra03} or TLS 1.3~\cite{Res15} 
  would hide the certificate for one of the parties (the sender in this case), the certificate
  of the other party (the receiver) would still be revealed to an active attacker.
  Protecting the privacy of {\em both} parties against active
  attackers, requires {\em private} mutual authentication, as constructed
  in Section~\ref{sec:private-mutual-auth}. 

  \item \textbf{Sender impersonation.} Another privacy problem is that the
  sender's identifier advertised over BLE can be forged or replayed by an
  attacker to trick an honest receiver into matching
  it against its contacts. Based on the receiver's response, the attacker
  learns whether the receiver has the sender in their contacts, and moreover,
  could try to initiate a TLS session with the receiver to obtain its certificate.
  To protect against this kind of impersonation attack, discovery
  broadcasts must provide some kind of {\em authenticity}, as 
  in Section~\ref{sec:discovery-protocol}.  
\end{citemize}

\subsection{Protocol Design Goals}\label{sec:desired-features}

The privacy properties of AirDrop are insufficient to solve
the private service discovery problem. While our case study in
Section~\ref{sec:airdrop-case-study} focuses exclusively
on the AirDrop protocol, most existing key-exchange and
service discovery protocols do not provide robust privacy and authenticity
guarantees. We survey some of these alternative protocols in Section~\ref{sec:related-work}.
In this section, we describe the strong privacy properties we seek in our
  protocols. We want these properties to hold even against an active network
  attacker (i.e., an attacker that can intercept, modify, replay and drop packets
  arbitrarily). We begin by describing two concrete privacy objectives:

\begin{citemize}
  \item \textbf{Mutual privacy.} The protocols must ensure that the identities
  and any identifying attributes of the protocol participants are only
  revealed to authorized recipients. For service discovery, this
  applies to both the service being advertised and the clients trying to
  discover it.

  \item \textbf{Authentic advertisements.} Service advertisements should
  be unforgeable and authentic. Otherwise, an attacker may forge a
  service advertisement to determine if a client is interested in the service.

\end{citemize}
Finally, to ensure that our protocols are
applicable in both IoT and peer-to-peer settings, we impose
additional constraints on the protocol design:
\begin{citemize}
  \item \textbf{No out-of-band pairing for participants.}
  The protocol should not require participants to exchange certain
  information or secrets out-of-band. This is especially important for the
  discovery protocol as the service may not know all the clients that might
  try to discover it in the future.
    Besides impacting feasibility, the pairing requirement also degrades the
    user experience. For instance, the main charm of AirDrop is that it ``just
    works'' without needing any {\em a priori} setup between the sender and
    recipient.

  \item \textbf{No cloud dependency during protocol execution.}
  The protocol should not rely on an external service in the cloud, such as a
  proxy or a directory service. Protocols that depend on cloud-based services
  assume that the participating devices maintain reliable Internet access.
  This assumption fails for many IoT devices, including devices that only
  communicate over Bluetooth, or ones present in spaces
  where Internet access is unreliable.
    Again, a nice feature of the AirDrop protocol is that it works even if
    neither device is connected to the Internet. Thus, we seek a protocol that
    maximizes peer-to-peer communication and avoids dependence on global
    services.
\end{citemize}


\section{Preliminaries}\label{sec:prelim}

In this section, we describe our identity and authorization model, as well
as introduce the cryptographic primitives we use in our constructions.

\para{Identity and authorization model.}
We define our protocols for a generic distributed public-key infrastructure, such as
SDSI~\cite{RL96}.  We assume each principal has a public and private key-pair
for a signature scheme and one or more hierarchically-structured human-readable names
bound to its public key via a certificate
chain. For instance, a television set owned by Alice might have a certificate chain binding
the name {\tt Alice/Devices/TV} to it. The same television set may also have a certificate chain
with the name {\tt PopularCorp/Products/TV123} from its manufacturer. Our protocols
are agnostic to the specific format of certificates and how they are distributed.

Principals authenticate each other by exchanging certificate chains and providing
a signature on a fresh (session-specific) nonce.
During the authentication protocol, a principal validates its peer's certificate
chain, and extracts the name bound to the certificate chain. Authorization
decisions are based on this extracted name, and {\em not} the public key. For example,
Alice may authorize all principals with names matching the prefix pattern {\tt Alice/Devices/*} 
to access her television set. In this work, we consider prefix-based authorization policies.

Prefix-based policies can also be used for group-based access control. For instance, Alice may issue
certificate chains with names {\tt Alice/Family/Bob}, and {\tt Alice/Family/Carol}
to her family members Bob and Carol. She can then authorize her family to discover and 
access her home security system simply by setting the authorization policy to {\tt Alice/Family/*}.

\subsection{Cryptographic and Protocol Building Blocks}
\label{sec:crypto-primitives}
We write $\Z_p$ to denote the group of integers modulo $p$.
For a distribution $\calD$, we write $x \gets \calD$ to denote that $x$ is
drawn from $\calD$. For a finite set $S$, we write $x \getsr S$ to denote that
$x$ is drawn uniformly at random from $S$. A function $f(\lambda)$ is
negligible in a security parameter $\lambda$ if $f = o(1/\lambda^c)$ for all
$c \in \N$.

\para{Identity-based encryption and prefix encryption.}
Identity-based encryption
(IBE)~\cite{Sha84,BF01,Coc01,BB04a} is a generalization of public-key encryption
where public keys can be arbitrary strings, or {\em identities}. 
We give more details in
Appendix~\ref{app:prelims-add}.
Prefix encryption~\cite{LW14} is a generalization of
IBE where the secret key $\skid$ for an identity $\id$
can decrypt all ciphertexts encrypted to any
identity $\id'$ that is a prefix of $\id$ (in
IBE, decryption succeeds only if $\id = \id'$).\footnote{Lewko and
Waters~\cite{LW14} considered the reverse setting where
decryption succeeds if $\id$ is a prefix of $\id'$, but their construction
is easily adaptable to our setting.} Prefix encryption allows for messages to be encrypted under
a prefix-based policy such that the resulting ciphertext
can only be decrypted by principals satisfying the policy.

It is straightforward to construct prefix encryption from
IBE. The following construction is adapted from the Lewko-Waters scheme~\cite{LW14}.
The key for an identity $\id = s_1 / s_2 / \cdots / s_n$ consists of $n$ different
IBE keys for the following sequence of identities:
$(s_1), (s_1 / s_2), \ldots, (s_1 / s_2 / \cdots / s_n)$. Encryption
to an identity $\id'$ is just IBE encryption to the identity $\id'$.
Given a secret key $\skid$ for $\id$, if $\id'$ is a prefix
of $\id$, then $\skid$ contains an IBE identity key for $\id'$.

The syntax of a prefix encryption scheme is very similar to that of an IBE
scheme. Secret keys are still associated with identities, but ciphertexts are now
associated with prefix-constrained policies. In the following,
we write $\peenc(\mpk, \pi, m)$ to denote an encryption algorithm
that takes as input the public key
$\mpk$, a message $m$, a prefix-constrained
policy $\pi$, and outputs a ciphertext $\ct$. When there is no ambiguity,
we will treat $\mpk$ as an implicit parameter to $\peenc$. We write
$\pedec(\sk_\id, \ct)$ for the decryption algorithm that takes in a
ciphertext $\ct$ and a secret key $\sk_\id$ (for an identity $\id$)
and outputs a message if $\id$ matches the ciphertext policy $\pi$, and
a special symbol $\bot$ otherwise.

\para{Other cryptographic primitives.} We write $\set{m}_k$ to denote an
authenticated encryption~\cite{BN00,Rog02,BRW03}
of a message $m$ under a key $k$, and $\kdf(\cdot)$ to
denote a key-derivation function~\cite{DGHKR04,Kra10}.
We describe these additional primitives
as well as the cryptographic assumptions (Hash Diffie-Hellman
and Strong Diffie-Hellman~\cite{ABR01}) we use in our security analysis
in Appendix~\ref{app:prelims-add}.

\para{Key-exchange model.} We analyze the security of our private mutual
authentication and privacy service discovery protocols in the
Canetti-Krawczyk~\cite{CK01,CK02,Kra03} key-exchange model, which models
the capabilities of an active network adversary. We defer the formal
specification of this model and our generalization of it to the 
service discovery setting to
Appendix~\ref{app:key-exchange-model}.


\section{Private Mutual Authentication Protocol}
\label{sec:private-mutual-auth}

In this section, we describe our private mutual authentication protocol and
discuss some of its features and limitations. We use the identity and
authorization model described in Section~\ref{sec:prelim}.

\para{Protocol execution environment.}
In our setting, each principal has a signing/verification key-pair
and a set of names
(e.g., {\tt Alice/Devices/TV}) bound to its public verification key
via certificate chains.
For each name, a principal possesses an identity secret key
(for the prefix encryption scheme) extracted for that name.
The secret key extraction is carried out by IBE
root authorities (who possess the IBE master secret key $\msk$), which may coincide with
certificate authorities. For example, Alice could be an IBE root that issued both
an identity secret key and a certificate chain to its television set for the name
{\tt Alice/Device/TV}. Similarly, the television set may also have an identity secret key from its
manufacturer Popular Corp.\ for the name {\tt PopularCorp/Products/TV123}.
Finally, each principal also has one or more
prefix-constrained authorization policies.

In our protocol description, we refer to the initiator of the
protocol as the {\em client} and the responder as the {\em server}. For a
party $P$, we write $\id_P$ to denote a certificate chain binding $P$'s
public key to one of its identities. For a message $m$, we write
$\sig_P(m)$ to denote $P$'s signature on $m$. We refer to each instantiation
of the key-exchange protocol as a ``session,'' and each session is identified by a
unique session id, denoted $\sid$.

\para{Protocol specification.} Our starting point is the 3-round SIGMA-I
protocol~\cite{Kra03,CK02} which provides mutual authentication as well as
privacy against passive adversaries. Similar to the SIGMA-I protocol, our
protocol operates over a cyclic group $\bbG$ of prime order where the Hash-DH~\cite{ABR01}
assumption holds. Let $g$ be a generator of $\bbG$. We now describe our private mutual
authentication protocol. The message flow is illustrated in Figure~\ref{fig:auth-proto-2}.
\begin{enumerate}
  \item To initiate a session with id $\sid$,
  the client $C$ chooses $x \getsr \Z_p$, and sends $(\sid, g^x)$ to the server.

  \item Upon receiving a start message $(\sid, g^x)$ from a client,
  the server $S$ chooses $y \getsr \Z_p$, and does the following:
  \begin{enumerate}
    \item Encrypt its name $\id_S$ using the prefix encryption scheme under its policy
    $\pi_S$ to obtain an encrypted identity $\ct_S \gets \peenc(\pi_S, \id_S)$.

    \item Derive authenticated encryption keys
    $(\htk, \atk) = \kdf(g^x, g^y, g^{xy})$ for the handshake
    and application-layer messages, respectively.

    \item Compute a signature $\sigma = \sig_S(\sid, \ct_S, g^x, g^y)$ on its encrypted
    identity and the ephemeral session state, and encrypt
    $(\ct_S, \sigma)$ using $\htk$ to obtain a ciphertext $c$.
  \end{enumerate}
  The server replies to the client with $(\sid, g^y, c)$.

  \item When the client receives a response $(\sid, g^y, c)$, it
  derives the keys $(\htk, \atk) = \kdf(g^x, g^y, g^{xy})$. It tries to
  decrypt $c$ with $\htk$ and aborts if decryption fails. It parses
  the decrypted value as $(\ct_S, \sigma_S)$ and checks whether its identity
  $\id_C$ satisfies the server's policy $\pi_S$ (revealed by $\ct_S$).
  If the client satisfies the server's policy, it
  decrypts $\ct_S$ using its identity key $\sk_C$ to obtain the server's identity
  $\id_S$. If $\id_S$ satisfies the client's policy $\pi_C$ and $\sigma_S$
  is a valid signature on $(\sid, \ct_S, g^x, g^y)$ under the
  public key identified by $\id_S$, the client replies to the server with
  the session id $\sid$ and an
  encryption $c'$ of $(\id_C, \sig_C(\sid, \id_C, g^x, g^y))$ under $\htk$.
  Otherwise, the client aborts.

  \item Upon receiving the client's response $(\sid, c')$, the server
  tries to decrypt $c'$ using $\htk$ and aborts if decryption fails.
  It parses  the decrypted value as $(\id_C, \sigma_C)$ and verifies that
  $\id_C$ satisfies its policy and that $\sigma_C$ is a
  valid signature on $(\sid, \id_C, g^x, g^y)$ under the public key identified by $\id_C$. If so, the handshake
  completes with $\atk$ as the shared session key and where
  the client believes it is talking to $\id_S$ and the server believes
  it is talking to $\id_C$.
  Otherwise, the server aborts.
\end{enumerate}

\begin{figure}
  \[ C(\id_C, \pi_C) \ \protoright{\sid,\ g^x} \ S(\id_S, \pi_S) \]
  \vspace{-1.7em}
  \[ \protoleft{\sid,\ g^y,\ \{ \overbrace{\peenc(\pi_S,\, \id_S)}^{\ct_S}, \, \sig_S(\sid, \ct_S,\, g^x,\, g^y) \}_\htk} \]
  \[ \protoright{\sid, \ \set{ \id_C,\, \sig_C(\sid, \id_C,\, g^x,\, g^y)}_\htk} \]
\caption{Message flow between the client $C$ (with certificate $\id_C$ and policy $\pi_C$)
and the server $S$ (with certificate $\id_S$ and policy $\pi_S$) for the
private mutual authentication protocol. Both the client and the server possess a secret
signing key. The associated verification keys are bound to their identities via
the certificates $\id_C$ and $\id_S$,
respectively. For a message $m$, $\sig_C(m)$ and $\sig_S(m)$ denote
the client's and server's signature on $m$, respectively.
Both the client and server know the master public key for the prefix-based
encryption scheme, and the client possesses a secret key $\sk_C$ for the prefix-based
encryption scheme for the identity associated with its certificate $\id_C$.}
\label{fig:auth-proto-2}
\end{figure}

\subsection{Protocol Analysis}\label{sec:private-mutual-auth-analysis}
In this section, we highlight some properties of our private mutual
authentication protocol.

\para{Comparison with SIGMA-I.} Our authentication
protocol is very similar to the SIGMA-I key-exchange
protocol~\cite[\S 5.2]{Kra03}, but with the following key difference:
the server's certificate, $\id_S$, is sent encrypted under a prefix encryption
scheme. Moreover, instead of deriving separate MAC and encryption keys from the shared DH key,
we combine the two primitives by using an authenticated encryption scheme.
Since we have only added an additional layer of
prefix encryption to the certificates, each party's signature verification
key is still bound to its identity as before. Thus, the proof that the \mbox{SIGMA-I}
protocol is a secure key-exchange protocol~\cite[\S5.3]{CK02} (with perfect forward secrecy)
translates to our setting.

\para{Identity privacy.}
The identity of the server is sent encrypted under its prefix policy, so by
security of the prefix encryption scheme, it is only revealed to clients that satisfy the policy.
Conversely, an honest client only reveals its identity after it has verified that the
server's identity satisfies its policy.
  We formally define our notion of mutual privacy in Appendix~\ref{app:key-exchange-privacy}.
  Then, in Appendix~\ref{app:mutual-auth-analysis},
  we show that the protocol in Figure~\ref{fig:auth-proto-2} achieves our notion
  of mutual privacy.
In contrast,
the SIGMA-I protocol  does not  provide such a guarantee as the identity of
the server is revealed to active adversaries.

\para{Policy privacy.} The protocol in
Figure~\ref{fig:auth-proto-2} ensures privacy for the client's policy,
but not for the server's policy. This issue
stems from the fact that the underlying prefix encryption scheme is not ``policy-hiding,''
meaning an encryption of a message $m$ to a prefix $\pi$ is only guaranteed to hide $m$ and not $\pi$.
Since a passive adversary can observe whether a connection is successfully
established between a client and a server, learning the server's policy allows the
adversary to then infer information about the client's identity. We describe a solution
based on anonymous IBE in Section~\ref{sec:extensions}.

\para{Caching the encrypted certificate chains.} By introducing
prefix encryption, the protocol in Figure~\ref{fig:auth-proto-2} is more
computationally expensive for both the server and the client compared to the
SIGMA-I protocol. However, for key-exchange security,
it is not essential that the server re-encrypt its certificate chain using the
prefix encryption scheme on each handshake. The server can instead reuse the
{\em same} $\ct_S$ across multiple handshakes. Thus, the cost of the IBE
encryptions needed to construct $\ct_S$ can be amortized across multiple
handshakes. The client still needs to perform a
single IBE decryption per handshake.

\para{Unlinkability.} One limitation of our private mutual authentication
protocol is that an active network attacker can fingerprint a server.
Certainly, if the server reuses the same encrypted certificate across
multiple handshakes, an active attacker that does not satisfy the server's
authorization policy can still use the encrypted certificate to identify
the server across sessions. Even without the caching optimization, an
active attacker can still track the server via its signature. This is because
in many signature schemes such as ECDSA, the public verification key can
effectively be recovered from a message-signature pair~\cite{Bro09}.
Thus, as long as the server's signing key (and
correspondingly, verification key) is long-lived, the server's public
verification key can be used to identify and track the server across
sessions. In fact, when deploying mutual authentication protocols, it is
good practice to include the verification key in the signature in order to
defend against Duplicate Signature Key Selection (DSKS)
attacks~\cite{MS04,KM13}.

Depending on the use case, this linkability issue can be problematic. A simple
fix is to instead apply the prefix encryption to the entirety of the server's
response message in the SIGMA-I protocol. Then, both the server's certificate
chain as well as its signature on the ephemeral shares are encrypted under the
prefix encryption scheme.
As a result, the server's response is unique per session and
can only be decrypted by clients whose identities
satisfy the server's policy. Security of the resulting key-exchange protocol
still reduces naturally to the SIGMA-I protocol and privacy still reduces to
the security of the prefix encryption scheme. The main disadvantage of this
protocol is that the server can no longer cache and reuse the prefix-encrypted
ciphertext. Thus, the server must perform a prefix encryption during each handshake,
increasing the computational load on the server. We give the message flow for
this modified protocol in Figure~\ref{fig:auth-proto-1}. Using ProVerif~\cite{BSC16},
we are able to formally verify
(in a Dolev-Yao model of protocol logic~\cite{DY81})
that this modified protocol satisfies the following notion of unlinkability: no client can
distinguish between two servers that have different signing and verification keys
but use the same policy. For the other properties (key-exchange security and mutual privacy),
our analysis (in the standard computational model)
in Appendix~\ref{app:mutual-auth-analysis} applies.

\begin{figure}
  \[ C(\id_C, \pi_C) \  \protoright{\sid,\  g^x} \  S(\id_S, \pi_S) \]
  \[ \protoleft{\sid, \ g^y,\ \peenc \left(\pi_S, \set{\id_S, \, \sig_S(\sid, \id_S,\, g^x,\, g^y)}_\htk \right)} \]
  \[ \protoright{\sid,\  \set{ \id_C,\, \sig_C(\sid, \id_C,\, g^x,\, g^y)}_\htk} \]
\caption{Message flow between the client $C$ and the server $S$ for the variant of the
         private mutual authentication protocol that provides unlinkability. The main
         difference between this protocol and the private mutual authentication
         protocol in Figure~\ref{fig:auth-proto-2} is that here, the
         server encrypts its entire response using prefix-based encryption. In contrast, in the
         protocol in Figure~\ref{fig:auth-proto-2}, the server only encrypts its
         certificate chain using the prefix-based encryption scheme.}
\label{fig:auth-proto-1}
\end{figure}

\para{Security theorem.} We state the security theorem
for our private mutual authentication protocol here, but defer the formal
proof to
Appendix~\ref{app:mutual-auth-analysis}.

\begin{theorem}[Private Mutual Authentication]
  \label{thm:private-mutual-auth}
  The protocol in Figure~\ref{fig:auth-proto-2} is a secure and private
  key-exchange protocol in the Canetti-Krawczyk key-exchange model
  assuming the Hash Diffie-Hellman assumption in $\bbG$ and the security of all
  underlying cryptographic primitives.
\end{theorem}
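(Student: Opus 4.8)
The plan is to prove the two halves of the theorem statement—secure key-exchange and mutual privacy—separately, leveraging in each case a reduction to the SIGMA-I protocol so that almost all of the heavy lifting is inherited rather than redone.

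As noted in the comparison with SIGMA-I, our protocol differs from SIGMA-I in exactly two respects: the server's certificate $\id_S$ is transmitted as a prefix-encryption ciphertext $\ct_S$, and a single authenticated-encryption key $\htk$ replaces the separate MAC and encryption keys. The proof must therefore argue that neither modification compromises key-exchange security, and that the first modification additionally delivers the privacy guarantee. In each case I would reduce to the corresponding property of SIGMA-I (or of an underlying primitive) so that the bulk of the analysis is inherited rather than redone.

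For key-exchange security in the Canetti-Krawczyk model, I would proceed by a sequence of hybrid games. First, for the (unexposed) test session I would invoke the Hash Diffie-Hellman assumption to replace the derived key material $\kdf(g^x, g^y, g^{xy})$ with uniformly random $(\htk, \atk)$; this is also where perfect forward secrecy originates, since the ephemeral exponents are erased after the handshake. Next I would argue that authenticated encryption under a random $\htk$ provides exactly the confidentiality and integrity that SIGMA-I obtains from its encrypt-then-MAC construction, so the substitution of primitives is transparent to the adversary. The crucial authentication argument—that matching sessions involve the intended partners—carries over from the SIGMA-I proof unchanged, because each party still signs $(g^x, g^y)$ together with $\sid$ and the verification key remains bound to the signer's identity; replacing $\id_S$ by $\ct_S$ inside the signed message is immaterial, since the client recovers $\id_S$ by decrypting $\ct_S$ before checking $\sigma_S$ under the key that $\id_S$ certifies. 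Invoking the established security of SIGMA-I then closes this part.

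For mutual privacy, I would reduce to the $\indcpa$ security of the prefix encryption scheme. The server's identity appears in the transcript only inside $\ct_S = \peenc(\pi_S, \id_S)$ and inside the handshake ciphertext under $\htk$, which is already pseudorandom from the previous step; hence an adversary controlling no identity that satisfies $\pi_S$ cannot distinguish $\ct_S$ from an encryption of a dummy identity, and a hybrid that swaps $\id_S$ for a dummy is justified by a single prefix-encryption challenge. Client privacy is structurally simpler: an honest client transmits $\id_C$ only in the third flow, and only after it has decrypted $\ct_S$, recovered a valid $\id_S$, and confirmed that $\id_S$ satisfies $\pi_C$. Thus against any peer that cannot exhibit a certificate-and-signature pair for an identity satisfying $\pi_C$—in particular against an active impersonator—the client simply never reveals its identity, and no reduction beyond signature unforgeability is required.

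The main obstacle I anticipate is the privacy reduction against an \emph{active} adversary inside the CK oracle model: the reduction must embed the single prefix-encryption challenge ciphertext into one session's transcript while still faithfully answering the adversary's $\statereveal$, $\keyreveal$, and $\corrupt$ queries on every other session, and it must do so without ever being forced to decrypt the challenge—doing so would require precisely the secret key the privacy game forbids the adversary from holding. Handling this cleanly requires carefully delineating which sessions and identities the adversary may expose without trivializing the privacy challenge, and guessing in advance which session carries the embedded ciphertext. I expect a standard but delicate hybrid-and-guessing argument here, incurring a polynomial security loss in the number of sessions.
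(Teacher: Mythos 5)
Your decomposition (inherit key-exchange security from SIGMA-I; prove privacy separately via a hybrid that swaps the server's encrypted identity) matches the paper's structure, and your treatment of the key-exchange half is essentially the paper's argument. However, there is a genuine gap in the privacy half: you reduce server privacy to the \emph{CPA} security of the prefix encryption scheme, whereas the paper's reduction requires \emph{$\indidcca$} security of the underlying IBE scheme, and this is not a cosmetic difference. In the privacy game the adversary is active: it can deliver arbitrary (possibly mauled or freshly crafted) ciphertexts $\ct_S$ to honest clients, and each such client's observable behavior --- whether it aborts, and what it subsequently signs and sends --- depends on the decryption of that ciphertext. The reduction cannot hold the prefix-encryption secret keys for honest clients that satisfy the challenge policy $\bar\pi$: in the Lewko--Waters-style construction, the key for any identity with prefix $\bar\pi$ literally contains the IBE key for $\bar\pi$ itself, which the IBE game forbids the reduction from extracting. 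The paper resolves this by having the reduction answer every client-response query through the IBE \emph{decryption oracle} (except on the challenge ciphertext itself, where admissibility of the adversary guarantees the client's accept/abort decision is independent of which challenge identity was encrypted). A single-challenge CPA reduction has no way to answer these queries and gets stuck; the obstacle you flag at the end (not decrypting the challenge) is real but is not the hard part --- the hard part is decrypting everything \emph{else}.

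Two smaller points. First, the paper does not guess the session carrying the challenge; it runs a hybrid over the $q$ sessions in which the test party acts as responder and embeds the IBE challenge in the $i$-th one, so the loss is the standard $q$-factor rather than a guessing loss. Second, your claim that client privacy needs ``no reduction beyond signature unforgeability'' is too quick: even against an authorized-looking honest peer, the client does transmit $\id_C$ encrypted under $\htk$, and arguing that an active network attacker (who may issue $\statereveal$ and $\corrupt$ queries on other sessions) learns nothing requires the SIGMA-I key-indistinguishability argument for $\htk$ plus semantic security of the authenticated encryption --- this is exactly the paper's first hybrid step, which you gesture at in the key-exchange half but then drop in the privacy half.
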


\section{Private Service Discovery Protocol}
\label{sec:discovery-protocol}

In this section, we describe our private service discovery protocol. The primary
goal is to make a service discoverable only by parties that
satisfy its authorization policy. Additionally, once a client has discovered a service,
it should be able to authenticate to the server using zero round-trips (0-RTT), i.e.,
include application data on the first flow of the handshake.
0-RTT protocols are invaluable for IoT since devices are often constrained
in both computation and bandwidth.

The key idea in our design is to have the service
include a fresh DH share and a signature in its advertisement.
The DH share allows 0-RTT client authentication, and the
signature provides authenticity for the service advertisement. Next, the service
encrypts its advertisement under its policy $\pi_S$ before broadcasting
to ensure that only authorized clients are able to discover it.
A similar mechanism for
(non-private) 0-RTT authentication is present in OPTLS and the
TLS~1.3 specification~\cite{Res15,KW15}, although OPTLS
only provides server authentication.

\para{Protocol specification.}
Our protocol works over a cyclic group $\bbG$ of prime order $p$ with
generator $g$ where the Strong-DH and Hash-DH assumptions~\cite{ABR01} hold.
The private discovery protocol can be separated into a broadcast protocol and
a 0-RTT mutual authentication protocol. Each broadcast is associated with a
unique broadcast identifier $\bid$ and each session with a unique session
identifier $\sid$. The protocol execution environment is the same as that
described in
Section~\ref{sec:private-mutual-auth}. The basic message flow for the
private discovery protocol is illustrated
in Figure~\ref{fig:discovery-proto}.

\para{Service broadcast message.} To setup a new broadcast
with broadcast id $\bid$, the server $S$ chooses a fresh DH exponent
$s \getsr \Z_p$, and encrypts $(\id_S, g^s, \sig_S(\bid, \id_S, g^s))$ using the
prefix encryption scheme under its authorization policy $\pi_S$ to obtain
a broadcast ciphertext $\ct_S$. The server broadcasts $(\bid, \ct_S)$.

\para{0-RTT mutual authentication.} Upon receiving a broadcast $(\bid, \ct_S)$,
a client performs the following steps to establish a
session $\sid$ with the server:
\begin{enumerate}
  \item The client $C$ checks that its identity $\id_C$ satisfies the
  server's authorization policy $\pi_S$ (included with $\ct_S$). If so, it
  decrypts $\ct_S$ using its prefix encryption secret key and parses the decrypted
  value as  $(\id_S, g^s, \sigma_S)$. It verifies that $\id_S$
  satisfies its policy $\pi_C$ and that $\sigma_S$ is a valid
  signature on $(\bid, \id_S, g^s)$ under the public key identified by $\id_S$. If any step
  fails, the client aborts.

  \item To initiate a session with id $\sid$, the client
  first chooses an ephemeral DH exponent $x \getsr \Z_p$.
  It derives authenticated encryption keys
  $(\htk, \htk', \eadk) = \kdf(g^s, g^x, g^{sx})$, where $\htk$ and $\htk'$
  are used to encrypt handshake messages, and $\eadk$ is used to encrypt
  any early application data the client
  wants to include with its connection request. The client encrypts
  the tuple $(\id_S, \id_C, \sig_C(\bid, \sid, \id_S, \id_C, g^s, g^x))$
  under $\htk$ to obtain a ciphertext $c_1$ and any early application
  data under $\eadk$ to obtain a ciphertext $c_2$. It sends
  $(\bid, \sid, g^x, c_1, c_2)$ to the server.

  \item When the server receives a message from a client of the form
  $(\bid, \sid, g^x, c_1, c_2)$, it first derives the encryption
  keys $(\htk, \htk', \eadk) = \kdf(g^s, g^x, g^{sx})$, where $s$
  is the DH exponent it chose for broadcast $\bid$.
  Then, it tries to decrypt $c_1$ with $\htk$ and $c_2$ with
  $\eadk$. If either decryption fails, the server aborts the protocol. Otherwise, let
  $(\id_1, \id_2, \sigma)$ be the message obtained
  from decrypting $c_1$. The server verifies
  that $\id_1 = \id_S$ and that $\id_2$ satisfies its authorization policy $\pi_S$.
  Next, it checks that $\sigma$ is a valid signature on
  $(\bid, \sid, \id_1, \id_2, g^s, g^x)$ under the public key identified by
  $\id_2$. If all these checks pass, the server chooses a new ephemeral
  DH exponent $y \getsr \Z_p$ and derives the session key
  $\atk = \kdf(g^s, g^x, g^{sx}, g^y, g^{xy})$.\footnote{In this step,
  the server samples a fresh ephemeral DH
  share $g^y$ that is used to derive the application-traffic
  key $\atk$. This is essential
  for ensuring perfect forward secrecy for all subsequent
  application-layer messages (encrypted under $\atk$).
  We discuss the perfect forward secrecy properties of this protocol
  in Section~\ref{sec:disc-proto-analysis}.}
  The server encrypts
  the tuple $(\bid, \sid, \id_1, \id_2, g^s, g^x, g^y)$ under $\htk'$ to
  obtain a ciphertext $c_1'$, and any application messages under
  $\atk$ to obtain a ciphertext $c_2'$. It replies to the client
  with $(\bid, \sid, g^y, c_1', c_2')$.

  \item When the client receives a response message
  $(\bid, \sid, g^y, c_1', c_2')$, it first decrypts $c_1'$ using $\htk'$
  and verifies that $c_1'$ decrypts to $(\bid, \sid, \id_S, \id_C, g^s, g^x, g^y)$,
  where $s \in \Z_p$ is the server's semi-static DH share associated with
  broadcast $\bid$, and
  $x \in \Z_p$ is the ephemeral DH share it used in session $\sid$ with
  broadcast $\bid$. If so,
  it derives $\atk = \kdf(g^s, g^x, g^{sx}, g^y, g^{xy})$ and uses $\atk$ to
  decrypt $c_2'$. The handshake then concludes with $\atk$ as the shared
  session key.
\end{enumerate}

\begin{figure}
Server's Broadcast:
  \vspace{-0.5em}
  \[ \bid, \peenc(\pi_S, (\id_S, g^s, \sig_S(\bid, \id_S, g^s))) \]
0-RTT Mutual Authentication:
  \[ C(\id_C, \pi_C) \protoright{\bid, \sid, g^x, \set{ \id_S, \id_C, \sig_C(\bid, \sid, \id_S, \id_C, g^s, g^x)}_\htk} S(\id_S, \pi_S) \]
  \vspace{-2em}
  \[ \protoleft{\bid, \sid, g^y, \set{ (\bid, \sid, \id_S, \id_C, g^s, g^x, g^y) }_{\htk'}} \]
\caption{Basic message flow between the client $C$ (with certificate $\id_C$ and policy $\pi_C)$
and the server $S$ (with certificate $\id_S$ and policy $\pi_S$) for the private
discovery protocol. As in the private mutual authentication protocol
(Figure~\ref{fig:auth-proto-2}), the client and server each possess a secret signing
key, and the associated verification keys are bound to their identities via
the certificates $\id_C$ and $\id_S$, respectively. Both the client and server know
the master public key for the prefix-based encryption scheme, and the client possesses
a secret key $\sk_C$ for the prefix-based encryption scheme for the identity associated
with its certificate $\id_C$. In this protocol, the client can also include early application data
in the first flow of the 0-RTT mutual authentication protocol under
a key derived from the client's ephemeral DH share $g^x$ and the server's
semi-static DH share $g^s$.}
\label{fig:discovery-proto}
\end{figure}

\subsection{Protocol Analysis}
\label{sec:disc-proto-analysis}
We now describe some of the properties of our private service discovery
protocol in Figure~\ref{fig:discovery-proto}.

\para{0-RTT security.} The security analysis of the 0-RTT mutual
authentication protocol in Figure~\ref{fig:discovery-proto} is similar
to that of the OPTLS protocol in TLS~1.3~\cite{KW15} and relies on the
Strong-DH and Hash-DH assumptions~\cite{ABR01} in the random oracle
model~\cite{BR93a}. Note that in contrast to the
OPTLS protocol which only provides server authentication, our protocol
provides {\em mutual authentication}.

\para{Replay attacks.} One limitation of the 0-RTT mode is that the
early-application data is vulnerable to replay attacks. A typical
replay-prevention technique (used by QUIC~\cite{FG14,LJBN15})
is to have the server maintain a list of client
nonces in the 0-RTT messages and reject duplicates for the lifetime of the
service advertisement.
While this mechanism is potentially suitable if the service
advertisements are short-lived and the service does not have to maintain large
amounts of state, the general problem remains open.

\para{Authenticity of broadcasts.} Because the service broadcasts are signed,
a client is assured of the authenticity of a broadcast before
establishing a session with a service. This ensures that the client will not
inadvertently send its credentials to an impostor service. However, an
adversary that intercepts a service broadcast and recovers the
associated semi-static DH exponent can replay the broadcast for an honest
client. If the client then initiates a session using the DH share from the
replayed advertisement, the adversary compromises the client's privacy. To
protect against this kind of replay attack, the server should include an
expiration time in its broadcasts, and more importantly, sign this expiration.
Then, by having short-lived broadcasts, the adversary's
window of opportunity in mounting a replay attack is greatly reduced. Note though
that recovering the semi-static DH exponent $s$ from $g^s$ amounts to solving the
discrete log problem in the underlying group, which is conjectured to be a difficult
problem. Thus, it is unlikely that authenticity
will be compromised even for long-lived broadcasts. However, perfect
forward secrecy (discussed next) is compromised for all early-application
data for the duration of the broadcast, and so, it is still preferable to have
short-lived broadcasts.

\para{Forward secrecy.} Since the server's semi-static DH share persists
across sessions, perfect forward secrecy (PFS) is lost for early-application
data and handshake messages sent during the lifetime of each
advertisement. Specifically, an adversary that
compromises a server's semi-static DH share
for a particular broadcast is able to break the security on all
early-application data and handshake messages (which contain the client's
identity) in any key-exchange session that occurred during the lifetime of the
particular broadcast. To mitigate this risk in practical deployments, it is
important to periodically refresh the DH-share in the server's broadcast
(e.g., once every hour). The refresh interval corresponds to the window where
forward secrecy may be compromised.

While PFS is not achievable for early-application and handshake messages for the
lifetime of a service's broadcast, PFS is ensured for
all application-layer messages (as long as the server's semi-static secret
was not already compromised at the time of session negotiation). In particular, after
processing a session initiation request, the server responds with a fresh
ephemeral DH share that is used to derive the session key for all subsequent
messages.
In Appendix~\ref{app:discovery-protocol-security}, we
show that the security of the session is preserved even if
the server's semi-static secret is compromised (after the completion of the handshake) but the
ephemeral secret is uncompromised.
This method of combining a semi-static key with an ephemeral key is also used
in the OPTLS~\cite{KW15} protocol.

\para{Identity privacy.} As was the case in our private mutual
authentication protocol from Section~\ref{sec:private-mutual-auth},
privacy for the server's identity is ensured by the prefix-based encryption
scheme. Privacy for the client's identity is ensured since all handshake messages
are encrypted
under handshake traffic keys $\htk$ and $\htk'$. We formally state and prove mutual
privacy for the protocol in Appendix~\ref{app:discovery-protocol-privacy}.

\para{Security theorem.} We conclude by stating the security theorem for our
private service discovery protocol. We give the formal proof
in Appendix~\ref{app:discovery-protocol-security}.
\begin{theorem}[Private Service Discovery]
  \label{thm:private-discovery}
  The protocol in Figure~\ref{fig:discovery-proto} is a secure and private
  service discovery protocol in a Canetti-Krawczyk-based model of
  key-exchange in the random oracle model, assuming the Hash Diffie-Hellman
  and Strong Diffie-Hellman assumptions in $\bbG$,
  and the security of the underlying
  cryptographic primitives.
\end{theorem}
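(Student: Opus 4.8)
The plan is to establish the two assertions of the theorem---key-exchange security in the Canetti-Krawczyk model and mutual privacy---through separate sequences of game hops, exploiting the fact (noted in the protocol analysis) that the handshake mirrors the OPTLS construction but augments it with prefix encryption and mutual authentication. I would first dispense with \emph{correctness}: in any two matching sessions (a client session and the server session sharing the same $\bid$ and $\sid$ and agreeing on $g^x$ and $g^y$), both parties reconstruct the same Diffie-Hellman values $g^{sx}$ and $g^{xy}$, so $\kdf$ produces identical $(\htk,\htk',\eadk)$ and identical $\atk$; this is immediate by inspection of the protocol steps.

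The substantive part is \emph{key indistinguishability}. I would model $\kdf$ as a random oracle and proceed by a case analysis on the test session, according to whether it resides at the client or the server and which long-term and ephemeral secrets the adversary has compromised. In each case I would embed a Diffie-Hellman challenge into the relevant honest shares of the test session and argue, via a hybrid $\hyb_i$ in which $\atk$ is replaced by a freshly random value, that the adversary cannot detect the switch. Two applications of Hash-DH drive this: one randomizes the contribution of $g^{sx}$ (the semi-static/ephemeral product), and a second randomizes the contribution of $g^{xy}$ (the ephemeral/ephemeral product). The forward-secrecy clause follows from the second: since the server draws a \emph{fresh} $y$ per session and $\atk=\kdf(g^s,g^x,g^{sx},g^y,g^{xy})$ depends on $g^{xy}$, compromising the semi-static secret $s$ after the handshake leaves $g^{xy}$---and hence $\atk$---protected, provided the ephemeral secrets $x,y$ are uncompromised. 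Authenticity of the handshake and of the service broadcast reduces to the unforgeability of the signature scheme: the signed transcripts $(\bid,\sid,\id_S,\id_C,g^s,g^x)$ and $(\bid,\id_S,g^s)$ bind each party's verification key to the session, ruling out impersonation and man-in-the-middle against uncorrupted principals.

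To make the Hash-DH reductions go through in the random-oracle model I would invoke Strong-DH: since the adversary could in principle query $\kdf$ at the true product corresponding to the embedded challenge, the reduction uses the decisional (gap) oracle supplied by Strong-DH to recognize such a query, program the oracle consistently across all sessions, and---should it ever occur---output the product to break the assumption. For \emph{mutual privacy}, I would again hop through games, isolating the two identities. Server privacy reduces directly to the security of the prefix encryption scheme: against any adversary whose identity fails $\pi_S$, the broadcast $\peenc(\pi_S,(\id_S,g^s,\sig_S(\bid,\id_S,g^s)))$ can be replaced by an encryption of a dummy message, hiding $\id_S$, $g^s$, and the signature. Client privacy reduces to Hash-DH together with authenticated-encryption security: the identity $\id_C$ travels inside $\set{\id_S,\id_C,\sig_C(\bid,\sid,\id_S,\id_C,g^s,g^x)}_{\htk}$, so I would first replace $\htk$ by a uniformly random key (Hash-DH, exactly as above) and then replace the ciphertext by an encryption of a dummy (AE security), after which $\id_C$ is information-theoretically absent from the adversary's view.

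I expect the main obstacle to be the bookkeeping forced by the \emph{reuse} of the semi-static share $g^s$ across many concurrent broadcasts and sessions. Unlike a purely ephemeral exchange, a single challenge embedded in $g^s$ must remain consistent with every session the adversary spawns (and possibly replays) against that broadcast, so the gap oracle has to answer all $\kdf$ queries involving $g^{sx}$ for adversarially chosen $x$ while never leaking $s$; in addition, the replay- and expiration-handling described in the analysis must be reflected in the simulation so that it neither aborts spuriously nor discards the embedded challenge. Threading the Strong-DH oracle through this shared-key simulation, rather than the individual Hash-DH steps themselves, is where I anticipate the delicate work.
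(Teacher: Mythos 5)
Your proposal follows essentially the same route as the paper's proof: a case split on which of the semi-static and ephemeral secrets are compromised, a Strong-DH reduction that uses the gap (DDH) oracle to detect and consistently program random-oracle queries at $(g^{\bar s}, g^{\bar x}, g^{\bar s\bar x})$, Hash-DH for the ephemeral-ephemeral contribution that yields forward secrecy, signature unforgeability plus authenticated-encryption integrity for session matching, IBE/prefix-encryption security for the server's identity, and handshake-key indistinguishability for the client's. The paper additionally layers the KDF as $\prg(H_1(\cdot))$ followed by $\Extract(\cdot, H_2(\cdot))$ (invoking PRG/extractor security as separate hops) and restricts client-identity privacy to the one-pass (0-RTT) sense, but these are refinements of the same argument you outline, and you correctly identify the delicate point—threading the gap oracle through all sessions sharing the semi-static share.
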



\section{Protocol Evaluation and Deployment}\label{sec:experiments}

In this section, we describe the implementation and deployment of our private
mutual authentication and service discovery protocols in the Vanadium
framework~\cite{Vanadium}.
We benchmark our protocols
on a wide range of architectures: an Intel Edison (0.5GHz Intel Atom),
a Raspberry Pi~2 (0.9GHz ARM Cortex-A7), a Nexus 5X smartphone
(1.8GHz 64-bit ARM-v8A), a Macbook Pro
(3.1GHz Intel Core i7), and a desktop (3.2GHz Intel Xeon).

\para{Vanadium.} We implement our private mutual authentication and service
discovery protocols as part of the Vanadium
framework for developing secure, distributed applications that can run on a multitude of devices
ranging from small computers such as Raspberry 
Pis and Intel Edisons, to computers running Linux or Mac OS, to cloud services like 
Google Compute Engine. At the core of the framework is a remote procedure call (RPC) 
system that enables applications to expose services over the network. All RPCs are
mutually authenticated, encrypted and bidirectional. The framework also offers 
an interface definition language (IDL) for defining services, a federated naming system for 
addressing them, and a discovery API for advertising and scanning for services over a
variety of protocols, including  BLE and mDNS.

The Vanadium identity model is based on a distributed PKI. 
All principals in Vanadium possess an ECDSA \mbox{P-256} signing and verification key-pair. 
Principals have a set of human-readable names bound to them via certificate chains,
called \emph{blessings}. Blessings can be extended locally and delegated from one
principal to another. All RPCs are encrypted and mutually authenticated based on the
blessings bound to each end. The access control model is based on blessing names,
and is specified in~\cite{ABPSST15}.

We implement our protocols to enhance the privacy of the
Vanadium RPC and discovery framework. Our entire implementation is in Go\footnote{\url{https://golang.org}}
(with wrappers for interfacing with third-party C libraries).
Go is also the primary
implementation language of Vanadium.

\subsection{Identity-Based Encryption}

The key primitive we require for our protocols is prefix-based encryption,
which we can construct from any IBE scheme (Section~\ref{sec:crypto-primitives}).
For our experiments, we
implemented the Boneh-Boyen ($\ms{BB}_2$) scheme~\cite[\S 5]{BB04a} over the
256-bit Barreto-Naehrig ({\tt bn256})~\cite{NNS10} pairings curve.
We chose the $\ms{BB}_2$ IBE scheme for its efficiency: it
only requires a single pairing evaluation during decryption. Other IBE
schemes either require pairing operations in both encryption and
decryption~\cite{BF01} or require multiple pairing evaluations
during decryption ($\ms{BB}_1$)~\cite[\S4.1]{BB04a}.
We apply the Fujisaki-Okamoto transformation\footnote{The
Fujisaki-Okamoto transformation~\cite{FO99} provides a generic method of
combining a semantically-secure public-key encryption scheme with a
symmetric encryption scheme to obtain a CCA-secure hybrid encryption
scheme in the random oracle model.}~\cite{FO99} to obtain
CCA-security. For the underlying symmetric encryption scheme in the Fujisaki-Okamoto
transformation, we use the authenticated encryption scheme from NaCl~\cite{Ber09,BLS12}.
All of our cryptographic primitives are chosen to
provide at least 128 bits of security.

\para{Benchmarks.} We measured the performance overhead of the
different IBE operations, as well as the pairing operation over the elliptic curve.
Our results are summarized in Table~\ref{tab:ibe-bench}. 
We use the off-the-shelf {\tt dclxvi} implementation~\cite{NNS10} 
for the {\tt bn256} pairing curve on all architectures other than the Nexus 5X. On the Nexus 5X, we use a Go
implementation.\footnote{We were not able to use
the {\tt dclxvi} implementation on the Android phone. A significant speedup should be
possible with an optimized implementation.}
On the desktop, we were able to turn on assembly optimizations which
led to a significant speedup in the benchmarks.

\para{Deployment.} 
The Vanadium ecosystem includes \emph{identity providers}, which are principals that sign the root
certificate of a blessing (certificate chain). While any principal can become an identity provider, 
Vanadium runs a prototype
identity provider for granting blessings to users based on their Google account. These blessings begin with the
name {\tt dev.v.io}. We augment this service so that it also acts as an IBE root and issues
prefix encryption secret keys. More specifically, we run an RPC service for exchanging any 
Vanadium blessing prefixed with {\tt dev.v.io} (e.g., {\tt dev.v.io/u/Alice/Devices/TV}) 
with a  prefix encryption secret key for the blessing name.

\begin{table}
\begin{center}
\begin{tabular}{lrrrrr}
\toprule
 & \hspace{0.5em} Intel Edison & \hspace{0.5em} Raspberry Pi 2
 & \hspace{0.5em} Nexus 5X$^\star$ & \hspace{0.5em} Laptop
 & \hspace{0.5em} Desktop \\ \midrule
Pairing & 254.5 ms & 101.2 ms & 219.2 ms &  5.0 ms & 1.0 ms \\ \midrule
Encrypt & 406.7 ms & 157.2 ms & 161.3 ms &  8.2 ms & 3.5 ms \\
Decrypt & 623.0 ms & 235.2 ms & 235.7 ms & 11.6 ms & 3.7 ms \\
Extract & 107.5 ms &  41.6 ms &  47.2 ms &  2.1 ms & 0.5 ms \\
\bottomrule
\end{tabular}
\end{center}

\footnotesize{
  \begin{adjustwidth}{5cm}{}
    $\star$ Unoptimized Go implementation of {\tt bn256}.
  \end{adjustwidth}
}
\caption{IBE micro-benchmarks}\label{tab:ibe-bench}
\end{table}

\subsection{Private Mutual Authentication}
We implemented the private mutual authentication protocol from 
Section~\ref{sec:private-mutual-auth} within the Vanadium RPC system
as a means to offer a ``private mode'' for 
Vanadium services.
We implemented the protocol from 
Figure~\ref{fig:auth-proto-2} that allows caching of the encrypted server certificate
chain. The implementation uses a 
prefix encryption primitive implemented on top of our IBE library.

\para{Benchmarking.} We measure the end-to-end connection setup time for our protocol
on various platforms. To eliminate network latency, we instantiate a server and client in the
same process. Since the encrypted server certificate chain can be reused across multiple
handshakes, we precompute it before executing the protocol.\footnote{If we use the
variant of our private mutual authentication protocol that provides unlinkability
(Figure~\ref{fig:auth-proto-1}), the server needs to perform a prefix-based
encryption on every handshake. In our case, prefix-based encryption is just
IBE encryption. Based on the micro-benchmarks from Table~\ref{tab:ibe-bench}, the
extra (per-policy) overhead is just a few hundred milliseconds on the small devices and under
10 milliseconds on the laptops and desktops.} Both the client and the server use
a prefix-based policy of length three. Note that the encryption and decryption times in our prefix encryption 
scheme are not affected by the length of the policy. 

\para{Results.} We compare the performance of our protocol to the traditional
SIGMA-I protocol in Table~\ref{tab:mutual-auth-bench}. 
The end-to-end latency on the desktop is only
9.5~ms, thanks to an assembly-optimized IBE implementation. The latency on smaller
devices is typically around a third of a second,
which is quite suitable for user-interactive applications like
AirDrop. Even on the Intel Edisons (a processor marketed specifically for IoT),
the handshake completes in just over 1.5 s, which is still reasonable for
many applications. Moreover, with an optimized implementations of the IBE library (e.g.,
taking advantage of assembly optimizations like on the desktop), these latencies
should be significantly reduced.

The memory and storage requirements of our protocol are very modest and
well-suited for the computational constraints of IoT and mobile devices. Specifically, the
pairing library is just 40 KB of code on the ARM processors (and 64 KB on x86).
The public parameters for the IBE scheme are 512 bytes, and each IBE secret key is just
$160$ bytes. For comparison, a typical certificate chain (of length~3) is about
$500$ bytes in Vanadium. Also, our protocols are not memory-bound,
and in particular, do not require much additional memory on top of the existing
non-private SIGMA-I key-exchange protocol supported by Vanadium.

\para{Deployment.} We implemented the private mutual authentication protocol from
Section~\ref{sec:private-mutual-auth} within the Vanadium RPC system.
Prior to our work, the RPC system
only supported the \mbox{SIGMA-I}~\cite{Kra03,CK02}
protocol for mutual authentication, which does not provide
privacy for services. We implemented our
private mutual authentication protocol to offer a ``private mode'' for 
Vanadium \ \  services.

\begin{table}
\begin{center}
\begin{tabular}{lrrrrr}
\toprule
 & \hspace{0.5em} Intel Edison & \hspace{0.5em} Raspberry Pi 2 & \hspace{0.5em} Nexus 5X
 & \hspace{0.5em} Laptop & \hspace{0.5em} Desktop \\ 
\midrule
SIGMA-I              &  252.1 ms &  88.0 ms &  91.6 ms &  6.3 ms & 5.3 ms \\
Private Mutual Auth. & 1694.3 ms & 326.1 ms & 360.4 ms & 19.6 ms & 9.5 ms \\ \midrule
Slowdown & 6.7x & 3.7x & 3.9x & 3.1x & 1.8x \\
\bottomrule
\end{tabular}
\end{center}
\caption{Private mutual authentication benchmarks.}\label{tab:mutual-auth-bench}
\end{table}

\subsection{Private Discovery}
We also integrated the private discovery protocol from Section~\ref{sec:discovery-protocol}
into Vanadium.

\para{Benchmarks.} 
We benchmark the cryptographic
overhead of processing service advertisements, and measure the size of the service advertisements.
Processing service advertisements requires a single IBE decryption
and one ECDSA signature verification. This cost can be estimated analytically from
Table~\ref{tab:ibe-bench} and standard ECDSA benchmarks. For instance, 
on the Nexus 5X smartphone, which is a typical client for processing service advertisements, the cost 
is approximately $\text{236 ms (IBE decryption)} + \text{11 ms (ECDSA signature verification)} = \text{247 ms}$.

The advertisement size can also be estimated analytically.
From Figure~\ref{fig:discovery-proto}, our service advertisement has the following form:
 \vspace{-0.4em} \[\vspace{-0.4em} \bid, \peenc(\pi_S, (\id_S, g^s, \sig_S(\bid, \id_S, g^s))) \]
Our implementation of prefix encryption ($\peenc$) has a ciphertext
overhead\footnote{This 
includes the size of the $\mathsf{BB}_2$ ciphertext and the
overhead of the Fujisaki-Okamoto transformation.
The {\tt dclxvi} library uses an optimized
floating-based representation~\cite{NNS10} of points on the elliptic curve, which is not very
space-efficient. By representing the points directly as one (using point
compression~\cite{Jiv14}) or two field elements, it is possible to significantly
reduce this overhead.} of $208$ bytes on top of
the plaintext. The Diffie-Hellman exponent ($g^s$) is $32$ bytes, the broadcast id ($\bid$) is $16$ bytes,
the ECDSA signature is $64$ bytes, and a certificate chain ($\id_S$) of length three is 
approximately $500$ bytes in size. The overall service advertisement is about $820$ bytes.

\para{Deployment.}
We deploy our service discovery protocol within the Vanadium discovery framework.
The protocol allows services to advertise themselves while restricting visibility 
to an authorized set of clients. The Vanadium discovery API allows services
to advertise over both mDNS and BLE. An mDNS {\tt TXT} record has a maximum size of
$1300$ bytes~\cite{CK13a,CK13b}, which suffices for service advertisements.
Some care must be taken to
ensure that the mDNS hostname and service label do no leak any private information
about the service. For instance, we set the service label to
{\tt \_vanadium\_private.\_tcp.local} and hide the actual service type and instance
name.

When the policy has multiple prefixes, our advertisements would
no longer fit in a single mDNS {\tt TXT} record. Furthermore, BLE advertisement payloads are
restricted to 31 bytes~\cite{BLE}, which is far too small to fit a full
service advertisement.

We can address this by running an auxiliary service (over GATT or peer-to-peer
WiFi as in Apple AirDrop) to host the encrypted advertisement for the main
service, and advertise the endpoint of this auxiliary service over BLE. The
privacy of the main service is unaffected since the endpoint of the
auxiliary service reveals nothing about it.

Encrypted service advertisements can only be interpreted by authorized
clients. Thus, these advertisements can also be served by members of the
network other than the service itself.  A volunteering node may choose to
cache the encrypted private service advertisements it sees on the network
(for example, any advertisement within radio range) and serve them to
clients.  Clients can fetch advertisements for multiple services in the
network in a single round-trip to the volunteer, without having to connect
to each advertising device.  While such volunteer caching nodes imply
additional infrastructure, the key point is that their presence improves
overall communication and power efficiency in the network without any loss
of privacy to the advertising services or their clients.

\subsection{Fixing AirDrop}
\label{sec:fixing-airdrop}
Recall from Section~\ref{airdrop-protocol}  that during an AirDrop file exchange in 
contacts-only mode, a hash of the sender's identity is
advertised over BLE and matched by potential
receivers against their contacts. If there is a match, 
the receiver starts a service that the sender can connect to using TLS (version 1.2).
In the TLS handshake, the sender and receiver exchange their certificates in the clear, which makes
them visible to eavesdroppers on the network. This privacy vulnerability
can be fixed using the private mutual authentication protocol
from Section~\ref{sec:private-mutual-auth}.
In particular, once the receiver matches the sender's hash against
one of its contacts, it uses the prefix encryption scheme to encrypt
its identity under the name of the
contact that matched the sender's hash.
This ensures that the receiver's identity is only shared with the intended
contact, even in the presence of active network attackers.\footnote{
The privacy guarantee for the sender is not as 
robust as information about its identity leaks via the hash advertised over BLE.}
To deploy this protocol, Apple would provision all AirDrop-enabled devices with
a secret key extracted for their identity in addition to their existing
iCloud certificates. Apple would be the IBE root in this case.


\section{Extensions}
\label{sec:extensions}

In this section, we describe several ways to extend our private mutual
authentication and service discovery protocols to provide additional
properties such as policy-hiding, delegation of decryption capabilities,
and support for more complex authorization policies.

\para{Policy-hiding authentication and discovery.} A limitation
of our private mutual authentication and service discovery protocols from
Sections~\ref{sec:private-mutual-auth} and~\ref{sec:discovery-protocol} is
that the server's authorization policy is revealed to a network attacker.
One solution to this problem is to use a ``policy-hiding'' prefix encryption
scheme, where ciphertexts hide the prefix to which they are encrypted. This
is possible if we build the prefix encryption scheme from an {\em anonymous}
IBE scheme such as Boneh-Franklin~\cite{BF01}. In anonymous
IBE, ciphertexts do not reveal their associated identity, and thus, we achieve
policy-hiding encryption. A drawback of the Boneh-Franklin scheme is that it 
requires evaluating a pairing during {\em both} encryption and decryption,
thus increasing the computational cost over the $\ms{BB}_2$ IBE scheme, where a
pairing is only needed during decryption.

\para{Delegating decryption abilities.} In our current framework, only
IBE roots can issue identity keys to principals. But there are many natural
scenarios where non-root identities should also be able to issue identity
keys. For example, an identity provider might issue Alice a certificate chain for
the name {\tt IDP/User/Alice}. While Alice can now issue
certificates for child domains like {\tt IDP/User/Alice/Phone},
she cannot extract an identity key for the child domain. As a result,
delegation of decryption capabilities must still go through the IBE
root for Alice's domain. To allow Alice to issue identity keys corresponding
to derivatives (i.e., further extensions) of her name, we can use a
hierarchical IBE scheme~\cite{GS02,HL02} to construct the underlying
prefix encryption scheme. In hierarchical IBE, a principal holding
a key for an identity {\tt A/B/} is able to issue keys for any
derivative identity {\tt A/B/C}. This allows 
principals to independently manage their own subdomain without needing
to go through a central identity provider.

\para{Complex authorization policies.} Our private mutual authentication
and service discovery protocols support authorization policies that can be
formulated as prefix constraints. While these are a very natural class of
policies for the SDSI model, there are scenarios that may require
 a more diverse set of policies. For example, a
server's authorization policy might require that the client's user ID be
between 0 and 100. While such range queries can be expressed using a
(large) number of prefixes, the size of the server's broadcast message
grows with the number of prefixes in the policy.
One way of supporting a more complex set of policies is to use a more
general encryption scheme, such as an attribute-based encryption
(ABE)~\cite{SW05,BSW07,GVW13,BGGHNSVV14,GVW15}.
For instance, if the policies can be expressed as
Boolean formulas, then the Bethencourt et al.\ ABE scheme~\cite{BSW07} provides
an efficient solution from pairings. Supporting more complicated policies is
possible using lattice-based methods~\cite{GVW13,BGGHNSVV14,GVW15}, but this
increased expressivity comes at the expense of performance.


\section{Related Work}
\label{sec:related-work}
In this section, we survey some of the existing work on developing
private mutual authentication and service discovery protocols.

\para{Private mutual authentication.} The term ``private authentication" was first introduced
by Abadi and Fournet~\cite{Aba03,AF04}.
These papers define a privacy property for
authentication protocols and the challenges associated with achieving it. 
Their definition of privacy is similar to the one we describe in
Section~\ref{sec:desired-features} and states that the identity of each
protocol participant is revealed to its peer only if the peer is authorized by
the participant. In~\cite{AF04}, Abadi and Fournet introduce two private
authentication protocols and formally analyze them in the applied pi calculus.
In addition to showing privacy, they also show that their protocol achieves
unlinkability. While our private mutual authentication protocol shown in
Figure~\ref{fig:auth-proto-2} does not provide unlinkability, the
variant described in Section~\ref{sec:private-mutual-auth-analysis}
(shown in Figure~\ref{fig:auth-proto-1}) does achieve unlinkability.

A key difference between our work and~\cite{AF04} is the fact that we support
prefix-based authorization policies.
The protocols in~\cite{AF04} require the
authorization policy to be specified by a set of public keys and do not scale
when the set of public keys is very large.
Moreover, in their protocol, clients and servers must either maintain
(potentially long) lists of public keys for authorized peers, or look up a
peer's public key in a directory service before each authentication request.
In contrast, using IBE both enables support for more expressive authorization
policies and eliminates the need to either maintain lists of public keys or
run a public-key directory service.

Many other cryptographic primitives have also been developed for problems related to private
mutual authentication, including secret handshakes~\cite{BDSSSW03,JKT06,AKB07}, oblivious
signature-based envelopes~\cite{LDB05}, oblivious attribute
certificates~\cite{LL06}, hidden credentials~\cite{HBSO03,FAL04,BHS04}, and
more.

Secret handshakes and their extensions are
protocols based on bilinear pairings that allow members of a group to identify each
other privately. A key limitation of secret handshakes is that the parties can only
authenticate using credentials issued by the same root authority. 
Oblivious signature-based envelopes~\cite{LDB05}, oblivious attribute
certificates~\cite{LL06} and hidden credentials~\cite{HBSO03,FAL04,BHS04}
allow a sender to send an encrypted message that can be decrypted only by a
recipient that satisfies some policy. Hidden credentials additionally 
hide the sender's policy. Closely related are the cryptographic primitives of
attribute-based encryption~\cite{BSW07,GVW13} and predicate
encryption~\cite{KSW08,GVW15}, which allow more fine-grained control over
decryption capabilities.

The protocols we have surveyed here are meant for {\em authentication},
and not authenticated key-exchange, which is usually the desired primitive.
Integrating these protocols into existing key-exchange protocols such as
SIGMA or TLS~1.3 is not always straightforward and can require non-trivial
changes to existing protocols. In contrast, our work shows how IBE-based
authentication can be very naturally integrated with existing secure
key-exchange protocols (with minimal changes) to obtain private mutual
authentication. Moreover, our techniques are equally applicable in the service
discovery setting, and can be used to obtain 0-RTT private mutual authentication.

\para{Service discovery.} There is a large body of work on designing
service discovery protocols for various environments---mobile, IoT, enterprise
and more; we refer to~\cite{ZMN05} for a survey. Broadly, these 
protocols can be categorized into two groups:
``directory-based" protocols and  ``directory-free'' protocols.

In directory-based discovery protocols~\cite{CZHJK99,ZMBDLC10,Jini}, 
there is a central directory that maintains service information and controls access
to the services. Clients query directories to discover services while services register 
with the directory to announce their presence.
While directory-based protocols allow for centralized management and tend
to be computationally efficient, their main drawback is that they force
dependence on an external service. If the directory service is unavailable then
the protocol ceases to work. Even worse, if the directory service is compromised, then
both server and client privacy is lost. Besides, mutually suspicious clients and 
servers may not be able to agree on a common directory service that they both trust.
In light of these downsides, we designed decentralized, peer-to-peer protocols in this work.

Directory-free protocols, such as~\cite{KM14a, KM14b, ZMN04, ZMN06}, typically rely 
on a shared key established between devices in a separate, out-of-band protocol. The shared
key is then used to encrypt the private service advertisements so that only paired
devices can decrypt. Other protocols like UPnP~\cite{Ell02}
rely on public key encryption, where each device maintains a set of public keys for 
the peers it is willing to talk to. In contrast, key-management in our IBE-based 
solution is greatly simplified---devices do not have
to maintain long lists of symmetric or public keys. Our protocol
is similar to the Tryst protocol~\cite{PGMSW07}, which proposes using an
anonymous IBE scheme for encrypting under the peer's name (based on using  
a mutually agreed upon convention). A distinguishing feature of our protocol over Tryst
is the support for prefix-based authorization policies.


\section{Conclusion}

Automatic service discovery is an integral component of the Internet of
Things. While numerous service discovery protocols exist, few
provide any notion of privacy. 
Motivated by the privacy shortcomings of
the Apple AirDrop protocol, we introduce several important security and
privacy goals for designing service discovery protocols for both IoT and mobile applications.
We then show how to combine off-the-shelf identity-based encryption with
existing key-exchange protocols to obtain both a private mutual authentication
and a private service discovery protocol.
Our benchmarks on the various devices
show that our protocols are viable for a wide range of practical 
deployments and IoT scenarios.


\section*{Acknowledgments}

We thank  Mart\'{i}n Abadi, Mike Burrows, Felix G{\"u}nther, and Adam Langley for many helpful
comments and suggestions. We thank Bruno Blanchet for his help in verifying
the unlinkability property of our modified private mutual authentication
protocol in Section~\ref{sec:private-mutual-auth}. We thank Felix
G{\"u}nther for pointing out an error in an earlier version of our private discovery
protocol. This work was supported by NSF, DARPA, a grant from ONR, the Simons Foundation,
and an NSF Graduate Research Fellowship.
Opinions, findings and conclusions or recommendations expressed in this material
are those of the authors and do not necessarily reflect the views of DARPA.

\bibliographystyle{alpha}
\bibliography{paper}


\appendix

\section{Additional Preliminaries}
\label{app:prelims-add}
In this section, we review some additional preliminaries.

\para{Identity-based encryption.} Identity-based encryption
(IBE)~\cite{Sha84,BF01,Coc01,BB04a} is a generalization of public-key encryption
where the public keys can be arbitrary strings, called {\em identities}.
An IBE scheme consists of four algorithms:
\begin{citemize}
  \item \textbf{Setup.} $\setup(1^\lambda)$ takes the security parameter
  $\lambda$ and outputs a master public key $\mpk$ and a master secret key
  $\msk$.
  \item \textbf{Extract.} $\extract(\msk, \id)$ takes
  $\msk$ and an identity $\id$ and outputs an identity secret key $\skid$.
  \item \textbf{Encrypt.} $\enc(\mpk, \id, m)$ takes $\mpk$, an identity $\id$,
  and a message $m$ and outputs a ciphertext $\ct$.
  \item \textbf{Decrypt.} $\dec(\skid, \ct)$ takes $\ct$
  and $\skid$ and outputs a message $m$ or a special symbol
  $\perp$.
\end{citemize}
The correctness requirement for an IBE scheme states that for all messages $m$,
identities $\id$ and keys $(\mpk, \msk) \gets \setup(1^\lambda)$, if
$\sk \gets \extract(\msk, \id)$ and $\ct \gets \enc(\mpk, \id, m)$, then
$\dec(\sk, \ct) = m$.
In this work, we require IBE schemes that
are secure against adaptive chosen-ciphertext attacks (CCA-secure).

\para{IBE security.} We now present the formal game-based definition for
adaptive-CCA security in the context of an IBE scheme~\cite{BF01}. First, we define the
$\indidcca$ security game between an adversary $\calA$ and a challenger. In
the following, we let $\lambda$ denote the security parameter. The game
consists of several phases:
\begin{citemize}
  \item \textbf{Setup phase:} The challenger samples parameters
  $(\mpk, \msk) \gets \setup(1^\lambda)$, and sends $\mpk$ to $\calA$.

  \item \textbf{Pre-challenge phase:} The adversary can adaptively make
  key extraction and decryption queries to the challenger. In a key-extraction
  query, the adversary submits an identity $\id$. The challenger replies with
  an identity key $\sk_\id \gets \extract(\msk, \id)$. In a decryption query,
  the adversary submits a ciphertext $\ct$ and an identity $\id$. The
  challenger then extracts an identity key $\sk_\id \gets \extract(\msk, \id)$
  and replies with $\dec(\sk_\id, \ct)$.

  \item \textbf{Challenge phase:} Once the adversary has finished making queries
  in the pre-challenge phase, it outputs two equal-length messages $\bar m_0$
  and $\bar m_1$ along with a target identity $\hid$ ($\hid$ must not have
  appeared in any of the adversary's extraction queries in the pre-challenge
  phase). The challenger chooses a random bit $b \getsr \zo$ and replies with
  a ciphertext $\hct \gets \enc(\mpk, \hid, \bar m_b)$.

  \item \textbf{Post-challenge phase:} In the post-challenge phase, the adversary
  can continue to make extraction and decryption queries as in the pre-challenge
  phase. However, it cannot request a key for the target identity
  $\hid$ or request a decryption of the challenge ciphertext $\hct$
  under identity $\hid$.

  \item {\textbf{Output phase:}} At the end of the game, the adversary outputs a
  bit $b' \in \zo$. We say the adversary ``wins'' the $\indidcca$ game if $b' = b$.
\end{citemize}
\begin{definition}[$\indidcca$ Security~\cite{BF01}]
  \normalfont
  \label{def:ibe-cca}
  An IBE scheme $(\setup, \enc, \dec, \extract)$ is $\indidcca$-secure if no
  efficient adversary can win the $\indidcca$ security game with probability
  that is non-negligibly greater than $1/2$.
\end{definition}

\para{Additional cryptographic primitives.} We also review some standard
cryptographic definitions for pseudorandom generators (PRGs), pseudorandom
functions (PRFs), strong randomness extractors, and authenticated encryption. Below,
we write $\negl(\lambda)$ to denote a negligible function in the security
parameter $\lambda$.

\begin{definition}[Pseudorandom Generator~\cite{Yao82a,BM82}]
  \normalfont
  A pseudorandom generator $G : \zo^\ell \to \zo^n$
  where $\ell = \ell(\lambda)$, $n = n(\lambda)$, $\ell < n$ is secure if for all
  efficient adversaries $\calA$,
  \[ \abs{\Pr\left[ s \getsr \zo^\ell : \calA(1^\lambda, G(s)) = 1 \right] -
          \Pr\left[ z \getsr \zo^n : \calA(1^\lambda, z) = 1 \right]} = \negl(\lambda). \]
\end{definition}

\begin{definition}[Pseudorandom Function~\cite{GGM84}]
  \normalfont
  A pseudorandom function $F : \calK \times \zo^n \to \zo^m$
  with key space $\calK$, domain $\zo^n$ and range $\zo^m$ is secure if for all efficient
  adversaries $\calA$,
  \[ \abs{\Pr\left[ k \getsr \calK : \calA^{F(k, \cdot)}(1^\lambda) = 1 \right] -
          \Pr\left[ f \getsr \Funs[\zo^n, \zo^m]: \calA^{f(\cdot)}(1^\lambda) = 1 \right]} = \negl(\lambda), \]
  where $\Funs[\zo^n, \zo^m]$ denotes the set of all functions from $\zo^n$ to $\zo^m$.
\end{definition}

\begin{definition}[{Strong Randomness Extractor~\cite[adapted]{NZ96}}]
  \normalfont
  A function $E : \calK \times \zo^n \to \zo^m$ is a strong
  randomness extractor if for all adversaries $\calA$,
  \begin{multline*}
    \Big|\Pr\left[ k \getsr \calK, x \getsr \zo^n : \calA(1^\lambda, k, E(k, x)) = 1 \right] - \\
          \Pr\left[ k \getsr \calK, z \getsr \zo^m : \calA(1^\lambda, k, z) = 1 \right] \Big| = \negl(\lambda).
  \end{multline*}
\end{definition}

\begin{definition}[Authenticated Encryption~\cite{BN00}]
  \normalfont
  A symmetric-key encryption scheme $(\Enc, \allowbreak \Dec)$ over a key space $\calK$,
  a message space $\calM$, and a ciphertext space $\calC$ is an authenticated encryption
  scheme if it satisfies the following properties:
  \begin{citemize}
    \item \textbf{Correctness:} For all messages $m \in \calM$,
    \[ \Pr\left[ k \getsr \calK : \Dec(k, \Enc(k, m)) = m \right] = 1 - \negl(\lambda). \]

    \item \textbf{Semantic Security ($\indcpa$):} For all efficient adversaries $\calA$,
    \[ \abs{\Pr\left[ k \getsr \calK, b \getsr \zo;
            b' \gets \calA^{\LoR(k, b, \cdot, \cdot)}(1^\lambda) : b' = b \right] - \frac{1}{2}} = \negl(\lambda), \]
    where $\LoR(k, b, m_0, m_1)$ is the ``left-or-right'' encryption
    oracle which on input a key $k \in \calK$, a bit $b \in \zo$, and
    two messages $m_0, m_1 \in \calM$, outputs $\Enc(k, m_b)$.

    \item \textbf{Ciphertext Integrity:} For all efficient adversaries $\calA$,
    \[ \Pr\left[ k \getsr \calK ; \ct \gets \calA^{\Enc(k, \cdot)}(1^\lambda) :
                 \ct \notin Q \wedge \Dec(k, \ct) \neq \bot \right] = \negl(\lambda), \]
    where $Q$ is the set of ciphertexts output by the $\Enc$ oracle.
  \end{citemize}
\end{definition}

\para{Key derivation.} A key derivation function
(KDF)~\cite{DGHKR04,Kra10} is an algorithm for extracting randomness from some
non-uniform entropy source to obtain a uniformly random
string suitable for use in other cryptographic primitives.
For our private mutual authentication protocol
in Figure~\ref{fig:auth-proto-2}, the key-derivation function is implemented
by a hash function: $\kdf(g^x, g^y, g^{xy}) = H(g^x, g^y, g^{xy})$, and
our security analysis relies on the hardness of the Hash
Diffie-Hellman assumption~\cite{ABR01} (described below). For our private
service discovery protocol, we defer an explicit description of our key-derivation
procedure to Appendix~\ref{app:discovery-protocol-security}, where we provide a formal
analysis of our protocol.

\para{Cryptographic assumptions.} We now state the
  Diffie-Hellman-type assumptions~\cite{ABR01} we use to show security of our protocols.
  Let $\bbG$ be a cyclic group of prime order $p$ with generator $g$, and
  let $H: \bbG \to \calZ$ be a hash function from $\bbG$ to an output space
  $\calZ$. Then, the Hash Diffie-Hellman (Hash-DH) assumption in $\bbG$
  states that
  the following distributions are computationally indistinguishable:
  \[ (g, g^x, g^y, H(g^x, g^y, g^{xy})) \quad \text{and} \quad
     (g, g^x, g^y, z), \]
  where $x, y \getsr \Z_p$
  and $z \getsr \calZ$. The Strong Diffie-Hellman (Strong-DH)
  assumption in $\bbG$ states that computing $g^{xy}$
  is hard given $(g, g^x, g^y)$ where $x, y \getsr \Z_p$, even if the
  adversary is given access to a verification oracle $\calO(g,
  g^x, \cdot, \cdot)$. The verification oracle
  $\calO$ outputs $1$ if the input $(g, g^x, g^y, g^z)$ is a
  decisional Diffie-Hellman (DDH) tuple ($z = xy
  \bmod p$) and $0$ otherwise. In other words, the Strong-DH assumption states
  that the computational Diffie-Hellman (CDH) problem is hard even
  with access to a (restricted) DDH oracle.

\section{Key-Exchange Security}
\label{app:key-exchange-model}
In this section, we describe the Canetti-Krawczyk model of key-exchange (KE)
in the ``post-specified'' peer setting~\cite{CK02,Kra03}. In the
post-specified peer setting, a peer's identity is possibly unavailable at the
beginning of the KE protocol. Instead, the identity of the peer is revealed
during the KE protocol. For example, at the beginning of the KE protocol, the
client might only know the server's IP address and nothing more about the
identity of the server. By participating in the KE protocol, the client learns
the identity of the server. Earlier works on key exchange have also
considered the ``pre-specified'' peer setting~\cite{BR93,Sho99a,CK01}, where
the peer's identity is assumed to be known at the beginning of the protocol.
Our description in this section is largely taken from the description
in~\cite{CK02}.

Following earlier works~\cite{BR93,BCK98}, Canetti and
Krawczyk~\cite{CK01,CK02,Kra03} model a KE protocol as a multi-party protocol
where each party can run one or more instances of the protocol. At the
beginning of the protocol execution experiment, the challenger sets up the
long-term secrets for each party (as prescribed by the protocol
specification). Then, the adversary is given control of the protocol
execution. In particular, the adversary can activate parties to run an
instance of the key-exchange protocol called a {\em session}. Each session is
associated with a session identifier. Within a session, parties can be
activated to either initiate a session, or to respond to an incoming message.
The purpose of the session is to agree on a {\em session key} with another
party in the network (called the {\em peer}) by exchanging a sequence of
messages. Sessions can run concurrently, and messages are directed to the
session with the associated session identifier.

In the post-specified model, a party can be activated to initiate a KE session
by a tuple $(P, \sid, d)$, where $P$ is the party at which the session is
activated, $\sid$ is a unique session identifier, and $d$ is a ``destination
address'' used only for the delivery of messages for session $\sid$. In
particular, $d$ has no bearing on the identity of the peer. A party can be
activated as either an initiator or as a responder (e.g., to respond to an
initialization message from another party). The output of a completed KE session
at a party $P$ consists of a public tuple $(P, \sid, Q)$ as well as a secret
value $\ksk$, where $\sid$ is the session id, $Q$ is the peer of the session,
and $\ksk$ is the application-traffic (or session) key. In addition, sessions
can be {\em aborted} without producing a session key. We denote the output of
an aborted session with a special symbol $\perp$. When the session produces an
output (as a result of completing or aborting), its local state is erased.
Only the tuple $(P, \sid, Q)$ and the application-traffic key $\ksk$ persists at the
conclusion of the session. Each party may also have additional {\em long-term}
state such as a signature signing key. These long-term secrets are shared
across multiple sessions and are not considered to be a part of a party's
local session state. In the protocol analysis, we will uniquely identify
sessions by a pair $(P,
\sid)$, where $P$ is the local party and $\sid$ is the session identifier.

\para{Attacker model.} The attacker is modeled to capture realistic attack
capabilities in an open network. In particular, the attacker has full control
over the network communication, and moreover, is allowed access to some of the
secrets used or generated by the key-exchange protocol.
More precisely, the attacker is
modeled as a probabilistic polynomial-time machine with full control
over the network. It is free to intercept, delay, drop, inject, or tamper
with messages sent between the parties. In addition, the attacker can
activate parties (as either an initiator or a responder)
to begin an execution of the
KE protocol. The attacker can also issue the following session exposure
queries to learn ephemeral and long-term secrets held by the parties:
\begin{citemize}
  \item $\statereveal$: The adversary can issue a
  $\statereveal$ query against any incomplete session. It is then
  given the local state associated with the targeted session.
  Importantly, the local state
  does {\em not} contain long-term secrets such as signature signing keys which
  are shared across multiple sessions involving the same party.

  \item $\keyreveal$: The adversary can issue a $\keyreveal$
  query against any completed session. It is then given the
  application-traffic key $\ksk$ associated with the targeted session.

  \item $\corrupt$: The adversary can issue a $\corrupt$ query
  on any party $P$, at which point it learns all information in the memory of
  $P$, including any long-term secrets. After corrupting a party $P$,
  the adversary dictates all subsequent actions by $P$.
\end{citemize}
Note that the adversary in this model is fully adaptive.

\para{Security definition.} To define the security of a KE protocol, Canetti and
Krawczyk give an indistinguishability-based definition that intuitively states
that an adversary cannot distinguish the real session key output by an
session from a uniformly random session key. However, since the adversary
has the ability to make session-exposure queries, this definition is only
meaningful if the power of the adversary is restricted. In particular, the
adversary can only try to distinguish the session key for sessions $(P, \sid)$
on which it did not issue one of the three session-exposure queries defined
above (otherwise, it can trivially distinguish the session key from a uniformly
random key). Formally, if the adversary makes a session-exposure query on
$(P, \sid)$, the session $(P, \sid)$ is considered exposed. Moreover, any sessions
that ``match'' $(P, \sid)$ are also considered to be exposed. This restriction
is necessary since the matching session also computes the same session
key (in an honest execution of the protocol), thus allowing again a trivial
distinguishing attack. In the post-specified model, a session $(Q, \sid)$
{\em matches} a completed session with public output $(P, \sid, Q)$ if
the session $(Q, \sid)$ is incomplete or if the session $(Q, \sid)$ completes with
public output $(Q, \sid, P)$.

To be more precise, at any point in the indistinguishability game,
the adversary can issue
a $\Test$ query on any completed session $(P, \sid)$.
Let $(P, \sid, Q)$ be the public output and $\ksk$ be the secret
output by this session. The challenger then sets
$k_0 = \ksk$ and $k_1 \getsr \calK$, where $\calK$ is the key-space for the
session key. The challenger then chooses a bit
$b \getsr \zo$ and gives the adversary
$k_b$. The adversary can continue making session exposure queries as well as
activate sessions and interact with the parties after making the $\Test$
query. At the end of the game, the adversary outputs a bit $b' \in \zo$.
The adversary wins the distinguishing game if $b' = b$.
An adversary is {\em admissible} if the session $(P, \sid)$ and all sessions
matching $(P, \sid, Q)$ are unexposed. We now state the formal security definition
from~\cite{CK02,Kra03}.
\begin{definition}[{Key-exchange security~\cite{CK02,Kra03}}]
  \normalfont
  \label{def:key-exchange-sec}
  A key-exchange protocol $\pi$ is secure if for all efficient and admissible
  adversaries $\calA$, the following properties hold:

  \begin{cenumerate}
    \item Suppose a session $(P, \sid)$ completes at an uncorrupted party $P$
    with public output $(P, \sid, Q)$ and secret output $\ksk$. Then, with
    overwhelming probability, if $Q$
    completes session $(Q, \sid)$ while $P$ and $Q$ are uncorrupted,
    the public output of $(Q, \sid)$ is $(Q, \sid, P)$, and the secret output is
    $\ksk$.

    \item The probability that $\calA$ wins the test-session distinguishing game
    is negligibly close to $1/2$.
  \end{cenumerate}
\end{definition}

\subsection{Service Discovery Model}
\label{app:service-disc-model}
In Section~\ref{sec:discovery-protocol}, we introduced a private service
discovery protocol that consisted of two sub-protocols: a broadcast protocol
where the server announced its identity and other relevant endpoint
information, and a 0-RTT mutual authentication protocol. In this section, we
describe how we can extend the Canetti-Krawczyk key-exchange framework to also
reason about the security of the service discovery protocol.

One major difference in the service discovery setting is that multiple clients
can legitimately respond to the server's broadcast message and initiate
handshakes with the server. In contrast, in the mutual authentication setting,
all messages sent between clients and servers are intended for a single
peer. Thus, the server's broadcast message might include ``semi-static''
secrets that persist for the lifetime of the broadcast message (which might
include more than one session). In our security model, the servers can
maintain three kinds of state: local session state that persists for a single
session, semi-static state that persists for the lifetime of a broadcast, and
long-term state that persists across multiple broadcasts.

In the service discovery setting, the adversary is allowed to activate servers
to produce a broadcast message. Specifically, a server can be activated to
construct a broadcast message by a tuple $(P, \bid)$ where $P$ is the party
being activated, and $\bid$ is a unique broadcast identifier. Next, the
adversary can activate parties to respond to a broadcast message by providing
a tuple $(P, \bid, \sid)$ along with a broadcast message. Here, $P$ identifies
the party that is being activated to initiate a key-exchange protocol, $\bid$
is the broadcast identifier, and $\sid$ is a session specific identifier. In
contrast to the basic key-exchange model, the session activation tuple does
not include a destination address. It is assumed that the server's discovery
broadcast already includes the relevant endpoint information. In the broadcast
setting, the pair $(\bid, \sid)$ is taken to be the unique session identifier.
The output of a successful key-exchange session at a party $P$ consists of a
public tuple $(P, \bid, \sid, Q)$, where $(\bid, \sid)$ identifies the session
and $Q$ is the peer of the session, as well as a secret application-traffic key $\atk$.
As in the key-exchange setting, sessions can also abort without producing
any output. When a local session completes (either successfully or due to
aborting), the local session state is erased. Moreover, whenever a server
is activated to initialize a new broadcast, its previous broadcast state
is also erased. In the subsequent protocol analysis, we will uniquely
identify broadcasts by a pair $(P, \bid)$ and sessions by a triple
$(P, \bid, \sid)$.

In the service discovery setting, the adversary's goal is still
to distinguish the session key output by a completed session from a session key
that is chosen uniformly at random. In addition, we allow the
adversary an additional query that allows it to learn any semi-static secrets
that persist for the lifetime of a particular broadcast:
\begin{citemize}
  \item \broadcastreveal: The adversary can issue a $\broadcastreveal$ query
  against any server that has initiated a broadcast. The adversary is given
  any semi-static state the server is maintaining for the lifetime of its
  current broadcast. However, the adversary is not given any long-term secrets
  such as signing keys which persist between broadcasts.
\end{citemize}

\para{Admissibility requirements.}
In the key-exchange security game, it was necessary to impose restrictions
on the adversary's power to prevent it from trivially winning the security game.
The same holds in the discovery setting. Specifically, we say a session
$(P, \bid, \sid)$ is ``exposed'' if the adversary makes the following queries:
\begin{citemize}
  \item The adversary makes a $\keyreveal$ query on $(P, \bid, \sid)$.

  \item The adversary makes a $\corrupt$ query on $P$ before
  the session $(P, \bid, \sid)$ has expired.

  \item $P$ is the initiator of the key-exchange protocol (the client),
  and the adversary had made a $\statereveal$ query on $(P, \bid, \sid)$.

  \item $P$ is the responder in the key-exchange protocol (the server),
  and the adversary has made both a $\statereveal$ query on $(P, \bid, \sid)$,
  and either a $\broadcastreveal$ query on $(P, \bid)$ or a
  $\corrupt$ query on $P$ before $(P, \bid)$ has expired.
\end{citemize}
Moreover, when the adversary makes one of the above queries and exposes a
session, all ``matching'' sessions are also marked as exposed. We adapt
the definition from the basic key-exchange setting: a session $(Q, \bid, \sid)$
matches a completed session with public output $(P, \bid, \sid, Q)$ if
$(Q, \bid, \sid)$ is incomplete or if $(Q, \bid, \sid)$ completes with public
output $(Q, \bid, \sid, P)$.

We give some intuition about the exposure conditions. The last requirement
effectively states that in the case of server compromise, as long as one of
the semi-static secret (from the broadcast) and the ephemeral session-specific
secret is not compromised, the adversary cannot learn anything about the
negotiated key. The above definition also captures perfect forward secrecy
since the adversary is allowed to corrupt parties after the expiration of the
test session.

We can now define security for the service discovery protocol in the same
manner as we defined key-exchange security for a mutual authentication
protocol (Definition~\ref{def:key-exchange-sec}):

\begin{definition}[Service discovery security]
  \normalfont
  \label{def:service-discovery-sec}
  A service discovery protocol $\pi$ is secure if for all efficient and admissible
  adversaries $\calA$, the following properties hold:

  \begin{cenumerate}
    \item Suppose a session $(P, \bid, \sid)$ completes at an uncorrupted
    party $P$ with public output $(P, \bid, \sid, Q)$ and secret output $\ksk$.
    Then, with overwhelming probability, if $Q$ completes session $(Q, \bid, \sid)$
    while $P$ and $Q$ are uncorrupted, the public output of $(Q, \bid, \sid)$
    is $(Q, \bid, \sid, P)$, and the secret output is $\ksk$.

    \item The probability that $\calA$ wins the test-session distinguishing game
    is negligibly close to $1/2$.
  \end{cenumerate}
\end{definition}

\section{Key-Exchange Privacy}
\label{app:key-exchange-privacy}
In this section, we formally define our notion of privacy for a key-exchange
protocol. Intuitively, we say a protocol is private if no efficient client or
server is able to learn anything about another party's identity unless they
satisfy that party's authorization policy. This property should hold even
if the adversary corrupts multiple parties; as long as none of the corrupted
parties satisfy the target's authorization policy, then they should not learn
anything about the target's identity. More formally, we operate in the
Canetti-Krawczyk key-exchange model, but make several modifications to the
adversary's capabilities and its objectives. We now describe our
model formally.

\para{Formal definition.}
Let $n$ be the number of parties participating in the protocol execution
experiment. We denote the parties $P_1, \ldots, P_n$. In addition to the $n$
parties, we also introduce a special test party $P_\ptest$ whose identity will
be hidden from the adversary at the beginning of the protocol execution
experiment. We now define two experiments $\expt_0$ and $\expt_1$. For $b \in
\zo$, the experiment $\expt_b$ proceeds as follows:
\begin{citemize}
  \item \textbf{Setup phase:} At the beginning of the experiment, the adversary
  submits a tuple of distinct identities $(\id_1, \ldots,\id_n)$ and two challenge
  identities $\sind{\id_\ptest}{0}$ and $\sind{\id_\ptest}{1}$
  (distinct from $\id_1, \ldots, \id_n$). For each $i
  \in [n]$, the challenger sets up the long-term secrets for party $P_i$. For
  instance, depending on the protocol specifications, the challenger might
  issue a certificate binding the credentials for $P_i$ to its identity
  $\id_i$. The challenger associates the identity $\sind{\id_\ptest}{b}$ with
  the special party $P_\ptest$, and performs the setup procedure for
  $P_\ptest$ according to the protocol specification.

  \item \textbf{Protocol execution:} In this phase, the adversary is free to
  activate any party ($P_1, \ldots, P_n$ and $P_\ptest$) to initiate sessions
  and respond to session messages. When the adversary activates a party $P$ to
  initiate a session $(P, \sid)$, it can also specify a policy $\pi_{(P, \sid)}$
  associated with the session $(P, \sid)$. Similarly, when the adversary
  activates a party $Q$ as a responder, it can specify a policy $\pi_{(Q, \sid)}$
  associated with the session $(Q, \sid)$. In addition, when the adversary
  activates a party $Q$ to respond to a session message for session $(Q, \sid)$,
  it also specifies the sender's policy. Party $Q$ answers the query if and
  only if it satisfies the sender's policy. If the adversary does not specify
  a policy for a session $(P, \sid)$, the party $P$ accepts all connections.
  The adversary can also issue $\statereveal$, $\keyreveal$ and $\corrupt$
  queries on any party participating in the protocol execution (with the same
  semantics as in the key-exchange security game described in
  Appendix~\ref{app:key-exchange-model}).

  \item \textbf{Output phase:} At the end of the protocol execution, the
  adversary outputs a bit $b' \in \zo$.
\end{citemize}
Intuitively, we will say that a key-exchange protocol is {\em private} if no
efficient adversary can distinguish $\expt_0$ from $\expt_1$ with probability
that is non-negligibly greater than $1/2$. This captures the property that no
active network adversary is able to distinguish interactions with a party
$\sind{\id_\ptest}{0}$ from those with a party $\sind{\id_\ptest}{1}$. Of
course, as defined currently, an adversary can trivially win the security
game. Since we have not placed any restrictions on the adversary's queries, it
can, for instance, corrupt the target party $P_\ptest$ and learn its identity.

\para{Admissibility requirements.} To preclude trivial ways of learning the identity
of the test party, we enumerate a series of constraints on the adversary's
power. We say an adversary is ``admissible'' for the privacy game if the following
conditions hold:
\begin{citemize}
  \item The adversary does not issue a $\corrupt$ query on $P_\ptest$.
  In addition, the adversary does not make a $\statereveal$ query
  on any session $(P_\ptest, \sid)$ that completes.

  \item Whenever a session $(P_\ptest, \sid)$ completes with public output
  $Q$, then the adversary has not made a $\statereveal$ query on the
  session $(Q, \sid)$, and moreover, $Q$ is not corrupt before the completion of
  $(P_\ptest, \sid)$.

  \item Let $\Pi_\ptest$ denote the set of policies the adversary
  has associated with the test party $P_\ptest$. Let $I \subseteq [n]$ be the
  indices of the parties the adversary has corrupted in the course of the
  protocol execution. Then, for all policies $\pi \in \Pi_\ptest$ and
  indices $i \in I$, it should be the case that $\id_i$ does not satisfy
  $\pi$.

  \item Whenever the adversary associated a policy $\pi$ with a particular
  session $(P, \sid)$, it must be the case that either $\sind{\id_\ptest}{0}$ and
  $\sind{\id_\ptest}{1}$ both satisfy $\pi$ or neither satisfy $\pi$.
\end{citemize}
These admissibility requirements are designed to rule out several ``trivial''
ways of breaking the privacy of the test party $P_\ptest$ in the protocol
execution experiment. Certainly, if the adversary corrupted the test party, or
exposed (via a $\statereveal$ or a $\corrupt$ query) a completed session
$(P_\ptest, \sid)$, it can easily learn the identity of $P_\ptest$. These
scenarios are captured by the first two admissibility requirements. The third
admissibility requirements states that the adversary never corrupts a party
who is authorized to talk with the test party. This is unavoidable because a
party is always willing to reveal its identity to any other party that
satisfies its policy. Thus, if the adversary obtained the credentials of a
party that satisfied the test party's policy, it should be able to learn the
test party's identity. The final admissibility requirement captures the fact that
we work in the model where a network adversary can always tell whether a
handshake completes or aborts. Thus, if the adversary is allowed to choose the
policies the protocol participants use, then it must be constrained so it
cannot choose a policy that allows it to trivially distinguish the two test
identities.

With these admissibility requirements, we now state our privacy definition
for key-exchange protocols.
\begin{definition}[Key-exchange privacy]
  \normalfont
  \label{def:key-exchange-privacy}
  A key-exchange protocol $\pi$ is private if for all efficient and admissible
  adversaries $\calA$, the probability that $\calA$ outputs $1$ in $\expt_0$
  is negligibly close to the probability that it outputs $1$ in $\expt_1$.
\end{definition}

\para{The SIGMA-I protocol.} To provide some additional intuition on the nature
of our privacy definition, we first show that the SIGMA-I
protocol~\cite{Kra03,CK02} does not satisfy our notion of privacy against {\em
active} network attackers. Consider the case where we have two parties: $P_1$
and the test party $P_\ptest$. The adversary chooses an arbitrary identity
$\id_1$ for $P_1$ and two distinct identities $\sind{\id_\ptest}{0}$ and
$\sind{\id_\ptest}{1}$ for $P_\ptest$. The adversary activates $P_1$ to
initiate a session $(P, \sid)$, and forwards the message (a DH share $g^x$) to
$P_\ptest$. The test party $P_\ptest$ responds with its DH share $g^y$ and an
encryption of its identity (along with other components) under an encryption
key derived from $g^x$ and $g^y$. At this point in the protocol execution,
none of the sessions have completed. The adversary now performs a
$\statereveal$ query on $P_1$ to learn $x$, from which it can derive the key
$P_\ptest$ used to encrypt its message. Now, the adversary simply decrypts the
response message from $P_\ptest$ and recovers its identity. Thus, we conclude
that the original SIGMA-I protocol does not provide privacy in the presence of
an active network adversary.

\section{Analysis of Private Mutual Authentication Protocol}
\label{app:mutual-auth-analysis}

In this section, we show that the protocol in Figure~\ref{fig:auth-proto-2}
from Section~\ref{sec:private-mutual-auth} is a private mutual authentication
protocol (satisfies both Definition~\ref{def:key-exchange-sec} and
Definition~\ref{def:key-exchange-privacy}). In the following description,
we will often refer to the first message from the client to the server as the
``initialization'' message, the server's response as the ``response'' message,
and the final message from the client as the ``finish'' message.

\para{Security.} Security of our protocol follows directly from the security
of the SIGMA-I protocol. Specifically, we have the following theorem.
\begin{theorem}
  The protocol in Figure~\ref{fig:auth-proto-2} is a secure key-exchange
  protocol (Definition~\ref{def:key-exchange-sec}) assuming the Hash-DH
  assumption holds in $\bbG$ and the security of the underlying
  cryptographic primitives (the signature scheme and the authenticated encryption
  scheme).
\end{theorem}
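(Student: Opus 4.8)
The plan is to reduce to the established security proof of the SIGMA-I key-exchange protocol~\cite{CK02,Kra03}, treating the prefix-encryption layer and the authenticated-encryption scheme as modular additions that do not affect the Diffie-Hellman key agreement. The crucial structural observation is that our protocol performs exactly the same ephemeral exchange as SIGMA-I: both parties derive $(\htk,\atk) = \kdf(g^x,g^y,g^{xy})$ from the identical transcript, and each party's signature binds its identity to the session-specific tuple $(\sid, g^x, g^y)$. The only differences are that the server's certificate travels inside a prefix ciphertext $\ct_S$ (which is then signed and wrapped under $\htk$) and that the separate MAC-and-encrypt of SIGMA-I is replaced by a single authenticated-encryption scheme. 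Since the prefix-encryption secret keys $\sk_C$ are \emph{long-term} secrets rather than per-session state, they are exposed only through a $\corrupt$ query, which is consistent with the forward-secrecy treatment already present in the SIGMA-I analysis.

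Following Definition~\ref{def:key-exchange-sec}, I would establish the two required properties separately. For the consistency property (item~1), I would argue that whenever two uncorrupted parties complete matching sessions, they agree on the peer identity and output the same $\atk$. Both sides compute the key from the same $(g^x,g^y,g^{xy})$, so it suffices to show that the adversary cannot force a mismatch in the authenticated identities. This follows from ciphertext integrity of the authenticated-encryption scheme---an adversary ignorant of $\htk$ cannot produce a fresh valid handshake ciphertext---together with existential unforgeability of the signature scheme, which guarantees that the signed tuple $(\sid, \ct_S, g^x, g^y)$ (respectively $(\sid, \id_C, g^x, g^y)$) could only have been produced by the honest peer bound to that identity. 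Because $\ct_S$ is transmitted inside the handshake, both parties verify the signature against the same value, so the prefix-encryption modification does not disturb this binding.

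For the key-secrecy property (item~2), I would use a hybrid argument. A simulator $\calB$ generates the prefix-encryption master keys $(\mpk,\msk)$ itself and extracts a key $\sk_i$ for each party's identity, so it can answer every prefix-related query and wrap or unwrap certificates on its own; this is what formalizes the claim that the prefix layer grants the adversary no additional key-distinguishing power. Given the $\Test$ session $(P,\sid)$ and its (unexposed) matching session, I would embed a Hash-DH instance into their ephemeral shares $g^x,g^y$ and replace $\atk$ by a uniformly random key, arguing indistinguishability under the Hash-DH assumption in $\bbG$. Admissibility guarantees that neither the test session nor its match is exposed, so the corresponding ephemeral exponent stays hidden and the reduction never needs to answer a $\statereveal$ or $\keyreveal$ that would require it; forward secrecy is inherited because the ephemeral exponents are erased at session completion, so a later $\corrupt$ reveals only long-term signing and prefix keys.

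The hard part will be the authentication argument underlying both properties in the \emph{post-specified} peer setting: ruling out the event that the adversary causes the test session to complete with a peer $Q$ that never genuinely participated, thereby letting the adversary itself compute (and hence distinguish) $\atk$. As in the SIGMA-I proof, this case is handled by reducing to signature unforgeability---the completed session can only accept upon receiving a valid signature over $(\sid, \cdot, g^x, g^y)$ under $Q$'s key---but I would need to re-verify that this binding survives the two modifications. It does: the session identifier and both ephemeral shares still appear inside every signature, and $\ct_S$ adds only a publicly visible component that is itself covered by the signature, so the standard no-mismatch argument carries over essentially verbatim.
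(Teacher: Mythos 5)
Your proposal takes essentially the same route as the paper: the paper's proof simply observes that the only two changes from SIGMA-I (authenticated encryption in place of separate encrypt-and-MAC, and an encrypted certificate chain in place of a plaintext one) preserve the authenticated binding between identity and verification key, and then invokes the SIGMA-I security proof of Canetti--Krawczyk directly. Your additional unpacking of that cited proof (consistency via signature unforgeability and ciphertext integrity, key secrecy via a Hash-DH embedding in the test session) is consistent with it and correct, just more explicit than the paper chooses to be.
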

\begin{proof}
  As noted in Section~\ref{sec:private-mutual-auth}, the two differences
  between our protocol and the SIGMA-I protocol from~\cite{CK02} is that we
  use authenticated encryption to encrypt and authenticate the handshake
  messages rather than separate encryption and MAC schemes. The other
  difference is that in our protocol, the server substitutes a prefix-based
  encryption of its certificate chain for its actual certificate chain in
  the response message to the client.

  In the original SIGMA-I protocol, the requirement on the certificate chain is that it
  provides an authenticated binding between an identity and its public
  verification key. An encrypted certificate chain provides the exact same level of
  assurance. After all, an honest client only accepts the server's credential
  if the client successfully decrypts the encrypted certificate chain and verifies the
  underlying certificate chain. Therefore, we can substitute encrypted
  certificate chain into the SIGMA-I protocol with no loss in security. Security of our
  resulting protocol then follows directly from security of the original SIGMA-I
  protocol~\cite[\S 5.3]{CK02}.
\end{proof}

\para{Privacy.} We now show that our key-exchange protocol is private. In our analysis,
we construct the encryption scheme used to encrypt certificates
from an IBE scheme as described in Section~\ref{sec:crypto-primitives}. Privacy
for the client's identity then reduces to the security of the underlying key-exchange
(since the client encrypts its identity under a handshake secret key),
while privacy for the server's identity reduces to CCA-security of
the underlying IBE scheme (since the server encrypts its identity using a
prefix encryption scheme constructed from IBE).
\begin{theorem}
  \label{thm:mutual-auth-proto-private}
  The protocol in Figure~\ref{fig:auth-proto-2} is private
  (Definition~\ref{def:key-exchange-privacy}) assuming the IBE scheme used to
  construct the prefix encryption scheme is $\indidcca$-secure, the Hash-DH
  assumption holds in $\bbG$, and the underlying cryptographic primitives (the
  signature scheme and the authenticated encryption scheme) are
  secure.
\end{theorem}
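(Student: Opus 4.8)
The plan is to exhibit a sequence of hybrid experiments interpolating between $\expt_0$ and $\expt_1$, neutralizing one at a time every channel through which the test party $P_\ptest$'s identity can leak. There are exactly two such channels. When $P_\ptest$ acts as a server, its certificate chain appears inside $\ct_S = \peenc(\pi_S, \id_S)$ and is also signed in $\sig_S(\sid, \ct_S, g^x, g^y)$, with the pair transmitted under authenticated encryption with key $\htk = \kdf(g^x, g^y, g^{xy})$. When $P_\ptest$ acts as a client, its certificate chain $\id_C$ appears only in the finish message $\set{\id_C, \sig_C(\sid, \id_C, g^x, g^y)}_\htk$, protected solely by authenticated encryption under $\htk$. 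The plan is to hide the client channel using Hash-DH together with semantic security of the authenticated encryption scheme, to hide the server channel using $\indidcca$-security of the IBE scheme (with a Hash-DH fallback), and then to argue that once both certificate chains have been replaced by fixed dummies, the remaining view is distributed identically in the two experiments.

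First I would handle the client channel. An honest $P_\ptest$-as-client emits its finish message only after verifying a valid server signature $\sig_S$ under an identity satisfying its policy $\pi_C$; since $\pi_C$ is a policy associated with $P_\ptest$, the admissibility requirement that no corrupted party satisfies a test-party policy guarantees that no corrupted identity satisfies $\pi_C$, so by unforgeability of the signature scheme such a session can complete (rather than abort before the finish message) only if an honest, matching server session actually produced $\sig_S$. For every completing client session, the remaining admissibility requirements then guarantee that neither $x$ (no $\statereveal$ on a completed test-party session) nor $y$ (the peer is uncorrupted and not state-revealed at completion) is available to the adversary, so Hash-DH lets me replace $\htk$ by a uniformly random key, after which semantic security of the authenticated encryption scheme lets me replace $\id_C$ and $\sig_C$ by an encryption of a fixed dummy. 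Sessions that abort never emit a finish message and so reveal nothing about $\id_C$.

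Next I would handle the server channel, splitting the $P_\ptest$-as-server sessions into two regimes according to whether $\htk$ is protected. When both DH exponents are hidden (a matching honest, unexposed client), the same Hash-DH-then-authenticated-encryption argument lets me replace the entire response payload $(\ct_S, \sig_S)$ by an encryption of a dummy, hiding $\id_S$ outright. In the complementary regime, where the peer is adversarial and may know $\htk$, I reduce to $\indidcca$-security with challenge identity $\pi_S$: the reduction obtains $\ct_S$ as the IBE challenge ciphertext (of either the real certificate chain or a dummy), signs $(\sid, \ct_S, g^x, g^y)$ under $P_\ptest$'s signing key, and re-encrypts under the $\htk$ it recomputes from the received $g^x$ and its own $y$. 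The requirement that no corrupted party satisfies a test-party policy ensures that no corrupted identity has $\pi_S$ as a prefix, so the adversary never obtains the IBE key for the challenge identity and the extraction restriction is respected; honest parties satisfying $\pi_S$ are never corrupted, so I never hand out a key containing the forbidden $\pi_S$ component and can route all of their decryptions of non-challenge ciphertexts through the IBE decryption oracle. A hybrid over the polynomially many distinct policies $P_\ptest$ uses reduces the multi-ciphertext case to the single-challenge $\indidcca$ game.

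The main obstacle is precisely the bookkeeping in this IBE reduction against an active attacker: I must simulate any honest party that legitimately decrypts the challenge $\ct_S$ (for instance, an honest client whose state was revealed, which by admissibility forces the matching server session to abort) \emph{without} ever decrypting the challenge. Here I would lean on the admissibility requirement that $\sind{\id_\ptest}{0}$ and $\sind{\id_\ptest}{1}$ satisfy exactly the same session policies: such a party's subsequent branching---whether it accepts the decrypted identity against its own policy and whether it proceeds---is then independent of the hidden plaintext and can be simulated from the admissibility structure alone. I must also confirm that the surviving signatures do not themselves leak the identity; this holds because the verification-key binding lives inside the now-dummy $\ct_S$ (or dummy $\id_C$), and the signing keypair and the IBE identity key of $P_\ptest$ are sampled independently of the name string $\sind{\id_\ptest}{b}$. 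Consequently, in the final hybrid the adversary's view depends on $b$ only through freshly sampled keys with identical distributions and through $P_\ptest$'s decryption behavior, which the same-policy requirement makes identical across $b$; the two experiments are therefore statistically identical, completing the chain from $\expt_0$ to $\expt_1$.
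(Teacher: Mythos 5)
Your proposal follows essentially the same route as the paper's proof: a hybrid chain from $\expt_0$ to $\expt_1$ that first neutralizes the client's finish message via Hash-DH plus semantic security of the authenticated encryption (exactly the paper's first claim, which invokes the SIGMA-I confidentiality argument for $\htk$ on unexposed completed sessions), and then switches the server's prefix-encrypted certificate via an $\indidcca$ reduction that routes honest clients' decryptions through the IBE decryption oracle and uses the same-policy admissibility requirement to simulate processing of the challenge ciphertext without decrypting it. Two small points where your write-up diverges and could be tightened: (i) your split of the $P_\ptest$-as-server sessions into an ``honest unexposed peer'' regime (Hash-DH) and an ``adversarial peer'' regime (IBE) conditions on information that is determined adaptively \emph{after} the response message must be constructed, so the hybrids are not well-defined as stated; the paper avoids this by applying the $\indidcca$ argument uniformly to every responder session of $P_\ptest$ (which your own reduction already supports), and you should do the same. (ii) Hybridizing ``over the polynomially many distinct policies'' gives only one challenge ciphertext per policy, which is insufficient when the same policy is reused across sessions with fresh encryption randomness; the paper instead hybridizes over the $q$ responder sessions, one single-challenge $\indidcca$ step per session. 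Your use of dummy identities followed by a final identical-distribution argument (rather than switching $\sind{\id_\ptest}{0}$ directly to $\sind{\id_\ptest}{1}$) is a harmless stylistic variant, since the verification key and IBE key material are bound identically to both test identities.
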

\begin{proof}
  We proceed using a hybrid argument. First, we define a simulator that will
  simulate the role of the challenger for the adversary $\calA$ in the protocol
  execution environment. Then, we specify a sequence of hybrid experiments where
  we modify the behavior of the simulator.

  Specifically, the simulator $\calS$ takes as input the number of parties
  $n$, the security parameter $\lambda$, and the adversary $\calA$, and plays
  the role of the challenger in the protocol execution experiment with
  $\calA$. At the beginning of the simulation, $\calA$ submits a tuple of
  identities $(\id_1, \ldots, \id_n)$ and two test identities
  $\sind{\id_\ptest}{0}$, $\sind{\id_\ptest}{1}$. During the protocol
  execution, the simulator chooses the parameters for each of the parties,
  and handles the responses to the adversary's queries according to the
  specifications of the particular hybrid experiment. We now define our
  sequence of hybrid experiments:
  \begin{citemize}
    \item \textbf{Hybrid $\hyb_0$:} This is the real experiment $\expt_0$ (i.e.,
    the simulator responds to the adversary's queries as described in
    $\expt_0$).

    \item \textbf{Hybrid $\hyb_1$:} Same as $\hyb_0$, except whenever the test
    party $P_\ptest$ is activated to process a server's response message, if
    all of the validation checks pass, the simulator uses the identity
    $\sind{\id_\ptest}{1}$ in place of $\sind{\id_\ptest}{0}$ when constructing
    $P_\ptest$'s finish message.

    \item \textbf{Hybrid $\hyb_2$:} This is the real experiment $\expt_1$.
  \end{citemize}
  We now show that each consecutive pair of hybrid experiments is computationally
  indistinguishable. This suffices to show that $\expt_0$ and $\expt_1$ are
  computationally indistinguishable, and correspondingly, that the protocol in
  Figure~\ref{fig:auth-proto-2} provides privacy.

  \begin{claim}
    \label{claim:priv-mutual-auth-0-1}
    Hybrids $\hyb_0$ and $\hyb_1$ are computationally indistinguishable if
    the Hash-DH assumption holds in $\bbG$ and the underlying cryptographic primitives
    are secure.
  \end{claim}
  \begin{proof}
    Observe that $\hyb_0$ and $\hyb_1$ are identical experiments, except in
    $\hyb_1$, the finish messages sent by the test party $P_\ptest$ contain
    an encryption of the identity $\sind{\id_T}{1}$ under the handshake
    encryption key $\htk$ instead of $\sind{\id_T}{0}$. At a high level then,
    the claim follows from the fact that the SIGMA-I protocol ensures
    confidentiality of the client's finish message against active
    attackers~\cite[\S5.3]{CK02}.

    Specifically, Canetti and Krawczyk show that for any complete session $(P,
    s, Q)$ that is not exposed by the adversary (that is, neither this session
    nor its matching session $(Q, \sid)$ has been corrupted by a $\statereveal$
    or a $\corrupt$ query), breaking the semantic security of the information
    encrypted under $\htk$ in the finish message of session $(P, \sid)$ implies a
    distinguisher between $\htk$ and a random encryption key. This then
    implies an attack either on the handshake security of the protocol or one
    of its underlying cryptographic primitives.

    In our privacy model, the admissibility requirement stipulates that the
    test party $P_\ptest$ is never corrupted during the protocol execution.
    Suppose that in the protocol execution experiment, the adversary activates
    $P_\ptest$ to respond to a server's message in a session $(P_\ptest, \sid)$
    and $P_\ptest$ responds with a finish message (a ciphertext under the
    handshake key $\htk$). Since $P_\ptest$ is honest, if it sends the finish
    message, then it also completes the session setup by outputting the tuple
    $(P_\ptest, \sid, Q)$. Next, we use the fact that if $(P_\ptest, \sid)$
    completes with peer $Q$, by the admissibility requirement, the adversary
    must not have made a $\statereveal$ query on $(Q, \sid)$ nor was $Q$ corrupt
    before the completion of $(P_\ptest, \sid)$. Thus, we can directly invoke the
    security properties of the SIGMA-I protocol and argue that the handshake
    key $\htk$ used by $P_\ptest$ to encrypt the finish message in session
    $(P_\ptest, \sid)$ is computationally indistinguishable from a uniformly
    random key. The claim then follows by semantic security of the
    authenticated encryption scheme (strictly speaking, we require a hybrid
    argument over each session where $P_\ptest$ sends a finish message).
  \end{proof}

  \begin{claim}
    \label{claim:priv-mutual-auth-1-2}
    Hybrid $\hyb_1$ and $\hyb_2$ are computationally indistinguishable if
    the IBE scheme used to construct the prefix encryption scheme is
    $\indidcca$-secure.
  \end{claim}
  \begin{proof}
    Let $q$ be an upper bound on the number of sessions where $\calA$
    activates the test party $P_\ptest$ as the responder (the server) in the
    protocol execution experiment. We define a sequence of $q + 1$ intermediate
    hybrids $\hyb_{1,0}, \ldots, \hyb_{1, q}$ where hybrid experiment $\hyb_{1,i}$
    is defined as follows:
    \begin{citemize}
      \item \textbf{Hybrid $\hyb_{1,i}$:} Same as $\hyb_1$, except the first $i$
      times that $P_\ptest$ is activated as a responder, $P_\ptest$
      substitutes the identity $\sind{\id_\ptest}{1}$ for $\sind{\id_\ptest}{0}$
      in its response message. In all subsequent sessions where $P_\ptest$
      is activated as a responder, it uses the identity $\sind{\id_\ptest}{0}$.
    \end{citemize}
    By construction, hybrid experiment $\hyb_1$ is identical to $\hyb_{1,0}$
    and $\hyb_2$ is identical to $\hyb_{1, q}$. We now show that for all $i
    \in [q]$, hybrid $\hyb_{1, i-1}$ is computationally indistinguishable
    from hybrid $\hyb_{1, i}$, assuming that the IBE scheme is $\indidcca$-secure.

    \begin{claim}
      For all $i \in [q]$, hybrids $\hyb_{1, i - 1}$ and $\hyb_{1, i}$ are computationally
      indistinguishable assuming that the IBE scheme is $\indidcca$-secure.
    \end{claim}
    \begin{proof}
    Suppose $\calA$ is an adversary that is able to distinguish
    $\hyb_{1,i-1}$ from $\hyb_{1,i}$. We show how to use $\calA$ to build an
    adversary $\calB$ for the $\indidcca$-security. At the beginning of the
    $\indidcca$ game, algorithm $\calB$ is given the public parameters
    $\mpk$ for the IBE scheme. It starts running adversary $\calA$ and obtains
    a tuple of identities $(\id_1, \ldots, \id_n)$ and test identities
    $\sind{\id_\ptest}{0}$, $\sind{\id_\ptest}{1}$. Algorithm $\calB$
    then simulates the setup procedure in $\hyb_1$. In particular, for each
    $i \in [n]$, it chooses a signing and verification key for each party $P_i$.
    In addition, it also issues a certificate binding the identity $\id_i$
    to its associated signature verification key. Next, it prepares two certificates,
    one binding $\sind{\id_\ptest}{0}$ to the verification key
    for $\id_\ptest$, and another binding $\sind{\id_\ptest}{1}$ to the verification
    key for $\id_\ptest$. Note that $\calB$ does not make any queries to the IBE
    extraction oracle, and thus, does not
    associate any IBE identity keys with each party. We will describe how algorithm
    $\calB$ is able to answer the queries consistently in the simulation in
    spite of this fact. Finally, algorithm $\calB$ gives $\mpk$ to $\calA$
    and the protocol execution phase begins. Algorithm $\calB$
    simulates the response to each of adversary $\calA$'s queries as follows:
    \begin{citemize}
      \item \textbf{Client initialization queries.} These are handled
      exactly as in $\hyb_1$ and $\hyb_2$.

      \item \textbf{Server response queries.} When adversary $\calA$
      activates a party $P$ to respond to a client initialization query with some
      policy $\pi$, if $P \ne P_\ptest$, algorithm $\calB$ simulates the response
      message as in $\hyb_1$. This is possible
      because $\calB$ chooses all of the parameters that appears in the
      response message and also has the master public key $\mpk$ of the
      IBE scheme. If $P = P_\ptest$, then let $\ell$ be the number of times
      $\calA$ has already activated $P_\ptest$ to respond to a client initialization
      message in the protocol execution thus far. Then, algorithm $\calB$ proceeds
      as follows:
      \begin{citemize}
        \item If $\ell < i - 1$, then $\calB$ constructs the response message
        as described in $\hyb_2$, that is, using the identity
        $\sind{\id_\ptest}{1}$ in the response message.

        \item If $\ell \ge i$, then $\calB$ constructs the response message as
        described in $\hyb_1$, that is, using the identity $\sind{\id_\ptest}{0}$
        in the response message.

        \item If $\ell = i - 1$, then $\calB$ submits the tuple
        $(\sind{\id_\ptest}{0}, \sind{\id_\ptest}{1})$ with the policy $\pi$
        as its challenge identity in the IBE security game. It receives a ciphertext
        $\hct_S$ from the challenger. Algorithm $\calB$ constructs the rest of
        $P_\ptest$'s response message as in $\hyb_1$ and $\hyb_2$ using $\hct_S$ as
        its encrypted certificate chain.
      \end{citemize}

      \item \textbf{Client response queries.} When the adversary $\calA$ delivers
      a message to a client session $(P, \sid)$, $\calB$ responds as follows:
      \begin{cenumerate}
        \item Let $\pi_S$ be the sender's policy associated with the message.
        If $P \ne P_\ptest$ and $\id_P$ does not satisfy $\pi_S$, $\calB$
        aborts the session. If $P = P_\ptest$ and $\sind{\id_\ptest}{0}$ does
        not satisfy $\pi_S$, $\calB$ aborts the session. Recall that our
        admissibility requirement specifies that either $\sind{\id_\ptest}{0}$
        and $\sind{\id_\ptest}{1}$ both satisfy $\pi_S$, or neither satisfy
        $\pi_S$.

        \item $\calB$ parses the adversary's message as $(\sid, g^y, \ct)$.
        It checks that there is a local session $(P, \sid)$, and aborts the
        protocol if not. Next, it derives the
        keys $(\htk, \atk) = \kdf(g^x, g^y, g^{xy})$ where $x$ is the ephemeral DH exponent
        it chose for session $(P, \sid)$. It decrypts $\ct$ with $\htk$ and parses
        the result as a tuple ($\ct_S, \sigma_S)$, again aborting the protocol
        if decryption fails or the resulting message has the wrong form. Finally,
        $\calB$ checks that $\pi_S$ is the actual policy associated with $\ct_S$,
        aborting the protocol if not.

        \item If $\calB$ is still in the pre-challenge phase or if
        $\calB$ is in the post-challenge phase and either $\ct_S \ne \hct_S$
        or $\pi_S \ne \bar \pi$ (where $\bar \pi$ is the identity $\calB$ submitted
        to the IBE challenger and $\hct_S$ is the ciphertext $\calB$ received
        from the IBE challenger), then $\calB$
        queries the IBE decryption oracle on $\ct_S$ with identity
        $\pi_S$ to obtain a decrypted identity $\id_Q$. $\calB$ performs
        the remaining checks as would be done in hybrids $\hyb_1$ and $\hyb_2$,
        and constructs the client's finish message in the same manner.
        \\ \\
        If $\calB$ is in the post-challenge phase, $\ct_S = \hct_S$, and
        $\pi_S = \bar \pi$, then $\calB$ aborts the session if the identity
        $\sind{\id_\ptest}{0}$ does not satisfies the policy associated with
        session $(P, \sid)$. Recall again that under our admissibility
        requirement, either $\sind{\id_\ptest}{0}$ and $\sind{\id_\ptest}{1}$
        both satisfy the policy associated with session $(P, \sid)$, or
        neither satisfy the policy. If $\calB$ does not abort, then $\calB$
        verifies that the signature $\sigma_S$ is a valid signature on $(\sid,
        \ct_S, g^x, g^y)$ where $g^x, g^y$ are the session id and ephemeral DH
        shares associated with session $(P, \sid)$. If the signature verifies,
        then $\calB$ constructs $P$'s finish message as in $\hyb_1$ and
        $\hyb_2$.
      \end{cenumerate}

      \item \textbf{Server finish queries.} These are handled
      exactly as in $\hyb_1$ and $\hyb_2$.

      \item \textbf{$\statereveal$ and $\keyreveal$ queries.} These are handled
      exactly as in $\hyb_1$ and $\hyb_2$.

      \item \textbf{$\corrupt$ queries.} If $\calA$ asks to corrupt a party $P
      \ne P_\ptest$ (since $\calA$ is admissible), $\calB$ queries the IBE extraction
      oracle for the secret keys for $\id_P$ and each prefix of $\id_P$. It gives
      these secret keys to $\calA$, the long-term signing key associated with $\id_P$,
      and any ephemeral secrets for incomplete sessions currently in the local storage
      of $P$.
    \end{citemize}
    At the end of the game, $\calA$ outputs a guess $b'$. Algorithm $\calB$
    echoes this guess.

    To complete the proof, we first show that $\calB$ is an admissible IBE
    adversary in the $\indidcca$-security game. By construction, $\calB$ never
    requests the challenger to decrypt the challenge ciphertext under the
    challenge identity. It thus suffices to argue that $\calB$ never queries
    the extraction oracle on the challenge identity $\bar \pi$. This follows
    by admissibility of $\calA$. In the above specification, algorithm $\calB$
    only makes extraction queries when the adversary corrupts a party. By
    the admissibility requirement, none of the corrupted parties can satisfy
    any of the policies the adversary associated with $P_\ptest$, which
    in particular, includes $\bar \pi$. Thus, in the simulation, $\calB$
    never needs to issue an extraction query for the identity $\bar \pi$,
    and so, $\calB$ is admissible.

    Next, we show that if $\calB$ receives an encryption of
    $\sind{\id_\ptest}{0}$ from the IBE challenger in the challenge phase,
    then it has perfectly simulated hybrid $\hyb_{1,i-1}$ for $\calA$ and if
    it receives an encryption of $\sind{\id_\ptest}{1}$ from the IBE
    challenger, it has perfectly simulated $\hyb_{1, i}$ for $\calA$. It
    is easy to verify that $\calB$ correctly simulates the client initialization,
    server finish, and exposure queries.
    For the server response queries, $\calB$ constructs the encrypted identity
    as prescribed by $\hyb_{1,i-1}$ when the IBE challenger replies with an
    encryption of $\sind{\id_\ptest}{0}$. The encrypted identity is constructed
    as prescribed by $\hyb_{1,i}$ when the IBE challenger replies with an encryption
    of $\sind{\id_\ptest}{1}$.

    Finally, we argue that $\calB$ correctly simulates the client response queries.
    The only non-trivial case is when the adversary submits $\ct_S = \hct_S$ and
    $\pi_S = \bar \pi$. In all other cases, the behavior of $\calB$ is identical
    to that in the real scheme (unchanged between $\hyb_{1,i-1}$ and $\hyb_{1,i}$).
    In the case where $\ct_S = \hct_S$ and $\pi_S = \bar \pi$, then by construction,
    $\ct_S$ is either a valid encryption of the identity $\sind{\id_\ptest}{0}$
    or of $\sind{\id_\ptest}{1}$. By admissibility, the client's policy in session
    $(P, \sid)$ either accepts both $\sind{\id_\ptest}{0}$ and $\sind{\id_\ptest}{1}$,
    or rejects both. Thus, in the real experiment, the client's decision to
    abort the session or continue the handshake is {\em independent} of whether the
    server's identity was $\sind{\id_\ptest}{0}$ or $\sind{\id_\ptest}{1}$.
    The remainder of the response processing is identical to that in the real
    scheme, so we conclude that $\calB$ correctly simulates the client's behavior.
    The claim follows.
  \end{proof}
  \noindent Since each pair of hybrids $\hyb_{1,i-1}$ and $\hyb_{1,i}$ for all $i \in [q]$
  are computationally indistinguishable under the $\indidcca$-security of the
  IBE scheme, we conclude that $\hyb_1$ and $\hyb_2$ are computationally
  indistinguishable.
  \end{proof}

  \noindent Theorem~\ref{thm:mutual-auth-proto-private} now
  follows from Claims~\ref{claim:priv-mutual-auth-0-1}
  and~\ref{claim:priv-mutual-auth-1-2}.
\end{proof}

\section{Analysis of Private Service Discovery Protocol}
\label{app:discovery-protocol-security}

\begin{figure*}
\footnotesize
  \begin{framed}
    Let $\bbG$ be a group where the Hash-DH and Strong-DH assumptions hold and let
    $g$ be a generator for $\bbG$. Let $H_1, H_2$ be hash functions. Let
    $\prg$ be a secure pseudorandom generator (PRG) with seed-space $\calK$. Let $\Extract$ be a key-extraction
    algorithm with output space $\calK$. The key-extraction algorithm
    can be instantiated with the HMAC-based key derivation function~\cite{KE10} like that used
    in the OPTLS protocol~\cite{KW15}.
    \\ \\
    \textbf{Server's broadcast message:}
      \[ \bid, \id_S, g^s, \sig_S(\bid, \id_S, g^s) \]

    \textbf{Protocol messages:}
    \[ \text{Client} \hspace{32em} \text{Server} \]
    \[ \protorightlong{\bid, \sid, g^x, \set{\id_S, \id_C, \sig_C(\bid, \sid, \id_S, \id_C, g^s, g^x)}_\htk} \]
    \[ \protoleftlong{\bid, \sid, g^y, \set{\bid, \sid, \id_S, \id_C, g^s, g^x, g^y}_{\htk'}} \]

    \textbf{Broadcast description:} When a server $S$ is activated to broadcast a
    discovery message with broadcast id $\bid$, it erases any existing
    semi-static state (from a previous broadcast). It chooses $s \getsr \Z_p$
    and outputs the message $(\bid, \id_S, g^s, \sig_S(\bid, \id_S, g^s))$.
    The server stores the constant $s$ together with its current broadcast id
    $\bid$. Without loss of generality, we assume that $S$ never reuses a broadcast
    id (e.g., the broadcast ids must be in ascending order).
    \\ \\
    \textbf{Protocol actions:}
    \begin{cenumerate}
      \item When a party $C$ is activated to initialize a session $(\bid,
      \sid)$ with broadcast message $(\bid', \id_{S'}, g^{s'}, \sigma')$, it
      first checks that $\bid = \bid'$ and that no previous session $(C,
      \bid, \sid)$ has been initialized (aborting the session if the checks
      fail). If the checks succeed, it looks up the verification key for the party $S'$
      identified by $\id_{S'}$ and checks that $\sigma'$ is a valid signature on the tuple
      $(\bid, \id_{S'}, g^{s'})$ under the public key of $\id_{S'}$. If any of
      the checks fail, $C$ aborts the session. Otherwise, $C$ chooses
      a random exponent $x \getsr \Z_p$, and computes $k = H_1(g^{s'}, g^x,
      g^{s'x})$, $(\htk, \htk', \exk) = \prg(k)$. It replies to $S'$ with the
      message $(\bid, \sid, g^x, \{\id_{S'}, \id_C,
      \sig_C(\bid, \sid, \id_{S'}, \id_C, g^{s'}, g^x)\}_\htk)$. In addition, it
      stores its ephemeral exponent $x$, the server's identity $\id_{S'}$ and
      the server's DH share $g^{s'}$ in the local state of session $(C,
      \bid, \sid)$.

      \item When a party $S$ is activated as a responder with session id $\sid$
      and a message of the form $(\bid', \sid', g^{x'}, \ct')$, it looks up
      its current broadcast id $\bid$ and semi-static exponent $s$. Then, it
      checks that $\bid' = \bid$, $\sid' = \sid$, and no previous session
      $(S, \bid, \sid)$ has been initialized (aborting the session if the
      checks fail or there is no broadcast state). It then computes
      $k = H_1(g^{s}, g^{x'}, g^{sx'})$, $(\htk, \htk', \exk) = \prg(k)$,
      and tries to decrypt $\ct'$
      with $\htk$. If decryption succeeds (does not output $\perp$), $S$
      parses the output as $(\id_{S'}, \id_{C'}, \sigma')$. It checks that
      $\id_{S'} = \id_S$, and looks up the verification key for the party $C'$ identified
      by $\id_{C'}$. It
      verifies that $\sigma'$ is a valid signature on $(\bid, \sid,
      \id_{S}, \id_{C'}, g^s, g^{x'})$ under the public key of $C'$. If
      any of these checks fail, $S$ aborts the protocol. Otherwise, it
      chooses an ephemeral exponent $y \getsr \Z_p$ and computes
      $\atk = \Extract(\exk, H_2(g^{x'}, g^y, g^{x'y}))$.
      It replies to $C'$ with the message $(\bid, \sid,
      g^y, \{ \bid, \sid, \id_S, \id_{C'}, g^s, g^x, g^y \}_{\htk'})$. It also
      outputs the public tuple $(S, \bid, \sid, C')$ and the secret
      session key $\atk$.

      \item Upon receiving a response message of the form $(\bid', \sid',
      g^{y'}, \ct')$, the party $C$ checks whether there is a session
      $(C, \bid', \sid')$. If not, $C$ ignores the message. Otherwise,
      it loads the ephemeral constant $x$, the server's identity $\id_{S'}$,
      and the server's DH share $g^{s'}$ from its local storage, computes
      $k = H_1(g^{s'}, g^{x}, g^{s'x})$, $(\htk, \htk', \exk) = \prg(k)$.
      It tries to decrypt $\ct'$ with $\htk'$. If
      $\ct'$ successfully decrypts to the tuple
      $(\bid', \sid', \id_{S'}, \id_C, g^{s'}, g^x, g^{y'})$,
      where $x$ is the ephemeral exponent in the local state of session $(C, \bid, \sid)$,
      then $C$ outputs the public tuple $(C, \bid', \sid',
      S')$ and the secret session key $\atk = \Extract(\exk, H_2(g^{x},
      g^{y'}, g^{xy'}))$. Otherwise, $C$ aborts the protocol.
    \end{cenumerate}
  \end{framed}

  \caption{Formal specification of the private discovery protocol. We present the
  ``non-private'' version of the service discovery protocol without the
  prefix-based encryption. In the private version, the server's broadcast is
  encrypted under a prefix encryption scheme to its policy.}
  \label{fig:formal-discovery-protocol}
\end{figure*}

In this section, we formally argue the security of our private service
discovery protocol from Section~\ref{sec:discovery-protocol} under
Definition~\ref{def:service-discovery-sec}. First, we give a more formal
specification of the protocol in Figure~\ref{fig:formal-discovery-protocol}. In the
subsequent sections, we demonstrate our discovery protocol is secure in our
extended Canetti-Krawczyk key-exchange model
(Theorem~\ref{thm:disc-protocol-secure},
Appendix~\ref{app:disc-key-exchange-sec}), that it provides 0-RTT security
(Theorem~\ref{thm:one-pass-sec}, Appendix~\ref{app:0-rtt-security}), and that it provides privacy for both the
client and the server (Theorem~\ref{thm:disc-privacy}, Appendix~\ref{app:discovery-protocol-privacy}).

\subsection{Key-Exchange Security}
\label{app:disc-key-exchange-sec}
First, we show that the protocol in Figure~\ref{fig:formal-discovery-protocol}
is a secure key-exchange protocol in the service-discovery extension to the
Canetti-Krawczyk key-exchange model from Appendix~\ref{app:service-disc-model}.
Specifically, we prove the following theorem:

\begin{theorem}
  \label{thm:disc-protocol-secure}
  The protocol in Figure~\ref{fig:formal-discovery-protocol} is a secure
  service discovery protocol (Definition~\ref{def:service-discovery-sec}) in
  the random oracle model, assuming the Strong-DH assumption and the Hash-DH
  assumption hold in $\bbG$, as well as the
  security of the underlying cryptographic primitives (the signature scheme,
  the PRG, the authenticated encryption scheme, and the extraction algorithm).
\end{theorem}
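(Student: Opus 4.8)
The plan is to verify the two requirements of Definition~\ref{def:service-discovery-sec} separately, following the template of the Canetti--Krawczyk and OPTLS analyses. I would first establish session-key agreement (the matching property). The claim is that whenever an honest $P$ completes with public output $(P, \bid, \sid, Q)$ and an honest $Q$ completes session $(Q, \bid, \sid)$, the two outputs are consistent. This reduces to unforgeability of the signature scheme together with correctness of the PRG, the authenticated encryption scheme, and $\Extract$: the broadcast signature $\sig_S(\bid, \id_S, g^s)$ and the client signature $\sig_C(\bid, \sid, \id_S, \id_C, g^s, g^x)$, together with the server's authenticated response, bind the pair $(\bid, \sid)$ to the three shares $g^s, g^x, g^y$ and to both identities. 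By EUF-CMA security the two honest sessions must therefore agree on $(g^s, g^x, g^y)$ and on $(\id_S, \id_C)$, so they compute the same $k = H_1(g^s, g^x, g^{sx})$, the same $(\htk, \htk', \exk) = \prg(k)$, and the same $\atk = \Extract(\exk, H_2(g^x, g^y, g^{xy}))$.

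For the Test game (property~2) I would use a sequence of games. After guessing the test session together with its matching session, if one exists, incurring a polynomial loss, the first substantive hop aborts whenever $\calA$ produces a valid signature, under an uncorrupted identity, on a tuple that no honest party ever signed; this hop is justified by EUF-CMA security. From that point on, any session completing at an honest party with an uncorrupted peer has a matching session, which pins down the provenance of the shares $g^s, g^x, g^y$ in the test session.

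The test session key $\atk = \Extract(\exk, H_2(g^x, g^y, g^{xy}))$ draws on the semi-static value $g^{sx}$ (through $\exk$) and the ephemeral value $g^{xy}$ (through $H_2$). The admissibility constraints of Appendix~\ref{app:service-disc-model} guarantee that for the test session at least one of these is hidden: for an admissible client test session $x$ is never revealed; the server erases $y$ on completion; and the server admissibility rule forbids exposing both $y$ (via $\statereveal$) and $s$ (via $\broadcastreveal$ or $\corrupt$). I would therefore split into the case where $g^{xy}$ is hidden and the case where $g^{sx}$ is hidden. In the first case I would embed a Strong-DH instance $(g^A, g^B)$ into the test session's shares $g^x$ and $g^y$; if $\calA$ never queries $H_2$ at $(g^x, g^y, g^{xy})$ then that oracle value is uniform and unknown, so $\atk$ is uniform by security of $\Extract$, while any such query yields a Strong-DH solution. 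In the second case I would embed the challenge into $g^s$ and $g^x$ and invoke Hash-DH to argue that $k$ is uniform, so that $(\htk, \htk', \exk) = \prg(k)$ is pseudorandom by PRG security and $\atk = \Extract(\exk, \cdot)$ is pseudorandom by security of $\Extract$ as a keyed primitive. Replacing the test key by an independent uniform value in either case drives $\calA$'s Test advantage to $0$, and summing the hop losses gives the bound.

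The crux of the argument, and the reason the proof needs Strong-DH rather than plain CDH, is the random-oracle bookkeeping in these reductions. The simulator embeds challenge shares whose discrete logs it does not know, yet honest parties must still produce $H_1$- and $H_2$-outputs consistent with the value $\calA$ would obtain were it to compute the underlying DH secret: for instance the server, on receiving an adversarial first flow, must form $k = H_1(g^s, g^x, g^{sx})$ even when $g^s$ carries an embedded challenge and $x$ is adversarial. I would resolve this using the Strong-DH verification oracle $\calO(g, g^A, \cdot, \cdot)$, which lets the simulator detect precisely those adversarial oracle queries containing a correct DH tuple and patch the two random oracles to agree with the programmed session values. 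The most delicate point is that a single broadcast's semi-static share $g^s$ is reused across many sessions, so the programming of $H_1$ must remain consistent across all sessions sharing that broadcast; I expect this consistency, rather than any individual reduction, to be the main technical obstacle.
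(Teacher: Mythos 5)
Your plan is essentially the paper's: a selective guess of the test session, a signature-based lemma pinning down matching sessions, a two-case split according to which of the semi-static and ephemeral DH secrets survives exposure, random-oracle programming with the Strong-DH verification oracle to keep $H_1$ consistent across all sessions sharing the broadcast, and then PRG/extractor steps to randomize the session key. Two points deserve correction. First, your case-split paragraph assigns the assumptions backwards relative to what your own closing paragraph (and the paper) requires: the reduction that embeds a challenge into the \emph{semi-static} share $g^s$ is the one that needs the Strong-DH verification oracle, because $g^s$ is reused across many sessions with adversarial first flows and the simulator must detect and consistently patch adversarial $H_1$ queries on correct DDH tuples $(g^s, g^{x'}, g^{sx'})$; plain Hash-DH, which provides no such oracle, does not suffice there. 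The ephemeral pair $(g^{\bar x}, g^{\bar y})$, by contrast, appears only in the test session and its match, so no consistency problem arises and Hash-DH on $H_2$ is enough (Strong-DH also works but is overkill). Second, your Property~1 argument cannot stand alone on EUF-CMA plus correctness: the server never signs its response message, so the client's agreement on $g^y$ and on the server's identity rests entirely on ciphertext integrity of the authenticated encryption under $\htk'$, and that key is only guaranteed to be unpredictable after the Strong-DH/PRG hybrids have been carried out. This is why the paper proves Property~1 inside the hybrid chain for Property~2 rather than as a separate preliminary reduction; you should reorder your argument accordingly. With those two repairs your proof matches the paper's.
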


\noindent In the following, we show that the two properties in
Definition~\ref{def:service-discovery-sec} hold for the key-exchange protocol
in Figure~\ref{fig:formal-discovery-protocol}.

\para{Proof of Property 1.} Our proof of Property~2
(Appendix~\ref{app:proof-property-2}) below will also show that our service
discovery protocol satisfies Property~1. In particular, we refer to hybrid
arguments $\hyb_0$ through $\hyb_3$ in the proof of Property~2.

\subsubsection{Proof of Property 2.}
\label{app:proof-property-2}
  Following similar analysis of key-exchange protocols
  in~\cite{KPW13,KW15}, we assume selective security in the adversary's choice
  of the test session. In our case, we require that at the beginning of the
  security game, the adversary commits to the following:
  \begin{citemize}
    \item The test session $(P, \bid, \sid)$.
    \item The peer's identity $Q$ in the test session.
    \item Whether $P$ is the initializer or the responder in the test
    session.
  \end{citemize}
  Note that selective security implies full adaptive security at
  a security loss that grows polynomially in the number of parties and the
  number of sessions the adversary initiates (where the simulator
  guesses the adversary's test session at the beginning of the protocol
  execution experiment).

  We begin by defining a simulator $\calS = \calS(\calA)$. On input $n$ (the
  number of parties), $\lambda$ (a security parameter) and an adversary
  $\calA$, the simulator $\calS$ simulates a run of the security game for
  the mutual authentication protocol. In the selective security setting, the adversary begins by
  committing to a test session $(\bar P, \hbid, \hsid)$, the peer $\bar Q$ in
  the test session, and whether $\bar P$ was the initiator or the responder in
  the test session. Then, the simulator $\calS$ initializes the $n$ parties by
  choosing a private signing key and a public verification key for each of the
  $n$ parties. When $\calA$ activates a party, the simulator $\calS$ performs
  the protocol actions (as described in
  Figure~\ref{fig:formal-discovery-protocol}) on behalf of the parties, and
  gives $\calA$ the outgoing messages as well as the public output of each
  session.

  \para{Description of simulator.} We now introduce several variants of the
  simulator $\calS$, which we generically denote $\bar \calS$. The simulator
  $\bar \calS$ behaves very similarly to $\calS$, except for the following
  differences.
  \begin{cenumerate}
    \item At the beginning of the simulation, the simulator chooses three
    exponents $\bar s, \bar x, \bar y \getsr \Z_p$ and four keys
    $\hhtk, \hhtk', \hexk, \hatk \in \calK$. The specification of
    $\hhtk, \hhtk', \hexk, \hatk$
    will determine the different variants of the simulator $\bar \calS$.

    \item In the selective security model, the adversary commits
    to a test session $(\bar P, \hbid, \hsid)$, the peer $\bar Q$, and the
    role of $\bar P$ in the test session at the beginning of the experiment.
    For notational convenience, let $\bar S \in \set{\bar P, \bar Q}$ denote
    the server the adversary commits to for the test session and similarly,
    let $\bar C \in \set{\bar P, \bar Q}$ denote the client to which it commits.

    The simulator $\bar \calS$ simulates the execution of the key-exchange
    security game exactly as $\calS$, except for the following differences.
    \begin{citemize}
      \item If the adversary activates $\bar S$ to initiate the broadcast
      $(\bar S, \hbid)$, the simulator uses $\bar s$ as the semi-static
      DH exponent in the broadcast.

      \item If the adversary activates $\bar C$ to initiate the session
      $(\bar C, \hbid, \hsid)$, the simulator uses $\bar x$ as the
      ephemeral DH exponent in the start message.

      \item If the adversary activates $\bar S$ as a responder to the session
      $(\bar S, \hbid, \hsid)$, the simulator uses $\bar y$ as the
      ephemeral DH exponent in the response message.

      \item It substitutes the keys $\hhtk, \hhtk', \hexk$
      for the keys $\htk, \htk', \exk$ whenever the DH shares
      $(g^{\bar s}, g^{\bar x}, g^{\bar s \bar x})$ are used to derive
      the keys during the simulation (that is, when the simulator
      needs to compute the quantity $\prg(H_1(g^{\bar s}, g^{\bar x}, g^{\bar s \bar x}))$).
      Similarly, it uses $\hatk$ in place of $\atk$ whenever the shares
      $(g^{\bar x}, g^{\bar s}, g^{\bar y}, g^{\bar s \bar x}, g^{\bar x \bar y})$
      are used to derive the session key (that is, when the simulator
      needs to compute the quantity
      $\extract(\hexk, H_2(g^{\bar x}, g^{\bar y}, g^{\bar x \bar y}))$).

    \end{citemize}

    \item At the end of the protocol, $\calA$ outputs a bit. The simulator
    $\bar \calS$ outputs the same bit.
  \end{cenumerate}
Our security analysis consists of two main cases, depending on whether the
adversary compromises the server's semi-static broadcast secret or not. Recall
from our admissibility requirement that as long as it is not the case that both
the server's ephemeral DH secret and the server's semi-static broadcast secret
are compromised, we say the adversary is admissible. This is very similar to
the case analysis in~\cite{KW15}. More concretely, our two cases are given
as follows:
\begin{citemize}
  \item \textbf{Case 1:} The adversary neither issues a $\broadcastreveal$
  query on $(\bar S, \hbid)$ nor corrupts $\bar S$ before $(\bar S, \hbid)$
  expires.

  \item \textbf{Case 2:} The adversary does not issue a $\statereveal$
  query on $(\bar S, \hbid)$.
\end{citemize}
For each case, we define a sequence of hybrid experiments, and then show
that each consecutive pair of hybrid experiments are computationally
indistinguishable. Before we begin the case analysis, we prove the following
lemma.
\begin{lemma}
  \label{lem:proper-client-init}
  Suppose a session $(P, \bid, \sid)$ completes with a peer $Q$.
  Assume moreover that neither $P$ nor $Q$
  has been corrupted before the completion of $(P, \bid, \sid)$. Then,
  assuming that $\sig$ is a secure signature scheme (and that the certificates
  are authenticated), the following hold:
  \begin{citemize}
    \item If $P$ is the client and $Q$ is the server, then
    the adversary initiated a broadcast $(Q, \bid)$ and
    $P$ must have been activated to initiate a session $(P, \bid, \sid)$
    with the broadcast message output by $(Q, \bid)$.

    \item If $P$ is the server and $Q$ is the client,
    the session $(Q, \bid, \sid)$ cannot
    complete with a peer $P'$ where $P' \ne P$.
  \end{citemize}
\end{lemma}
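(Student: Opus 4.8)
The plan is to establish both parts using only existential unforgeability of $\sig$ (and the assumed authentic binding of verification keys to identities through certificates); notably, no appeal to Hash-DH, Strong-DH, or the random oracle is needed here, which matches the hypotheses of the lemma. The structural fact I would record first is a \textbf{domain-separation} observation: across the whole experiment an honest party signs only two shapes of message, a \emph{broadcast-form} tuple $(\bid, \id, g^s)$ (three components, signed exactly once per broadcast, and only by a server) and a \emph{session-form} tuple $(\bid, \sid, \id, \id', g^s, g^x)$ (six components, signed exactly once per session, and only by a client). With an unambiguous encoding these two shapes never coincide, so a valid signature of one form can never be reinterpreted as the other. I would then set up a reduction: assuming the lemma fails with non-negligible probability, build a forger $\calB$ against $\sig$ that guesses the honest peer $Q$ among the $n$ parties (a factor-$n$ loss), plants its challenge verification key as $Q$'s key, answers all of $Q$'s \emph{honest} signatures through the signing oracle, and simulates every other party—and every $\statereveal$, $\keyreveal$, $\broadcastreveal$, and $\corrupt$ query on parties other than $Q$—on its own. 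Because the hypothesis guarantees $Q$ is uncorrupted until $(P,\bid,\sid)$ completes, $\calB$ never needs $Q$'s secret key before it extracts a forgery; it aborts if the guess is wrong or $Q$ is corrupted too early.

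For Case 1 ($P$ client, $Q$ server): since $(P,\bid,\sid)$ completed with peer $Q$, at initialization (step~1) $P$ verified a valid signature $\sigma'$ on the broadcast-form tuple $(\bid, \id_Q, g^{s'})$ under $Q$'s key, where $\id_{S'}=\id_Q$ is the identity loaded as the peer. By domain separation and unforgeability, $Q$ itself produced $\sigma'$ via the oracle, which $Q$ does only when activated to broadcast; the unique broadcast carrying id $\bid$ is $(Q,\bid)$ (ids are never reused), and it emits the fresh share $g^s$. Hence $g^{s'}=g^s$, so $(Q,\bid)$ was indeed initiated and $P$ was activated to initiate $(P,\bid,\sid)$ with the content of the message output by $(Q,\bid)$. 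A valid $\sigma'$ on a tuple $Q$ never signed would be a forgery, a contradiction.

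For Case 2 ($P$ server, $Q$ client): $P$ completed only after verifying a valid signature $\sigma'$ on the session-form tuple $(\bid, \sid, \id_P, \id_Q, g^s, g^{x'})$ under $Q$'s key, where $g^s$ is $P$'s semi-static share for $\bid$ and $g^{x'}$ the received ephemeral share. By domain separation and unforgeability, $Q$ honestly produced $\sigma'$, and a client emits a session-form signature tagged with $(\bid,\sid)$ only inside the (unique) session $(Q,\bid,\sid)$. Matching the signed tuples forces the server identity recorded at initialization of $(Q,\bid,\sid)$ to equal $\id_P$; since identities are party-distinct, the peer of $(Q,\bid,\sid)$ upon completion can only be $P$, so it cannot complete with any $P'\neq P$ (and if it never completes, the claim is vacuous).

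The main obstacle I anticipate is reduction bookkeeping rather than any deep idea: I must check that $\calB$ perfectly simulates the full service-discovery experiment while invoking the signing oracle only for $Q$'s genuine signatures, and that the message underlying $\calB$'s output was never oracle-queried (this is exactly the content ``$Q$ never honestly signed this tuple''). A secondary technical point is signature malleability: a valid $\sigma'$ need not be byte-identical to what $Q$ output, so ``the broadcast message output by $(Q,\bid)$'' should be read up to re-randomization of the signature (equivalently, assume canonical or strongly-unforgeable signatures). What the downstream hybrid arguments $\hyb_0$ through $\hyb_3$ in fact require is only equality of the signed \emph{content}—in particular $g^{s'}=g^s$ and the matching of identities—which the argument above already delivers.
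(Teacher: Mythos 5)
Your proposal is correct and follows essentially the same route as the paper's proof: both parts reduce to existential unforgeability of $\sig$ by observing that an uncorrupted $Q$ signs at most one broadcast-form tuple containing $\bid$ and at most one session-form tuple containing $(\bid,\sid)$, and then matching the signed content against what $P$ verified. Your additions (the explicit guess-the-peer reduction, the domain-separation observation between three- and six-component tuples, and the remark on signature malleability) are sensible bookkeeping that the paper leaves implicit, but they do not change the argument.
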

\begin{proof}
  We prove each claim separately:
  \begin{citemize}
    \item When $P$ is activated to initialize a session $(P, \bid,
    \sid)$ with a broadcast message $(\bid', \id_{S'}, g^{s'},
    \sigma')$, it first checks that $\bid' = \bid$ and that $\sigma'$ is a
    valid signature on $(\bid, \id_{S'}, g^{s'})$ under the public key
    identified by $\id_{S'}$. Since $(P, \bid, \sid)$ completes with peer $Q$,
    it must be the case that $S' = Q$. Since $Q$ has not been corrupted before
    the completion of $(P, \bid, \sid)$, it signs at most one broadcast
    message containing $\bid$. Thus, if $(P, \bid, \sid)$ completes with $Q$,
    it must have been initialized with the broadcast message output by $(Q,
    \bid)$ since this is the only message that contains a valid signature from
    $Q$ with the broadcast id $\bid$. Otherwise, $\calA$ can be used to break
    the security of the signature scheme $\sig$.

    \item Since $(P, \bid, \sid)$ completes with peer $Q$ and $P$ is the server,
    $P$ must have received a message containing a signature from $Q$ on a
    tuple containing $(\bid, \sid, \id_P, \id_Q)$. By assumption, $Q$ has not
    been corrupted before the completion of $(P, \bid, \sid)$, so it would
    only sign a message of this form if it was activated to initialize
    a session $(Q, \bid, \sid)$. Otherwise, the adversary must have forged
    a signature under $Q$'s signing key.
    \\ \\
    But if $Q$ was activated to initiate the session $(Q, \bid, \sid)$ and the session
    completes with a peer $P' \ne P$, then the only signature that an honest
    $Q$ would have produced that contains the session identifier $(\bid,
    \sid)$ is a signature on a tuple containing $(\bid, \sid, \id_{P'}, \id_Q) \ne
    (\bid, \sid, \id_P, \id_Q)$. In this case, an honest $Q$ would never have
    signed any tuple containing $(\bid, \sid, \id_P, \id_Q)$. Thus, any adversary that
    can cause $(P, \bid, \sid)$ to complete with peer $Q$ and have $(Q, \bid,
    \sid)$ complete with peer $P'$ can be used to forge signatures for $\sig$
    (in particular, under $Q$'s signing key).
    \qedhere
  \end{citemize}
\end{proof}

\noindent We now consider the two possible cases, and argue that in each case,
the adversary's advantage in distinguishing the session key of an unexposed
session from random is negligible.
\para{Case 1: $s$ is not compromised.} In this case, we rely on the security
of the server's broadcast secret for the privacy of the session key. We now
define our sequence of hybrid experiments:

\begin{citemize}
  \item \textbf{Hybrid $\hyb_0$:} This is the real protocol execution experiment.
  Specifically, the simulator $\bar \calS$ where
  $(\hhtk, \hhtk', \hexk) = \prg(H_1(g^{\bar s}, g^{\bar x}, g^{\bar s \bar x}))$,
  and $\hatk = \extract(\hexk, H_2(g^{\bar x}, g^{\bar y}, g^{\bar x \bar y}))$.

  \item \textbf{Hybrid $\hyb_1$:} Same as hybrid $\hyb_0$, except $\bar \calS$
  also aborts if $\calA$ queries $H_1$ on the input
  $(g^{\bar s}, g^{\bar x}, g^{\bar s \bar x})$.

  \item \textbf{Hybrid $\hyb_2$:} Same as hybrid $\hyb_1$, except
  $\hhtk, \hhtk', \hexk$ are uniformly random over $\calK$.

  \item \textbf{Hybrid $\hyb_3$:} Same as hybrid $\hyb_2$, except
  $\bar S$ also aborts if the session $(\bar Q, \hbid, \hsid)$
  does not match $(\bar P, \hbid, \hsid)$. In addition, $\bar S$ also aborts
  if the session key output by the test session $(\bar P, \hbid, \hsid)$ is
  not $\hatk$.

  \item \textbf{Hybrid $\hyb_4$:} Same as hybrid $\hyb_3$, except
  $\hatk$ is uniformly random over $\calK$.
\end{citemize}
We now show that each consecutive pair of hybrid experiments
described above is computationally indistinguishable.

\begin{claim}
  \label{claim:priv-disc-hyb-0-1}
  Hybrids $\hyb_0$ and $\hyb_1$ are computationally indistinguishable if the
  Strong-DH assumption holds in $\bbG$ and $H_1$ is modeled as a random oracle.
\end{claim}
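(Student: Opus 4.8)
The plan is to use a standard ``identical-until-bad'' argument combined with a reduction to the Strong-DH assumption. First observe that $\hyb_0$ and $\hyb_1$ behave identically unless the bad event $E$ occurs, where $E$ is the event that $\calA$ ever queries the random oracle $H_1$ on the tuple $(g^{\bar s}, g^{\bar x}, g^{\bar s \bar x})$: the experiments run in lock-step up to the first point at which $\hyb_1$ would abort. Hence $\abs{\Pr[\hyb_0 = 1] - \Pr[\hyb_1 = 1]} \le \Pr[E]$, and it suffices to bound $\Pr[E]$ by the advantage of an efficient Strong-DH adversary.

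To do this I would construct $\calB$ against Strong-DH. On input a challenge $(g, g^a, g^b)$ with access to the verification oracle $\calO(g, g^a, \cdot, \cdot)$, $\calB$ runs the variant simulator $\bar\calS$ against $\calA$, embedding $g^{\bar s} := g^a$ as the semi-static share of the test broadcast $(\bar S, \hbid)$ and $g^{\bar x} := g^b$ as the ephemeral share of the test client session $(\bar C, \hbid, \hsid)$. Since $\calB$ chooses every party's signing/verification keys and every ephemeral and semi-static exponent other than $\bar s$ and $\bar x$ itself, it can run almost the entire protocol honestly: any key derivation $H_1(g^u, g^v, g^{uv})$ for which $\calB$ knows at least one of $u, v$ is computed directly by exponentiating the corresponding known value (for example a non-test server processing the test client uses $(g^{\bar x})^{s''}$, and the honest server $\bar S$ processing a known-exponent client uses $(g^{\bar s})^{x'}$). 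The only derivations $\calB$ cannot compute are those combining $g^{\bar s}$ with a share whose exponent it does not know; crucially, the embedded share $g^{\bar x}$ appears in only the single test session, which by construction and Lemma~\ref{lem:proper-client-init} pairs it with the honest broadcast $g^{\bar s}$, so every uncomputable derivation has $g^{\bar s}$ in its first slot. When the second share is exactly $g^{\bar x}$ (the test session, its matching session, or any replay of $g^{\bar x}$ to $\bar S$ under broadcast $\hbid$), $\calB$ samples $\hhtk, \hhtk', \hexk$ uniformly and uses them in place of $\prg(H_1(g^{\bar s}, g^{\bar x}, g^{\bar s \bar x}))$; since $H_1$ is a random oracle and, conditioned on $\lnot E$, its value at the challenge tuple is never exposed to $\calA$, this substitution is perfectly faithful to $\hyb_0$.

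The remaining case is the main obstacle and is precisely why the \emph{gap} (Strong-DH) oracle is required rather than plain CDH: the honest server $\bar S$ must derive $k = H_1(g^{\bar s}, g^{x'}, g^{\bar s x'})$ from an adversarially supplied share $g^{x'}$ whose exponent is unknown to $\calB$, so $\calB$ can form neither $\bar s$ nor $g^{\bar s x'}$. Here $\calB$ assigns the session a freshly sampled key, records a pending entry, and for each subsequent oracle query $H_1(g^{\bar s}, g^{x'}, Z)$ made by $\calA$ tests $\calO(g, g^{\bar s}, g^{x'}, Z)$; if the answer is $1$ it programs $H_1$ at that input to the pending value, keeping the random oracle consistent with the key actually used by $\bar S$ without ever computing $g^{\bar s x'}$. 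Finally, $\calB$ detects $E$ with the same oracle: on any query $H_1(g^{\bar s}, g^{\bar x}, Z)$ with $\calO(g, g^{\bar s}, g^{\bar x}, Z) = 1$, the value $Z$ equals $g^{\bar s \bar x} = g^{ab}$, which $\calB$ outputs as its Strong-DH solution. Thus $\Pr[E] \le \mathrm{Adv}^{\text{Strong-DH}}_{\calB}(\lambda) = \negl(\lambda)$, and combining with the identical-until-bad bound yields $\abs{\Pr[\hyb_0 = 1] - \Pr[\hyb_1 = 1]} = \negl(\lambda)$, as claimed.
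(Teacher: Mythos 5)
Your proposal follows essentially the same route as the paper: embed the Strong-DH challenge as $(g^{\bar s}, g^{\bar x})$ in the test broadcast and test client session, use the DDH verification oracle both to lazily program $H_1$ consistently on adversarial shares delivered to $\bar S$ and to detect the bad query, and output the third component of that query as the CDH solution (the paper implements your ``pending entries'' via the tables $T_1$ and $T_2$). One small imprecision: where both exponents are unknown you say $\calB$ samples $\hhtk, \hhtk', \hexk$ uniformly ``in place of $\prg(H_1(\cdot))$,'' but that changes the distribution of $\hyb_0$ (uniform keys versus $\prg$ applied to a uniform seed is exactly the gap addressed by the \emph{next} hybrid step, which needs PRG security); the faithful simulation, as in the paper, samples the oracle output $k' \getsr \calK$ uniformly and sets $(\hhtk, \hhtk', \hexk) = \prg(k')$, so that no assumption beyond Strong-DH and the random oracle is needed for this claim.
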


\begin{proof}
  Let $(\bar P, \hbid, \hsid)$ be the session, and $\bar Q$ be the peer that the
  adversary commits to at the beginning of the experiment. By definition, this
  means that $(\bar P, \hbid, \hsid, \bar Q)$ is the public output of the test
  session.

  Let $\calA$ be a distinguisher between
  $\hyb_{0}$ and $\hyb_{1}$. We use $\calA$ to build a Strong-DH adversary
  $\calB$ as follows. Algorithm $\calB$ is given a Strong-DH challenge $(g,
  g^{\bar s}, g^{\bar x})$, as well as a DDH oracle $\calO(g^{\bar s}, \cdot,
  \cdot)$ where $\calO(g^\alpha, g^\beta, g^\gamma) = 1$ if $\gamma = \alpha
  \beta \pmod p$, and 0 otherwise. We now describe the operation of $\calB$.
  In the security reduction, $\calB$ will need to program the outputs of
  $H_1$.

  At the beginning of the simulation, algorithm $\calB$ generates a private
  signing key along with a public verification key (in the same manner as
  $\bar \calS$) for each of the $n$ parties. Algorithm $\calB$ also
  initializes two empty tables $T_1$ and $T_2$. The first table $T_1$ is used
  to maintain the mapping from tuples $(g^\alpha, g^\beta, g^\gamma)$ to
  random oracle outputs $H_1(g^\alpha, g^\beta, g^\gamma)$, and will be used
  for answering oracle queries to $H_1$. The second table $T_2$ maps between
  group elements $g^\beta$ and random oracle outputs $H_1(g^{\bar s}, g^\beta,
  g^\gamma)$ where $\gamma = \bar s \beta \pmod p$. In other words, $T_2$ maps
  between DDH tuples with first component $g^{\bar s}$ and is used to answer
  $\keyreveal$ queries consistently in the simulation.

  Next, $\calB$ chooses a random exponent $\bar y \getsr \Z_p$ and a random
  value $k' \getsr \calK$. It adds the mapping $[g^{\bar x} \mapsto k']$
  to $T_2$, computes $(\hhtk, \hhtk', \hexk) = \prg(k')$ and
  $\hatk = \extract(\hexk, H_2(g^{\bar x}, g^{\bar y}, g^{\bar x \bar y}))$.

  Algorithm $\calB$ then begins the simulation of the key-exchange security
  game for $\calA$. Algorithm $\calB$ responds to $\calA$'s actions as
  follows. Recall that in the selective security model, the adversary commits
  to the initiator of the protocol at the beginning of the security game. In
  the following description, we again write $\bar C \in \set{\bar P, \bar Q}$
  to denote the client in the test session, and $\bar S \in \set{\bar
  P, \bar Q}$ to denote the server in the test session.
  \begin{citemize}
    \item \textbf{Broadcast queries.} If the adversary activates $\bar S$ to
    initiate the broadcast $(\bar S, \hbid)$, the simulator uses $g^{\bar s}$
    from the Strong-DH challenge as the semi-static DH share in the broadcast
    message. For other broadcast queries, algorithm $\calB$ chooses a fresh DH
    exponent and constructs the broadcast message exactly as in the real protocol.

    \item \textbf{Client initialization queries.} When $\calA$ activates
    a party $P$ to initiate a session $(P, \bid, \sid)$, if
    $(P, \bid, \sid) \ne (\bar C, \hbid, \hsid)$, $\calB$ chooses a fresh
    DH exponent and prepares the message exactly as in the real scheme.
    \\ \\
    Otherwise, if $(P, \bid, \sid) = (\bar C, \hbid, \hsid)$,
    $\calB$ sets $g^{\bar x}$ from the Strong-DH challenge to be the DH
    share in its message. Next, $\calB$ sets $k = k'$ and
    $(\htk, \htk', \exk) = \prg(k)$, where $k'$ is the key $\calB$ chose at the beginning
    of the simulation. The remaining steps of the query processing
    are handled exactly as in the real experiment.

    \item \textbf{Server response queries.} When $\calA$ activates
    a server $P$ to respond to a session $(P, \bid, \sid)$, if
    $(P, \bid) \ne (\bar S, \hbid)$, then $\calB$ chooses a fresh ephemeral
    DH exponent and constructs the response message exactly as in the real
    scheme. Note that since it chose the semi-static DH share for $(P, \bid)$,
    it can construct the broadcast message exactly as in the real protocol.
    \\ \\
    If $(P, \bid) = (\bar S, \hbid)$, then let $g^x$ be the ephemeral DH share
    in the activation message. Algorithm $\calB$ checks if there is already a
    mapping $g^x \mapsto k$ in table $T_2$. If the mapping does not exist,
    then $\calB$ chooses a new key $k \getsr \calK$ and adds the mapping $g^x
    \mapsto k$ to $T_2$. Then, it derives $(\htk, \htk', \exk) = \prg(k)$
    and verifies the activation message (exactly as prescribed in the
    real protocol). If everything succeeds and $\sid = \hsid$, algorithm
    $\calB$ uses $g^{\bar y}$ (chosen at the beginning of the experiment) as
    the ephemeral DH share when constructing its response. Otherwise, it
    chooses a fresh ephemeral share and constructs the response as in the real
    scheme.

    \item \textbf{Client finish queries.} When a client receives a response
    message for session $(P, \bid, \sid)$, if $(P, \bid, \sid) \ne (\bar C,
    \hbid, \hsid)$, $\calB$ constructs the outputs as prescribed by the real
    scheme (this is possible since $\calB$ chose the client's ephemeral DH
    share in this case). If $(P, \bid, \sid) = (\bar C, \hbid, \hsid)$,
    $\calB$ proceeds as described in the real scheme, except it sets
    $(\htk, \htk', \exk) = \prg(k')$, where $k'$ is the key $\calB$ chose at
    the beginning of the simulation.

    \item \textbf{Exposure queries.} Algorithm $\calB$ answers all admissible
    $\statereveal$, $\broadcastreveal$, $\keyreveal$ and $\corrupt$ queries as
    $\bar \calS$. In the case of an admissible $\statereveal$,
    $\broadcastreveal$, or $\corrupt$ query, the simulator can answer since it
    chose all of the associated secrets. For
    $\keyreveal$ queries, we use the fact that $\calB$ is already able to
    process the server response and client finish queries, which includes the
    computation of the secret session key.

    \item \textbf{Oracle queries to $H_1$.} Whenever $\calA$ queries the
    random oracle $H_1$ at $(g^\alpha, g^\beta, g^\gamma)$, $\calB$ responds
    as follows:
    \begin{citemize}
      \item If there is already a mapping of the form
      $[(g^\alpha, g^\beta, g^\gamma) \mapsto k]$ in $T_1$, then $\calB$
      replies with $k$.

      \item If either $g^\alpha \ne g^{\bar s}$, or $g^\alpha = g^{\bar s}$ and
      $\calO(g^\beta, g^\gamma) = 0$, then $\calB$ chooses
      $k \getsr \calK$, adds the mapping
      $[(g^\alpha, g^\beta, g^\gamma) \mapsto k]$ to $T_1$
      and replies with $k$.

      \item If $g^\alpha = g^{\bar s}$ and $\calO(g^\beta, g^\gamma) = 1$,
      then $\calB$ checks whether there is a mapping of the form
      $[g^\beta \mapsto k]$ in $T_2$. If so, $\calB$ adds the mapping
      $[(g^\alpha, g^\beta, g^\gamma) \mapsto k]$ to $T_1$ and replies with $k$.
      If no such mapping exists in $T_2$, then $\calB$ samples $k \getsr \calK$,
      and adds the mapping $[(g^\alpha, g^\beta, g^\gamma) \mapsto k]$ to $T_1$,
      and the mapping $[g^\beta \mapsto k]$ to $T_2$.
    \end{citemize}
    Moreover, if $g^\alpha = g^{\bar s}$ and $g^\beta = g^{\bar x}$ and
    $\calO(g^\beta, g^\gamma) = 1$, $\calB$ aborts the simulation and outputs
    $g^{\gamma}$.
  \end{citemize}
  To complete the proof, we show that $\calB$ perfectly simulates $\hyb_0$ for
  $\calA$. First, the exponents $\bar s$ and $\bar x$ from the Strong-DH
  challenge are distributed uniformly over $\Z_p$, so they are properly
  distributed. By construction of the tables $T_1$ and $T_2$ and programming
  the outputs of the random oracle $H_1$ accordingly, $\calB$ ensures
  that all sessions involving the broadcast $(\bar S, \hbid)$ are properly simulated.
  Note that in the simulation, $k'$ plays the role of
  $H_1(g^{\bar s}, g^{\bar x}, g^{\bar s \bar x})$, so both the sessions
  $(\bar C, \hbid, \hsid)$ and $(\bar S, \hbid, \hsid)$ are properly simulated.

  Since $\calB$ perfectly simulates $\hyb_0$ for $\calA$, with non-negligible
  probability, $\calA$ will query $H_1$ on $(g^{\bar s}, g^{\bar x}, g^{\bar s \bar x})$.
  Then $\calB$ succeeds in the Strong-DH game with the same advantage.
\end{proof}

\begin{claim}
\label{claim:priv-disc-hyb-1-2}
Hybrids $\hyb_1$ and $\hyb_2$ are computationally indistinguishable if
$\prg$ is a secure pseudorandom generator.
\end{claim}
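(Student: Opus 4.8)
The plan is a direct reduction to the security of $\prg$. The only difference between $\hyb_1$ and $\hyb_2$ is the distribution of the triple $(\hhtk, \hhtk', \hexk)$: in $\hyb_1$ it equals $\prg(k^*)$ where $k^* = H_1(g^{\bar s}, g^{\bar x}, g^{\bar s \bar x})$, whereas in $\hyb_2$ it is sampled uniformly from $\prg$'s output space. The crucial observation is that both hybrids abort whenever $\calA$ queries $H_1$ at the point $(g^{\bar s}, g^{\bar x}, g^{\bar s \bar x})$. Consequently, in $\hyb_1$ the seed $k^*$ is never revealed to $\calA$ and, since $H_1$ is modeled as a random oracle, $k^*$ is an unobserved, uniformly random element of $\calK$. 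Thus in $\hyb_1$ the adversary's view depends on $\prg(k^*)$ for a fresh uniform seed, and in $\hyb_2$ on a fresh uniform element of the output space---precisely the two distributions a PRG distinguisher must tell apart.

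Concretely, I would build a PRG adversary $\calB$ from any distinguisher $\calA$ for $\hyb_1$ versus $\hyb_2$. $\calB$ receives a challenge $c$ (either $\prg(k)$ for uniform $k \in \calK$ or a uniform element of the PRG output space), parses it as $(\hhtk, \hhtk', \hexk)$, and runs $\calA$ while playing the role of the simulator $\bar \calS$. Since this is the PRG reduction and not the Strong-DH reduction of Claim~\ref{claim:priv-disc-hyb-0-1}, $\calB$ chooses the exponents $\bar s, \bar x, \bar y$ itself and hence knows them explicitly. It can therefore (i) detect the abort event by checking each of $\calA$'s $H_1$-queries against $(g^{\bar s}, g^{\bar x}, g^{\bar s \bar x})$, and (ii) install $(\hhtk, \hhtk', \hexk)$ in exactly the sessions $(\bar C, \hbid, \hsid)$ and $(\bar S, \hbid, \hsid)$ where $\bar \calS$ substitutes them, deriving $\hatk = \extract(\hexk, H_2(g^{\bar x}, g^{\bar y}, g^{\bar x \bar y}))$ from the challenge value $\hexk$. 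All remaining queries---broadcasts, non-test sessions, and the admissible $\statereveal$, $\broadcastreveal$, $\keyreveal$, and $\corrupt$ queries---are answered exactly as $\bar \calS$ would, which $\calB$ can do because it holds all the relevant secrets. Finally, $\calB$ outputs whatever $\calA$ outputs.

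When $c = \prg(k)$ for uniform $k$, $\calB$'s simulation is identical to $\hyb_1$, with $k$ playing the role of the unobserved oracle value $k^* = H_1(g^{\bar s}, g^{\bar x}, g^{\bar s \bar x})$; when $c$ is uniform it is identical to $\hyb_2$. Hence $\calB$'s distinguishing advantage equals $\calA$'s, and the claim follows from the security of $\prg$. I expect the only delicate point to be the justification that substituting the PRG challenge yields a \emph{perfect} simulation, which rests entirely on the abort condition introduced in $\hyb_1$: without it, $\calA$ could query $H_1$ at $(g^{\bar s}, g^{\bar x}, g^{\bar s \bar x})$ and recover the seed, at which point $\prg(k^*)$ would no longer be indistinguishable from uniform and the reduction would break. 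Everything else is a routine check that the two views coincide on each query type.
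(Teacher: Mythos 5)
Your proposal is correct and follows essentially the same route as the paper: a direct reduction that plugs the PRG challenge in for $(\hhtk, \hhtk', \hexk)$, relying on the abort condition introduced in $\hyb_1$ to guarantee that the random-oracle value $H_1(g^{\bar s}, g^{\bar x}, g^{\bar s \bar x})$ is never queried and can therefore be implicitly identified with the (uniform, hidden) PRG seed. The paper's proof makes exactly these points, including the observation that the simulation is perfect precisely because the abort rules out the one query the reduction could not answer.
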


\begin{proof}
  Suppose there exists an adversary $\calA$ that can distinguish between
  $\hyb_1$ and $\hyb_2$. We use $\calA$ to build an algorithm $\calB$ that can
  break PRG security. Algorithm $\calB$ operates as follows. At the beginning of
  the game, the PRG challenger samples a seed $s \getsr \calK$, and gives $\calB$
  a string $(\hhtk, \hhtk', \hexk)$ where either $(\hhtk, \hhtk', \hexk) = \prg(s)$
  or $(\hhtk, \hhtk', \hexk) \getsr \calK^3$.

  Algorithm $\calB$ starts simulating the protocol execution experiment for
  $\calA$. It chooses all of the parameters as in the real scheme, and aborts
  if any of the abort conditions in $\hyb_1$ and $\hyb_2$ are
  triggered. During the simulation, if $\calB$ ever needs to derive
  keys $(\htk, \htk', \exk) = \prg(k)$ using
  $k = H_1(g^{\bar s}, g^{\bar x}, g^{\bar s \bar x})$, it instead uses the
  keys $(\hhtk, \hhtk', \hexk)$ from the PRG challenger.

  If $(\hhtk, \hhtk', \hexk)$
  was the output of the PRG, then $\calB$ correctly simulates $\hyb_1$. In
  particular, algorithm $\calB$ has exactly simulated an execution of the
  real scheme where the value of the random oracle $H_1$ at
  $(g^{\bar s}, g^{\bar x}, g^{\bar s \bar x})$ is the PRG seed $s$. Since
  $s$ is sampled uniformly at random by the PRG challenger, this value
  is properly distributed. Moreover, even though the simulator does not know
  the seed $s$, it can still correctly answer all of the adversary's
  queries because it never needs to
  query $H_1$ at $(g^{\bar s}, g^{\bar x}, g^{\bar s \bar x})$. If it did,
  then the experiment aborts. Thus, the adversary's view of the protocol
  execution is simulated perfectly according to the specification
  of hybrid $\hyb_1$.

  Conversely, if the keys $(\hhtk, \hhtk', \hexk)$ were chosen uniformly at
  random, then $\calB$ has correctly simulated $\hyb_2$. Thus, by security
  of the PRG, hybrids $\hyb_1$ and $\hyb_2$ are computationally indistinguishable.
\end{proof}

\begin{claim}
\label{claim:priv-disc-hyb-2-3}
Hybrids $\hyb_2$ and $\hyb_3$ are computationally indistinguishable if
the underlying encryption scheme is an authenticated encryption scheme
and that $\sig$ is a secure signature scheme (and that the certificates are
authenticated).
\end{claim}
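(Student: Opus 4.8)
The plan is to treat $\hyb_2$ and $\hyb_3$ as experiments that proceed identically until one of the two abort conditions introduced in $\hyb_3$ is triggered; consequently it suffices to argue that in $\hyb_2$ each of these conditions fails to occur except with negligible probability, and then finish with a union bound. The first condition concerns \emph{peering} (whether the session $(\bar Q, \hbid, \hsid)$ matches the test session), and the second concerns \emph{key agreement} (whether the test session outputs exactly $\hatk$). I would bound these two bad events separately.

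For the peering event I would invoke Lemma~\ref{lem:proper-client-init}. By the admissibility of the test session in Case~1, neither $\bar P$ nor $\bar Q$ is corrupted before $(\bar P, \hbid, \hsid)$ completes, so the lemma applies. When $\bar P$ is the server, the second bullet of the lemma directly forbids $(\bar Q, \hbid, \hsid)$ from completing with a peer $P' \neq \bar P$; when $\bar P$ is the client, the first bullet forces the test session to have been initialized with the genuine broadcast from $(\bar Q, \hbid)$, and applying the lemma to the matching server session then pins its peer to $\bar P$. In either case, violating the matching condition requires a forgery under an honest party's signing key, so by security of $\sig$ (and authenticity of the certificates) this abort fires only negligibly often. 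This is essentially Property~1 restricted to the test session, which is why the proof of Property~1 is folded into hybrids $\hyb_0$ through $\hyb_3$.

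For the key-agreement event the crucial fact is that in $\hyb_2$ the handshake keys $\hhtk, \hhtk'$ are uniformly random and are used only by the honest parties of the test and matching sessions, namely as the keys derived from $(g^{\bar s}, g^{\bar x}, g^{\bar s \bar x})$, a triple that other sessions reach only with negligible probability. I would split on the role of $\bar P$. If $\bar P$ is the client, its output key is $\extract(\hexk, H_2(g^{\bar x}, g^{y'}, g^{\bar x y'}))$ for the share $g^{y'}$ in the accepted response; since that share is carried \emph{inside} an authenticated-encryption ciphertext under the random key $\hhtk'$, ciphertext integrity implies the client can only accept the honest server's response, which uses $g^{\bar y}$, forcing the output to equal $\hatk$. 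If $\bar P$ is the server, its output key is $\extract(\hexk, H_2(g^{x'}, g^{\bar y}, g^{x'\bar y}))$ for the received share $g^{x'}$; the accepted first message must contain a valid client signature on a tuple containing $g^{x'}$, so by unforgeability of $\sig$ it originates from the honest client $\bar C$, which used $g^{\bar x}$, again forcing the output to be $\hatk$.

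The main obstacle will be the bookkeeping needed to justify that $\hhtk$ and $\hhtk'$ are genuinely single-use keys---that the honest matching session is the \emph{only} source of ciphertexts under each of them---so that ciphertext integrity yields a \emph{unique} acceptable ciphertext rather than merely the unforgeability of fresh ones. Establishing this requires tracking that the simulator substitutes these keys exactly when the shares $(g^{\bar s}, g^{\bar x}, g^{\bar s\bar x})$ arise, and that $\bar s, \bar x$ are sampled freshly so that collisions with other sessions are negligible. Once this is in place, both abort events reduce cleanly to breaking the authenticated encryption scheme or $\sig$, and a union bound over the polynomially many relevant messages completes the argument.
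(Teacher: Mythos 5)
Your overall structure (identical-until-abort, a union bound over the two abort events, and a case split on the role of $\bar P$) matches the paper, and your treatment of the second abort event (key agreement) is essentially the paper's argument. The gap is in the first abort event, in the sub-case where $\bar P$ is the \emph{client}. There, Lemma~\ref{lem:proper-client-init} cannot ``pin the server session's peer to $\bar P$'': its second bullet takes as hypothesis that a server session completes with some uncorrupted peer $Q$ and concludes something about the client session at $Q$ --- it does not conclude that $Q = \bar P$. Indeed, no signature argument can: the adversary may activate a different honest client $P \ne \bar P$ to initiate a session with identifier $(\hbid, \hsid)$ on $\bar Q$'s genuine broadcast and deliver $P$'s honestly signed first flow to $\bar Q$, so that $(\bar Q, \hbid, \hsid)$ completes with peer $P \ne \bar P$ without any forgery under $\sig$. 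What makes the abort event negligible is not the signature scheme but the authenticated encryption: the server-to-client response carries no signature, so the only way the test session $(\bar P, \hbid, \hsid)$ can nevertheless complete with peer $\bar Q$ is by accepting a ciphertext under $\hhtk'$ that decrypts to a tuple containing $(\hbid, \hsid, \id_{\bar Q}, \id_{\bar P})$, whereas the honest $\bar Q$ encrypts at most one message containing $(\hbid, \hsid)$ and that message names $\id_P \ne \id_{\bar P}$. Since $\hhtk'$ is uniformly random in $\hyb_2$, such a ciphertext is a forgery against the ciphertext integrity of the authenticated encryption scheme; this is exactly the reduction the paper gives for this sub-case.

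A second, smaller remark: your worry about $\hhtk'$ being ``single-use'' is resolved in the standard way --- the ciphertext-integrity reduction routes \emph{every} encryption under $\hhtk'$ through the encryption oracle and wins as soon as the delivered ciphertext is valid but was never returned by that oracle; one only needs that the honest parties never ask the oracle to encrypt the specific tuple the forgery must decrypt to, which follows from the uniqueness of $(\hbid,\hsid)$ among $\bar Q$'s encryptions. With the client-role matching argument repaired as above, the rest of your proposal goes through.
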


\begin{proof}
  Suppose an adversary $\calA$ is able to distinguish between hybrids $\hyb_2$ and
  $\hyb_3$ with non-negligible probability. First, suppose the adversary is able
  to produce a protocol execution where $(\bar Q, \hbid, \hsid)$ does not match
  $(\bar P, \hbid, \hsid)$. We consider two possibilities:

  \begin{citemize}
    \item Suppose that $\bar P$ is the server in
    the session $(\bar P, \hbid, \hsid)$. By definition, the session $(\bar Q,
    \hbid, \hsid)$ matches $(\bar P, \hbid, \hsid)$ if it is incomplete. Thus,
    it suffices to argue that if $(\bar Q, \hbid, \hsid)$ completes, it
    completes with peer $\bar P$. By admissibility, $\bar Q$ must not have been
    corrupted before the completion of $(\bar P, \hbid, \hsid)$. The claim
    then follows by Lemma~\ref{lem:proper-client-init}.

    \item Suppose that $\bar P$ is the client in the
    session $(\bar P, \hbid, \hsid)$. As in the previous case, the session
    $(\bar Q, \hbid, \hsid)$ is matching if it is incomplete.
    \\ \\
    Instead, suppose that $(\bar Q, \hbid, \hsid)$ completes with a peer $P
    \ne \bar P$. Since neither $\bar P$ nor $\bar Q$ are corrupt before
    the completion of their respective session $(\bar P, \hbid, \hsid)$
    and $(\bar Q, \hbid, \hsid)$, the following must have occurred:
    \begin{citemize}
      \item If $(\bar P, \hbid, \hsid)$ completes with peer $\bar Q$,
      $\bar P$ must have received an authenticated encryption of a message
      containing $(\hbid, \hsid, \id_{\bar Q}, \id_{\bar P})$ under $\hhtk'$.

      \item If $(\bar Q, \hbid, \hsid)$ completes with peer $P$,
      then $\bar Q$ must have encrypted a message containing
      $(\hbid, \hsid, \id_{\bar Q}, \id_P)$ under some key (possibly $\hhtk'$). Moreover,
      this is the only message an honest $\bar Q$ ever encrypts under
      {\em any} key that contains $(\hbid, \hsid)$.
    \end{citemize}

    We can use $\calA$ to construct an algorithm $\calB$ that
    breaks the ciphertext integrity of the
    underlying authenticated encryption scheme. At the beginning of the
    ciphertext integrity game, the challenger samples a random encryption key.
    This key will play the role of $\hhtk'$ in the reduction. During the
    simulation, algorithm $\calB$ is given access to an encryption oracle.
    Algorithm $\calB$ responds to all queries as described in $\hyb_2$,
    choosing all parameters other than $\hhtk'$ for itself. During the
    simulation, whenever $\calB$ needs to encrypt a message under $\hhtk'$,
    it instead forwards it to the encryption oracle to obtain the ciphertext.
    When $\calA$ delivers a response message containing a ciphertext $\hct$
    to the session $(\bar P, \hbid, \hsid)$, $\calB$ submits $\hct$ to the
    ciphertext integrity challenger as its forgery. Otherwise, during the
    simulation, whenever $\calB$ needs to decrypt a ciphertext
    using the key $\hhtk'$, $\calB$
    checks whether this was one of the ciphertexts it previously
    submitted to the encryption oracle. If so, it looks up the corresponding
    plaintext and uses that value to simulate the response. If not, $\calB$
    substitutes the value $\bot$ for the decryption when simulating the
    response.
    \\ \\
    We argue first that $\calB$ correctly simulates $\hyb_2$ for $\calA$.
    Since, $\calB$ chooses all quantities other than $\hhtk'$ as described in
    $\hyb_2$ and $\hhtk'$ is properly sampled by the ciphertext integrity
    challenger, it suffices to argue that $\calB$ correctly simulates the
    client computation of the finish message where it needs to decrypt a
    ciphertext under $\hhtk'$ and verify the integrity of the underlying
    message. Certainly, if the ciphertext that needs to be verified was supplied
    to $\calB$ by the encryption oracle, $\calB$ correctly simulates the
    decryption operation. If the ciphertext was not one that it received
    from the encryption oracle, then either it is invalid, in which case
    decryption in $\hyb_2$ would also have produced $\bot$, or it is valid,
    in which case it is a forgery. Thus, assuming the underlying encryption
    scheme provides ciphertext integrity, $\calB$ correctly simulates $\hyb_2$ (this
    statement can be formalized by introducing another hybrid experiment).
    \\ \\
    Since $\calB$ correctly simulates the view of $\hyb_2$ for $\calA$ with
    non-negligible probability, $\calA$ is able to activate $\bar P$
    with a server response message containing a valid encryption $\hct$ of
    a message containing
    $(\hbid, \hsid, \id_{\bar Q}, \id_{\bar P})$ under $\hhtk'$. As argued above,
    the only ciphertext an uncorrupted $\bar Q$ would construct
    containing $(\hbid, \hsid)$ is for a tuple with the prefix
    $(\id_{\bar Q}, \id_P, \hbid, \hsid) \ne (\id_{\bar Q}, \id_{\bar P}, \hbid, \hsid)$.
    Thus, during the simulation $\calB$ never needs to query the encryption oracle
    on the tuple $(\hbid, \hsid, \id_{\bar Q}, \id_{\bar P})$, and so
    it can submit $\hct$ as its forgery. We conclude that as long as
    the underlying encryption scheme is an authenticated encryption scheme,
    then $(\bar Q, \hbid, \hsid)$ cannot complete with a peer $P \ne \bar P$.
  \end{citemize}
  Next, suppose that the adversary is able to produce a protocol execution where
  the session key output by the test session $(\bar P, \hbid, \hsid)$
  is not $\hatk$. We again consider two possibilities:
  \begin{citemize}
    \item Suppose that $\bar P$ is the server in the session $(\bar P, \hbid, \hsid)$.
    Since $(\bar P, \hbid, \hsid)$ completes with peer $\bar Q$ and $\bar P$ is the server,
    it must be the case that $\bar P$ received a message containing a signature on a tuple
    $(\hbid, \hsid, \id_{\bar P}, \id_{\bar Q}, g^{\bar s}, g^{x'})$ under the key identified by
    $\id_{\bar Q}$ for some $x' \in \Z_p$. From the above analysis,
    with overwhelming probability, the session $(\bar Q, \hbid, \hsid)$ matches
    $(\bar P, \hbid, \hsid)$. By the admissibility requirement, this means that
    $\bar Q$ has not been corrupted before the completion of
    $(\bar P, \hbid, \hsid)$. Thus, $\bar Q$ would only sign
    a message containing $(\hbid, \hsid)$ if it was activated to initiate
    a session $(\bar Q, \hbid, \hsid)$. By construction, the simulator uses
    $\bar x$ as the ephemeral exponent when initiating this session. Moreover, the only
    message an honest $\bar Q$ signs that contains $(\hbid, \hsid)$ also contains
    $g^{\bar x}$ as the client's DH share. Thus, it must be the case that $x' = \bar x$,
    since otherwise, the adversary can be used to forge
    signatures under the public key bound to $\id_{\bar Q}$.
    Finally, if the remaining validation checks pass
    and $(\bar P, \hbid, \hsid)$ completes with $\bar Q$, then $\bar P$ uses $\bar y$
    as its ephemeral exponent and derives the session key $\atk$ using
    the DH shares $g^{\bar s}, g^{\bar x}, g^{\bar y}$. In this case, $\atk = \hatk$,
    as desired.

    \item Suppose that $\bar P$ is the client in the session $(\bar P, \hbid, \hsid)$.
    This means that $\bar P$ received an ephemeral share $g^{y'}$
    and an authenticated encryption
    of the tuple
    $(\hbid, \hsid, \id_{\bar Q}, \id_{\bar P}, g^{\bar s}, g^{\bar x}, g^{y'})$ for
    some $y' \in \Z_p$ under the key $\hhtk'$. If so, then $\bar P$ derives the session
    key $\atk$ from $g^{\bar s}$, $g^{\bar x}$, and $g^{y'}$. Since $\bar Q$
    has not been corrupted before the completion of $(\bar P, \hbid, \hsid)$, it
    sends at most one encryption of a message containing $(\hbid, \hsid)$.
    From the specification of the simulator, we also have that when $\bar Q$ is activated
    as a responder to the session $(\hbid, \hsid)$, it uses $\bar y$ as its ephemeral exponent.
    Thus, the only ciphertext an uncompromised $\bar Q$ constructs that
    encrypts $(\hbid, \hsid)$ also contains $g^{\bar y}$.
    By the same argument as above, we appeal to the security of the authenticated encryption
    scheme (in particular, ciphertext integrity) to argue that $y' = \bar y$ with
    overwhelming probability. Otherwise, the adversary must have been able to create
    a new ciphertext (under $\hhtk'$)
    encrypting $(\hbid, \hsid)$ and $g^{y'}$ for $y' \ne \bar y$, thereby breaking
    ciphertext integrity. We conclude that if $(\bar P, \hbid, \hsid)$ completes, $\bar P$
    derives the shared key from $g^{\bar s}$, $g^{\bar x}$, and $g^{\bar y}$, in which case
    $\atk = \hatk$. Note that this also shows that if two matching session complete, then
    they both derive the same session key. \qedhere
  \end{citemize}
\end{proof}

\begin{claim}
\label{claim:priv-disc-hyb-3-4}
Hybrids $\hyb_3$ and $\hyb_4$ are computationally indistinguishable if
the $\extract$ function is a secure pseudorandom function.
\end{claim}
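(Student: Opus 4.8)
The plan is to reduce directly to the pseudorandomness of $\extract$, viewing $\extract(\hexk, \cdot)$ as a PRF keyed by $\hexk$. The crucial starting observation is that, by Claim~\ref{claim:priv-disc-hyb-1-2}, the key $\hexk$ is already uniformly random over $\calK$ in $\hyb_2$ and hence in $\hyb_3$, and that within the protocol $\hexk$ influences $\calA$'s view \emph{only} through evaluations of $\extract(\hexk, \cdot)$: it is never handed to $\calA$ and is used solely to derive session keys $\atk = \extract(\hexk, H_2(\cdot))$. Before giving the reduction I would confirm that $\hexk$ is genuinely hidden from $\calA$ in $\hyb_3$. Since we are in Case~1, the adversary neither issues a $\broadcastreveal$ on $(\bar S, \hbid)$ nor corrupts $\bar S$ before its broadcast expires, so $\bar s$ never leaks; and because both the test session and its matching session are unexposed, the ephemeral exponent $\bar x$ is never revealed via a $\statereveal$ query either. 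Thus $\calA$ cannot compute $g^{\bar s \bar x}$, and by the abort condition introduced in $\hyb_1$ it never queries $H_1$ at $(g^{\bar s}, g^{\bar x}, g^{\bar s \bar x})$; consequently the seed underlying $(\hhtk, \hhtk', \hexk)$, and in particular $\hexk$ itself, remains hidden.

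Given this, I would build a PRF adversary $\calB$ that simulates $\hyb_3$ for $\calA$ faithfully, with the single change that every evaluation of $\extract(\hexk, \cdot)$ is replaced by a call to $\calB$'s challenge oracle. Here $\calB$ may pick $\bar s, \bar x, \bar y$ itself and compute all Diffie--Hellman shares in the clear---the hardness of the underlying DH problem has already been absorbed into the randomness of $\hexk$ during the passage $\hyb_1$--$\hyb_2$---so $\calB$ can form every $H_2$ input it needs. In particular it obtains the test-session key $\hatk$ by querying its oracle at $H_2(g^{\bar x}, g^{\bar y}, g^{\bar x \bar y})$. When $\calB$'s oracle is the real PRF $\extract(\hexk, \cdot)$, it reproduces $\hyb_3$ exactly; when the oracle is a truly random function, $\hatk$ becomes a uniformly random element of $\calK$, as in $\hyb_4$. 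Finally $\calB$ echoes $\calA$'s output, so its PRF-distinguishing advantage equals $\calA$'s advantage in separating $\hyb_3$ from $\hyb_4$.

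The step I expect to be the main obstacle is the bookkeeping needed to make the random-function case coincide with $\hyb_4$, which randomizes \emph{only} $\hatk$ and leaves every other key as a genuine $\extract$ output. A naive reduction routing all $\hexk$-evaluations through the oracle would, in the random case, randomize any other session that also derives a key from $\hexk$, thereby overshooting $\hyb_4$. The resolution is to argue that, with overwhelming probability, the test session and its matching partner are the \emph{only} sessions that ever evaluate $\extract(\hexk, \cdot)$, and that they do so on the single common input $H_2(g^{\bar x}, g^{\bar y}, g^{\bar x \bar y})$. This is exactly where I would reuse the facts established in Claim~\ref{claim:priv-disc-hyb-2-3}: by unforgeability of $\sig$, the server reaches its $\extract$ step only upon a valid client signature bound to $(\hbid, \hsid)$ and share $g^{\bar x}$, which the honest client produces only for the test session; and by ciphertext integrity of the authenticated encryption scheme, the client accepts only the honest server's response carrying $g^{\bar y}$. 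Hence the reused seed $\hexk$ yields a single distinct PRF input, so substituting the oracle randomizes precisely $\hatk$, and the claim follows from the security of $\extract$ as a PRF.
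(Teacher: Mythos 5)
Your proof is correct and follows essentially the same route as the paper's: a direct reduction to the PRF security of $\extract$ keyed by the (already uniform and hidden) $\hexk$, with a single oracle query at $H_2(g^{\bar x}, g^{\bar y}, g^{\bar x \bar y})$ supplying $\hatk$, so that the real-PRF and random-function worlds coincide with $\hyb_3$ and $\hyb_4$ respectively. The bookkeeping concern you raise---that no other session may evaluate $\extract(\hexk,\cdot)$---is precisely what the paper disposes of by asserting that all responses to admissible queries are independent of $\hexk$ in these hybrids; your justification via the signature-unforgeability and ciphertext-integrity facts from Claim~\ref{claim:priv-disc-hyb-2-3} simply makes that assertion explicit.
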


\begin{proof}
  Suppose there exists an adversary $\calA$ that can distinguish between
  $\hyb_3$ and $\hyb_4$. It is straightforward to use $\calA$ to build a PRF
  adversary for $\extract$. In hybrids $\hyb_3$ and $\hyb_4$, the extraction
  key $\hexk$ used as the key to $\extract$ is sampled
  uniformly at random (and independently of other scheme parameters).
  Moreover, the responses to the adversary's queries (other than $\corrupt$
  queries to $\bar P$ and $\bar Q$, which are not admissible), are independent
  of $\hexk$ in hybrids $\hyb_3$ and $\hyb_4$.

  We now use $\calA$ to build a PRF adversary $\calB$. Algorithm $\calB$ makes
  a single query $H_2(g^{\bar x}, g^{\bar y}, g^{\bar x \bar y})$ to the PRF oracle
  and sets $\hatk$ to be the output from the PRF oracle. In the pseudorandom world
  (where the response is the output of the PRF instantiated with a random key),
  then $\calB$ has correctly simulated $\hyb_3$ (where $\hexk$ plays the role of the
  PRF key). If the output is a uniformly random string, then $\calB$ has correctly
  simulated $\hyb_4$. Thus, by PRF security of $\extract$, $\hyb_3$ and $\hyb_4$
  are computationally indistinguishable.
\end{proof}

To conclude the proof, we argue that in $\hyb_4$, the view of the adversary is
independent of the challenge bit. First, the challenger's response to the
$\Test$ query is uniformly random over $\calK$ regardless of the challenge
bit. Finally, it suffices to argue that $\hatk$ is never used anywhere else in
$\hyb_4$, and thus, is still uniform given the adversary's view of the
protocol execution. By construction of the simulator, $\hatk$ is only used in
sessions when the shares $(g^{\bar s}, g^{\bar x}, g^{\bar y}, g^{\bar s \bar
x}, g^{\bar x \bar y})$ are used to derive the session key. By admissibility,
the adversary is not allowed to expose session $(\bar P, \hbid, \hsid)$ or its
matching session, $(\bar Q, \hbid, \hsid)$. In all other honest sessions on
which the adversary could issue a $\keyreveal$ query, the simulator must have chosen
at least one of the DH exponents $s$, $x$, or $y$. Since the adversary can only
initiate a polynomial number of sessions and the exponents $s, x, y$ are
sampled uniformly from $\Z_p$ where $p$ is super-polynomial in the security parameter
$\lambda$, we conclude that with overwhelming probability, at least one of $s
\ne \bar s$, $x \ne \bar x$, or $y \ne \bar y$. Thus, the simulator's response
to all of the adversary's queries (other than the $\Test$ query) is
independent of $\hatk$, which proves the claim. Moreover, we note that in each
of the reductions, the simulator always picked $\bar y \getsr \Z_p$ for itself.
Thus, in each case, the simulator is able to respond to a $\statereveal$ query
against the server in the test session. This completes the analysis of
Case 1.

\para{Case 2: $s$ is compromised after the handshake.} In this case, we rely
on the security of the server's ephemeral DH share to ensure confidentiality
of the session key. Our analysis is very similar to that of Case~1. We begin
by describing our hybrid experiments:

\begin{citemize}
  \item \textbf{Hybrid $\hyb_0$:} This is the real protocol execution experiment
  (same as in Case 1).

  \item \textbf{Hybrid $\hyb_1$:} Same as in Case 1, except the abort condition
  is only checked before the completion of the test session $(\bar P, \hbid, \hsid)$.

  \item \textbf{Hybrid $\hyb_2$:} Same as in Case 1.

  \item \textbf{Hybrid $\hyb_3$:} Same as in Case 1.

  \item \textbf{Hybrid $\hyb_4$:} Same as $\hyb_4$, except
  the value of $H_2(g^{\bar x}, g^{\bar y}, g^{\bar x \bar y})$ is
  replaced by a uniformly random value over $\calK$.

  \item \textbf{Hybrid $\hyb_5$:} Same as $\hyb_5$, except $\hatk$ is replaced
  by a uniformly random value over $\calK$.
\end{citemize}
Hybrids $\hyb_0$ through $\hyb_3$ are computationally indistinguishable by
applying the same arguments as in Case 1 (Claims~\ref{claim:priv-disc-hyb-0-1}
through~\ref{claim:priv-disc-hyb-2-3}). Note that the proof of
Claim~\ref{claim:priv-disc-hyb-0-1} continues to hold because we assume
that $\bar s$ is uncompromised before the completion of the test session. This
is the only part of the protocol where the claim is checked, so the same
argument applies. If suffices then to show that $\hyb_3$, $\hyb_4$, and $\hyb_5$
are computationally indistinguishable.

\begin{claim}
  Hybrids $\hyb_3$ and $\hyb_4$ are computationally indistinguishable if
  the Hash-DH assumption holds in $\bbG$.
\end{claim}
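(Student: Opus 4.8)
The plan is to reduce directly to the Hash-DH assumption by planting the challenge in the ephemeral Diffie-Hellman shares of the test session. Concretely, I would build an adversary $\calB$ that, on input a Hash-DH instance $(g, g^{\bar x}, g^{\bar y}, w)$ where $w$ is either $H_2(g^{\bar x}, g^{\bar y}, g^{\bar x \bar y})$ or a uniformly random element of $\calK$, runs the simulator $\bar\calS$ for the service-discovery game exactly as in $\hyb_3$ of Case 2 and echoes $\calA$'s output. Algorithm $\calB$ chooses every parameter on its own---the signing and verification keys of all $n$ parties, the semi-static exponent $\bar s$ of the broadcast $(\bar S, \hbid)$, and (crucially) the uniformly random handshake keys $\hhtk, \hhtk', \hexk$---except that it uses $g^{\bar x}$ as $\bar C$'s ephemeral share in $(\bar C, \hbid, \hsid)$, uses $g^{\bar y}$ as $\bar S$'s ephemeral share in $(\bar S, \hbid, \hsid)$, and sets the test-session traffic key to $\hatk = \extract(\hexk, w)$.

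The reason $\calB$ can run the whole experiment without ever holding $\bar x$ or $\bar y$ in the clear is that, by $\hyb_2$, the keys $\hhtk, \hhtk', \hexk$ are already uniform and independent of the exponents, so $\calB$ never has to evaluate $H_1(g^{\bar s}, g^{\bar x}, g^{\bar s \bar x})$ or $\prg$ of it. Beyond that point the exponents $\bar x, \bar y$ surface only as the group elements $g^{\bar x}, g^{\bar y}$ inside protocol messages (which $\calB$ reads off the challenge) and inside the argument $H_2(g^{\bar x}, g^{\bar y}, g^{\bar x \bar y})$ used to derive the session key (which $\calB$ replaces verbatim by $w$). I would then check that every query is answerable: broadcast, client-initialization, server-response and client-finish queries are produced with the planted shares and the self-chosen keys; $\keyreveal$ queries on non-test sessions succeed because $\calB$ picked those exponents; a $\broadcastreveal$ on $(\bar S, \hbid)$ or a post-expiry $\corrupt$ on $\bar S$ is answered by handing over $\bar s$, which $\calB$ knows---this is precisely the capability Case 2 permits; and by admissibility $\calB$ is never asked to $\statereveal$ the client side of $(\bar C, \hbid, \hsid)$ or the server side of $(\bar S, \hbid, \hsid)$, so it is never forced to reveal $\bar x$ or $\bar y$.

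It remains to confirm the substitution is consistent. The value $H_2(g^{\bar x}, g^{\bar y}, g^{\bar x \bar y})$ feeds into $\atk$ on both the server side of $(\bar S, \hbid, \hsid)$ (responder share $g^{\bar y}$, received client share $g^{\bar x}$) and the client side of the matching session $(\bar C, \hbid, \hsid)$ (client share $g^{\bar x}$, received server share $g^{\bar y}$); $\calB$ inserts $w$ in both, so the two matching sessions agree on $\hatk$ exactly as in $\hyb_3$ and $\hyb_4$. No other honest session derives its key from $H_2(g^{\bar x}, g^{\bar y}, g^{\bar x \bar y})$, since that would require reusing the pair $(\bar x, \bar y)$, which happens only with negligible probability over $\bar x, \bar y \getsr \Z_p$. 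Hence when $w = H_2(g^{\bar x}, g^{\bar y}, g^{\bar x \bar y})$ algorithm $\calB$ reproduces $\hyb_3$ and when $w$ is uniform it reproduces $\hyb_4$, so $\calA$'s distinguishing advantage transfers to $\calB$ against Hash-DH.

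The main obstacle is arguing that swapping $w$ for the genuine $H_2$ output is the \emph{only} change the adversary's view can register---in particular, that $\bar x$ and $\bar y$ never need to be produced in the clear even though $\bar s$ is revealed in Case 2. This rests on two facts: the earlier hybrids already decoupled the handshake keys from the Diffie-Hellman exponents, and the $\statereveal$ restrictions on the test session and its match prevent the reduction from being cornered into exposing an exponent it does not possess. Note also that, because we invoke Hash-DH as a black-box distributional assumption on $H_2$, the reduction need not intercept or reprogram $\calA$'s evaluations of $H_2$; an adversary that could query $H_2$ at the critical point $(g^{\bar x}, g^{\bar y}, g^{\bar x \bar y})$ would itself have to compute the Diffie-Hellman value $g^{\bar x \bar y}$, which is exactly what the assumption rules out.
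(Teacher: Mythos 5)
Your proposal is correct and takes essentially the same route as the paper: embed the Hash-DH challenge $(g^{\bar x}, g^{\bar y}, w)$ into the ephemeral shares of the test session and its match, substitute $w$ verbatim for $H_2(g^{\bar x}, g^{\bar y}, g^{\bar x \bar y})$, and choose every other parameter (including $\bar s$ and the already-uniform keys $\hhtk, \hhtk', \hexk$) yourself so that all admissible queries remain answerable. The paper compresses this into a few sentences, and your extra bookkeeping about why $\bar x$ and $\bar y$ never need to be produced in the clear under the Case-2 exposure restrictions matches its (terser) justification.
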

\begin{proof}
  Suppose there exists an efficient adversary $\calA$ that can distinguish
  between hybrids $\hyb_3$ and $\hyb_4$. We use $\calA$ to build a
  distinguisher $\calB$ for the Hash-DH assumption. Algorithm $\calB$ is given
  as input a Hash-DH challenge $(g^x, g^y, T)$ and must decide whether $T =
  H_2(g^x, g^y, g^{xy})$ or if $T$ is uniform over $\bbG$.

  Algorithm $\calB$ simulates the protocol execution with the exponents
  $x$ and $y$ from the Hash-DH challenge playing the roles of $\bar x$
  and $\bar y$ and $T$ playing the role of
  $H_2(g^{\bar x}, g^{\bar y}, g^{\bar x \bar y})$. Algorithm $\calB$
  chooses all of the other parameters that appear in the protocol execution
  as described in the real scheme. Since $\bar x$ and $\bar y$ are only
  used in the test session $(\bar P, \hbid, \hsid)$ and its matching
  session $(\bar Q, \hbid, \hsid)$, the simulator is able to answer
  all of the adversary's (admissible) queries exactly as in the real scheme.
  Thus, if $T = H_2(g^x, g^y, g^{xy})$, then $\calB$ has simulated
  $\hyb_3$ and if $T$ is uniform, then $\calB$ has simulated
  $\hyb_4$ for $\calA$.
\end{proof}

\begin{claim}
  Hybrids $\hyb_4$ and $\hyb_5$ are computationally indistinguishable if
  $\extract$ is a strong randomness extractor.
\end{claim}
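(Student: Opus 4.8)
The plan is to reduce $\hyb_4 \approx \hyb_5$ directly to the security of the strong randomness extractor $\extract$. First I would isolate the single point of difference: in $\hyb_4$ the test session's key is $\hatk = \extract(\hexk, r)$, where $\hexk \getsr \calK$ is the extractor seed carried over from $\hyb_2$ and $r \getsr \calK$ is the uniformly random value that replaced $H_2(g^{\bar x}, g^{\bar y}, g^{\bar x \bar y})$ in the preceding hop; in $\hyb_5$ the key is simply $\hatk \getsr \calK$. Because $r$ is now uniform and independent of every other quantity, $(\hexk, r)$ is exactly a fresh seed/source pair for the extractor, and the two experiments differ only in whether $\hatk$ equals the true extractor output $\extract(\hexk, r)$ or an independent uniform string.

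The reduction itself is then immediate. Given a distinguisher $\calA$, I build a strong-extractor adversary $\calB$ that receives a challenge $(k^*, w^*)$ --- with $k^*$ a uniform seed and $w^*$ either $\extract(k^*, x^*)$ for a uniform $x^*$ or an independent uniform string --- and runs the $\hyb_4$ simulation for $\calA$ after setting $\hexk := k^*$ and $\hatk := w^*$ (using $w^*$ as the session key of both the test session and its matching session). The essential bookkeeping observation is that $r = x^*$ is used nowhere in the experiment except as the lone argument of the extractor call producing $\hatk$; since $\calB$ plants $w^*$ as that output directly, it never needs $x^*$ and can sample every remaining quantity exactly as in $\hyb_4$. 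Hence $w^* = \extract(k^*, x^*)$ makes $\calA$'s view identical to $\hyb_4$ and a uniform $w^*$ makes it identical to $\hyb_5$, so $\calB$ inherits $\calA$'s distinguishing advantage and strong-extractor security closes the gap.

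The step needing care --- and the reason a strong extractor rather than a PRF is invoked here --- is the status of the seed. Unlike Case~1, where secrecy of $\bar s$ keeps $\exk$ hidden and lets a PRF argument suffice, Case~2 permits the adversary to compromise the server's semi-static secret $\bar s$ after the handshake and thereby recover the genuine key material; the seed is therefore not guaranteed secret. This is precisely what the strong-extractor definition tolerates: it hands $k^*$ to the distinguisher in the clear, so $\calB$ can consistently use $\hexk = k^*$ wherever the seed is exposed, and the extractor guarantee still forces $\extract(k^*, x^*)$ to look uniform given $k^*$. The only companion fact to verify is that $r = x^*$ really is uniform and independent, as an extractor source must be; this is inherited from the Hash-DH hop $\hyb_3 \to \hyb_4$, which relied on $\bar x$ and $\bar y$ never being revealed in the test or matching sessions. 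With both points in hand the claim follows, and together with the earlier hybrids it shows that in $\hyb_5$ the key $\hatk$ is uniform and independent of the challenge bit, completing Case~2.
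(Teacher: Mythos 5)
Your proposal is correct and follows essentially the same route as the paper's proof: the source fed to $\extract$ is the uniformly random value substituted for $H_2(g^{\bar x}, g^{\bar y}, g^{\bar x \bar y})$ in the previous hybrid, it is used exactly once and is independent of the adversary's view, so the strong-extractor guarantee (which tolerates a public seed $\hexk$) makes $\hatk$ indistinguishable from uniform. Your explicit reduction and your remark on why a strong extractor rather than a PRF is needed in Case~2 (the seed may be recoverable once $\bar s$ is compromised after the handshake) just spell out what the paper leaves implicit.
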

\begin{proof}
  In $\hyb_4$, $\hatk$ is the output of $\extract$ on a uniformly random
  string while in $\hyb_5$, it is a uniformly random string. With overwhelming
  probability, the ephemeral DH exponents of all (non-corrupt) sessions other
  than the test session will not be $\bar x$ and $\bar y$ (since the simulator
  chooses the ephemeral exponents uniformly at random). Thus the value of
  $H_2(g^{\bar x}, g^{\bar y}, g^{\bar x \bar y})$ is only used once in the
  simulation and is independent of the adversary's view (by admissibility). The
  claim then follows from the fact that $\extract$ is a strong randomness
  extractor.
\end{proof}

We conclude that hybrids $\hyb_0$ and $\hyb_5$ are computationally indistinguishable.
As in Case 1, the adversary's view of the protocol execution in $\hyb_5$ no longer
depends on the test bit. Thus, the private discovery protocol
in Figure~\ref{fig:formal-discovery-protocol} is a secure service-discovery protocol
(Definition~\ref{def:service-discovery-sec}).

\subsection{Security of 0-RTT Protocol}
\label{app:0-rtt-security}

One of the main advantages of the service discovery protocol from
Figure~\ref{fig:formal-discovery-protocol} is its support for 0-RTT mutual
authentication. To enable support for application data on the first flow, when
the client prepare the initialization message, it derives an additional key
using the PRG. More precisely, when generating the initialization query,
it computes $k = H_1(g^{s}, g^x, g^{sx})$ and
$(\htk, \htk', \exk, \eadk) = \prg(k)$. In the first flow of the handshake,
it includes any early application data encrypted under $\eadk$. In this
section, we show that this 0-RTT protocol is secure in
the one-pass security model of~\cite[\S4.3]{KW15}. In~\cite[\S5.3]{KW15},
Krawczyk and Wee perform a similar analysis for the 0-RTT mode of the OPTLS
protocol in TLS~1.3.

In the one-pass setting, it is not possible to achieve perfect forward secrecy
for the lifetime of the server's broadcast (since in a 0-RTT protocol, the
server is unable to contribute an ephemeral secret to the session setup).
Moreover, the $\eadk$ in the 0-RTT protocol is vulnerable to replays of the
client's message (again, for the lifetime of the server's broadcast). One way
to address the replay problem is to have each client choose their session id
at random and have each server maintain a list of session ids that have been
used for the lifetime of each broadcast.

\para{One-pass security model.} We now specify the one-pass security model
more formally in the discovery setting. In this model, when the adversary
activates a client to initiate a session $(P, \bid, \sid)$ with a broadcast
message $B$, if the client does not abort the protocol then it outputs a
public tuple $(P, \bid, \sid, Q)$ and a secret key $\eadk$. We refer to this
tuple as the ``one-pass session output'' to distinguish it from the normal
session output. The one-pass session output for the server is defined to be its
usual session output. We note that if a client's one-pass session output
is $(P, \bid, \sid, Q)$, then if the session $(P, \bid, \sid)$ completes,
it must complete with the same output $(P, \bid, \sid, Q)$. In the one-pass
setting, we only reason about the first message in the protocol, and so
we work {\em only} with the one-pass session outputs.

As usual, the
adversary has full control over the network, and can activate parties to
initiate and respond to messages. As before, it can perform
$\broadcastreveal$, $\statereveal$, $\keyreveal$, and $\corrupt$ queries, subject
to the usual admissibility constraints. The
difference is that now the $\keyreveal$ query returns $\eadk$ instead of the
session key. We also modify the adversary's goal to be to distinguish the
early application data key of a target session rather than the session key.
In the one-pass setting, we impose an additional restriction on
the adversary:
\begin{citemize}
  \item The adversary cannot issue a $\broadcastreveal$ query on $(Q, \bid)$
  nor corrupt $Q$ before the expiration of $(Q, \bid)$.
\end{citemize}
This restriction reflects the fact that perfect forward secrecy is not
achievable for the lifetime of the server's broadcast. We now prove the
following theorem, which proceeds very similarly to the analysis of
Case~1 in the proof of Theorem~\ref{thm:disc-protocol-secure} in
Appendix~\ref{app:proof-property-2}.

\begin{theorem}
  \label{thm:one-pass-sec}
  The protocol in Figure~\ref{fig:formal-discovery-protocol} (adapted to the
  0-RTT setting) achieves one-pass security for early application data in the
  random oracle model, assuming the Strong-DH assumption in $\bbG$, and the
  security of the underlying cryptographic primitives (the signature scheme,
  the PRG, the authenticated encryption scheme, and the extraction algorithm).
\end{theorem}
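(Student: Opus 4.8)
The plan is to mirror the Case~1 analysis from the proof of Theorem~\ref{thm:disc-protocol-secure}, but in a truncated form: since the early-data key $\eadk$ is derived as one of the outputs of $\prg(H_1(g^s,g^x,g^{sx}))$, it depends only on the server's semi-static share $g^s$ and the client's ephemeral share $g^x$, and in particular is independent of the server's response and of any ephemeral server exponent $\bar y$. As a first step I would reduce to selective security exactly as before, so that the adversary commits at the outset to the test session $(\bar P,\hbid,\hsid)$, its peer $\bar Q$, and the role of $\bar P$; since the quantity we must protect is produced by the client, we take $\bar P$ to be the client $\bar C$ and $\bar Q$ to be the server $\bar S$ (the matching server processing of the same first flow derives the identical $\eadk$, so the argument is symmetric). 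The key observation is that the extra one-pass admissibility restriction---no $\broadcastreveal$ on $(\bar S,\hbid)$, and no $\corrupt$ of $\bar S$ before $(\bar S,\hbid)$ expires---guarantees that the semi-static secret $\bar s$ is never exposed, which places us squarely in the analog of Case~1 and is exactly why perfect forward secrecy is not claimed here.

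Next I would instantiate the simulator $\bar\calS$ so that it plants $g^{\bar s}$ as the broadcast share of $(\bar S,\hbid)$ and $g^{\bar x}$ as the client's ephemeral share in $(\bar C,\hbid,\hsid)$, and substitutes a fresh tuple $(\hhtk,\hhtk',\hexk,\headk)$ wherever the shares $(g^{\bar s},g^{\bar x},g^{\bar s\bar x})$ would be used to derive keys via $\prg(H_1(\cdot))$. I would then run the short hybrid sequence $\hyb_0,\hyb_1,\hyb_2$, where $\hyb_0$ is the real one-pass experiment, $\hyb_1$ aborts if $\calA$ ever queries $H_1$ at $(g^{\bar s},g^{\bar x},g^{\bar s\bar x})$, and $\hyb_2$ replaces $(\hhtk,\hhtk',\hexk,\headk)$ by uniform elements of $\calK$. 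Because $\eadk$ is fixed once the first flow is sent, the later hybrids $\hyb_3$ through $\hyb_5$ of the full proof (which concern matching of the response and the session key $\atk$) are not needed.

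The indistinguishability of $\hyb_0$ and $\hyb_1$ is the analog of Claim~\ref{claim:priv-disc-hyb-0-1}: I would embed $(g^{\bar s},g^{\bar x})$ from a Strong-DH challenge, program $H_1$ through the tables $T_1$ and $T_2$, and use the DDH oracle $\calO(g^{\bar s},\cdot,\cdot)$ to answer $\keyreveal$ queries consistently without knowing $\bar s$---the only wrinkle relative to the original claim is that $\keyreveal$ now returns $\eadk$ rather than the session key, so the table lookups serve to reconstruct $\prg(H_1(g^{\bar s},g^{x},g^{\bar s x}))$ rather than the application-traffic key; a distinguishing query on the planted tuple then yields $g^{\bar s\bar x}$ and breaks Strong-DH. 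The step $\hyb_1\approx\hyb_2$ is essentially verbatim Claim~\ref{claim:priv-disc-hyb-1-2}: once $\hyb_1$ never queries $H_1$ on the planted tuple, its seed is an unrevealed uniform value, so $\prg$ applied to it is indistinguishable from uniform by PRG security.

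Finally I would conclude by arguing that in $\hyb_2$ the adversary's view is independent of the challenge bit. The key $\headk$ is uniform and is used only in the test session $(\bar C,\hbid,\hsid)$ and in the matching server processing of that same first flow; by admissibility neither may be $\keyreveal$ed, and every other session sharing broadcast $\hbid$ uses a client ephemeral exponent different from $\bar x$ with overwhelming probability (since $p$ is superpolynomial in $\lambda$), so its early-data key is independent of $\headk$. Hence the $\Test$ response is uniform regardless of $b$. I expect the main obstacle to be the Strong-DH reduction in the first claim: the delicate part is maintaining consistency of the programmed random oracle $H_1$ with the DDH-oracle-driven answers to $\keyreveal$ queries across all sessions built on the challenge broadcast $g^{\bar s}$, where the simulator does not know $\bar s$ and must nonetheless return the correct $\eadk$ on demand.
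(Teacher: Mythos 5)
Your hybrid structure is exactly the paper's: the same three experiments ($\hyb_0$ real, $\hyb_1$ aborting on an $H_1$ query at $(g^{\bar s}, g^{\bar x}, g^{\bar s\bar x})$, $\hyb_2$ with uniform $\hhtk,\hhtk',\hexk,\headk$), with the first transition reduced to Strong-DH via the $T_1/T_2$ random-oracle programming and the second to PRG security, and the observation that the later hybrids of Theorem~\ref{thm:disc-protocol-secure} are unnecessary because $\eadk$ is fixed by the first flow. Up to and including $\hyb_2$ your argument matches the paper's.

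The gap is in your final step, where you argue that in $\hyb_2$ the view is independent of $\headk$. Your justification --- ``every other session sharing broadcast $\hbid$ uses a client ephemeral exponent different from $\bar x$ with overwhelming probability'' --- only covers sessions in which the \emph{simulator}, acting for an honest client, samples the ephemeral exponent. It does not rule out the adversary itself delivering to the server $\bar S$, under broadcast $\hbid$ but a session id $\sid\ne\hsid$, a replayed or crafted first flow that carries the DH share $g^{\bar x}$: such a server session would derive exactly the keys $(\hhtk,\hhtk',\hexk,\headk)$, and if it completed, a $\keyreveal$ on it would be admissible and would hand the adversary $\headk$. The paper closes precisely this case by invoking ciphertext integrity of the authenticated encryption under $\hhtk$ (the simulator produces only one ciphertext under $\hhtk$, and it binds $(\hbid,\hsid)$, so the server's decryption/signature checks cannot pass for any other session id except by a forgery), and it uses Lemma~\ref{lem:proper-client-init} and the same integrity argument to show that the sessions $(\bar Q,\hbid,\hsid)$ at the peer always \emph{match} the test session (rather than completing with some third party $R\ne\bar P$) and hence cannot be exposed. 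This is why the signature scheme and the authenticated encryption scheme appear in the theorem's hypotheses; your sketch never uses them in the independence argument, and without them that argument does not go through. The fix is mechanical --- add the case analysis over sessions built on $(\bar S,\hbid)$ and the two matching-session cases, each discharged by ciphertext integrity or unforgeability --- but it is a step you must supply, not one that follows from the randomness of honest exponents alone.
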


\begin{proof}
  We proceed as in the proof of Theorem~\ref{thm:disc-protocol-secure}
  from Appendix~\ref{app:proof-property-2}. In particular, we define an
  analogous simulator $\bar S$ for the one-pass security experiment. We
  define the following sequence of hybrid experiments:
  \begin{citemize}
    \item \textbf{Hybrid $\hyb_0$:} This is the real
    experiment where
    $k = H_1(g^{\bar s}, g^{\bar x}, g^{\bar s \bar x})$ and
    $(\hhtk, \hhtk', \hexk, \headk) = \prg(k)$.

    \item \textbf{Hybrid $\hyb_1$:} Same as $\hyb_0$, except
    the simulator aborts if $\calA$ queries $H_1$ on the input
    $(g^{\bar s}, g^{\bar x}, g^{\bar s \bar x})$.

    \item \textbf{Hybrid $\hyb_2$:} Same as $\hyb_1$, except
    $\hhtk$, $\hhtk'$, $\hexk$, and $\headk$ are all replaced by
    uniformly random values over $\calK$.
  \end{citemize}
  Hybrids $\hyb_0$ and $\hyb_1$ are computationally indistinguishable by the
  same argument as in the proof of Claim~\ref{claim:priv-disc-hyb-0-1}.
  Next, hybrids $\hyb_1$
  and $\hyb_2$ are computationally indistinguishable by PRG security
  using the same argument as in the proof of Claim~\ref{claim:priv-disc-hyb-1-2}.

  To complete the proof, we argue that $\headk$ is independent of the
  adversary's view of the protocol execution (before it makes the $\Test$
  query). By construction $\headk$ is only used in sessions where the client's
  ephemeral DH share is $\bar x$ and the server's semi-static DH share is
  $\bar s$. Let $(\bar P, \hbid, \hsid, \bar Q)$ be the test session.
  As usual, let $\bar C, \bar S \in \set{\bar P, \bar Q}$ denote the client
  and server, respectively in the test session. We consider several cases:
  \begin{citemize}
    \item Consider a session $(P, \bid, \sid)$ where $P$ is the client
    and $(P, \bid, \sid) \ne (\bar C, \hbid, \hsid)$. Since the simulator chooses
    the ephemeral DH exponent $x$ uniformly at random in this session, with
    overwhelming probability $x \ne \bar x$. Thus, $\headk$ is independent
    of all parameters and messages associated with this session.

    \item Consider a session $(P, \bid, \sid)$ where $P$ is the client
    and $(P, \bid, \sid) = (\bar C, \hbid, \hsid)$. If $P = \bar P$, then
    this is the test session and cannot be exposed. Consider the case
    where $P = \bar Q$. Since $(\bar Q, \hbid, \hsid)$ matches
    $(\bar P, \hbid, \hsid, \bar Q)$, the adversary
    can only expose $(\bar Q, \hbid, \hsid)$ if this session completes with
    a peer $R \ne \bar P$. In this case, $\bar P$ is the server, so applying
    Lemma~\ref{lem:proper-client-init}, session $(\bar Q, \hbid, \hsid)$
    can only complete with peer $\bar P$. Thus $(\bar Q, \hbid, \hsid)$
    will always match the test session and cannot be exposed.

    \item Consider a session where $(P, \bid, \sid)$ where $P$ is the server
    and $(P, \bid) \ne (\bar S, \hbid)$. Since the simulator chooses the
    semi-static DH exponent $s$ uniformly at random for the broadcast $(P, \bid)$,
    with overwhelming probability $s \ne \bar s$. Again, $\headk$ is independent
    of the session parameters.

    \item Consider a session $(P, \bid, \sid)$ where $(P, \bid) = (\bar S, \hbid)$
    but $\sid \ne \hsid$. Suppose the session completes
    with a peer $Q$ and let $g^x$ be the client's ephemeral DH share in $(P,
    \bid, \sid)$. We argue that with overwhelming probability, $x \ne \bar x$.
    Suppose for sake of contradiction that a session $(P, \bid, \sid)$
    completes with peer $Q$ and where $Q$'s ephemeral DH share is $g^{\bar x}$.
    This means that $P$ must have received an encryption $\ct_Q$ of the
    message $(\id_P, \id_Q, \sigma_Q)$ under the handshake traffic key
    $\hhtk$ derived from $(g^{\bar s}, g^{\bar x}, g^{\bar s \bar x})$. Here,
    $\sigma_Q$ is $Q$'s signature on a string containing the tuple $(\bid,
    \sid) \ne (\hbid, \hsid)$. Whenever the adversary activates a party to initialize a session
    other than $(\bar P, \hbid, \hsid)$, the simulator chooses an ephemeral DH
    share $x$ uniformly at random. With overwhelming probability $x \ne \bar
    x$. Thus, with overwhelming probability, the simulator will only construct
    a single message encrypted under $\hhtk$, namely the message for
    session $(\bar P, \hbid, \hsid)$. In particular, $\ct_Q$ is not a
    ciphertext constructed by the simulator. Since $\hhtk$ is uniformly random
    (and unknown to the adversary), and the encryption scheme provides
    ciphertext integrity, this happens with negligible probability.
    Thus, $x \ne \bar x$ with overwhelming probability, and $\headk$ is
    independent of the session parameters.

    \item Consider a session $(P, \bid, \sid)$ where $P$ is the server and
    $(P, \bid, \sid) = (\bar S, \hbid, \hsid)$. The case $P = \bar P$
    corresponds to the test session and cannot be exposed. Suppose $P = \bar
    Q$. Then, the session $(\bar Q, \hbid, \hsid)$ matches the test session
    unless it completes with a peer $R \ne \bar P$. Suppose this happens, and
    let $g^x$ be the client's ephemeral DH share in $(\bar Q, \hbid, \hsid)$.
    By the same argument as in the previous case, we can argue that $x \ne \bar x$
    with overwhelming probability assuming that the underlying encryption
    scheme provides ciphertext integrity.
  \end{citemize}
  We have shown that the simulator's response to all admissible queries are
  independent of $\headk$ in hybrid $\hyb_3$. We conclude that the service
  discovery protocol in Figure~\ref{fig:formal-discovery-protocol} (adapted
  to the 0-RTT setting) achieves one-pass security.
\end{proof}

\subsection{Privacy of Service Discovery Protocol}
\label{app:discovery-protocol-privacy}
In this section, we show that the service discovery protocol in
Figure~\ref{fig:formal-discovery-protocol} is private if the server encrypts
its broadcast using a prefix encryption scheme (as shown in
Figure~\ref{fig:discovery-proto}). As was the case with our analysis of the
private mutual authentication scheme, we assume the prefix encryption scheme
is constructed from an IBE scheme as described in Section~\ref{sec:crypto-primitives}.
In our analysis, we use the same general
privacy model from Appendix~\ref{app:key-exchange-privacy}, but adapted to the
service discovery setting (Appendix~\ref{app:service-disc-model}). Because our
protocol supports 0-RTT mutual authentication and client identification is
provided on the {\em first} round in the protocol, we can only guarantee
a meaningful notion of privacy if we restrict ourselves to the one-pass
security setting introduced in Appendix~\ref{app:0-rtt-security}. Thus,
in the following description, we match sessions based on their one-pass
session outputs rather than their session outputs. We say a session completes
(in the one-pass sense) if it produces a one-pass session output.

In particular, in the 0-RTT service discovery model, privacy is achievable
as long as the adversary does not expose any session (in the one-pass sense)
that involves the test party. Otherwise, the adversary is able to trivially
learn the identity of the test party. We now state our admissibility requirements
more precisely:
\begin{citemize}
  \item The adversary does not corrupt $\id_\ptest$ or make a
  $\statereveal$ query on any session $(P_\ptest, \bid, \sid)$ that completes
  (in the one-pass sense).

  \item Whenever a session $(P_\ptest, \bid, \sid)$ completes (in the one-pass sense)
  with public output $Q$, then the adversary has not made a $\statereveal$ query on the
  session $(Q, \bid, \sid)$, and moreover, $Q$ is not corrupt before the completion of
  $(P_\ptest, \bid, \sid)$.

  \item Whenever a session $(P, \bid, \sid)$ completes (in the one-pass sense) with
  public output $P_\ptest$, then the adversary has not made a $\statereveal$
  query on $(P, \bid, \sid)$.

  \item Let $\Pi_\ptest$ denote the set of policies the adversary
  has associated with the test party $P_\ptest$. Let $I \subseteq [n]$ be the
  indices of the parties the adversary has either corrupted or on which it has
  issued a $\broadcastreveal$ query in the course of the
  protocol execution. Then, for all policies $\pi \in \Pi_\ptest$ and
  indices $i \in I$, it should be the case that $\id_i$ does not satisfy
  $\pi$.

  \item Whenever the adversary associated a policy $\pi$ with a
  broadcast $(P, \bid)$ or a session $(P, \bid, \sid)$, it must be the
  case that either $\sind{\id_\ptest}{0}$ and
  $\sind{\id_\ptest}{1}$ both satisfy $\pi$ or neither satisfy $\pi$.
\end{citemize}
We show the following theorem:
\begin{theorem}
  \label{thm:disc-privacy}
  The protocol in Figure~\ref{fig:formal-discovery-protocol} (where the
  broadcasts are encrypted under the server's policy using a prefix encryption
  scheme) is private in the random oracle model assuming the IBE scheme used
  to construct the prefix encryption scheme is $\indidcca$-secure, the
  Strong-DH and Hash-DH assumptions hold in $\bbG$, and the
  underlying cryptographic primitives (the signature scheme, the PRG, the
  authenticated encryption scheme, and the extraction algorithm) are secure.
\end{theorem}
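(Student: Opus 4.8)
The plan is to adapt the hybrid argument used for Theorem~\ref{thm:mutual-auth-proto-private}, reusing the key-exchange security analysis from Appendix~\ref{app:proof-property-2} to establish handshake-key secrecy and then invoking $\indidcca$ security of the IBE scheme to swap the server's identity inside the (prefix-encrypted) broadcast. As in the mutual-authentication analysis, I would set up a simulator $\calS$ that plays the challenger in the privacy experiment, prepares certificates binding both test identities $\sind{\id_\ptest}{0}$ and $\sind{\id_\ptest}{1}$ to the same verification key of $P_\ptest$, and answers all activation and exposure queries. Since client identification occurs on the first flow, I work throughout in the one-pass setting of Appendix~\ref{app:0-rtt-security}, matching on one-pass session outputs. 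The goal is to show that $\expt_0$ and $\expt_1$ are indistinguishable, where $P_\ptest$ uses $\sind{\id_\ptest}{b}$ in $\expt_b$.

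The test identity appears in the observable transcript in two kinds of locations: inside $P_\ptest$'s prefix-encrypted broadcast as the server identity $\id_S$ (protected by the IBE scheme), and inside handshake-key-encrypted payloads---namely $P_\ptest$'s server response $\set{\ldots,\id_S,\ldots}_{\htk'}$, $P_\ptest$'s client first-flow message $\set{\id_S,\id_C,\ldots}_{\htk}$, and the corresponding references to $\id_\ptest$ in the messages honest peers send to $P_\ptest$ (all protected by $\htk$ or $\htk'$). I would therefore use four stages: in $\hyb_0=\expt_0$ the test party uses $\sind{\id_\ptest}{0}$; in $\hyb_1$ I scrub every appearance of the test identity that sits under a handshake key of an unexposed one-pass session, replacing the encrypted payload with an encryption of a fixed dummy identity; in $\hyb_2$ I swap the identity inside the broadcast from $\sind{\id_\ptest}{0}$ to $\sind{\id_\ptest}{1}$; and in $\hyb_3=\expt_1$ I un-scrub the handshake-encrypted payloads, now using $\sind{\id_\ptest}{1}$.

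For the scrubbing steps ($\hyb_0\to\hyb_1$ and $\hyb_2\to\hyb_3$) I would argue, through a sub-hybrid over each relevant session, that its handshake keys $\htk,\htk'$ are indistinguishable from uniform---exactly the content of Claims~\ref{claim:priv-disc-hyb-0-1}--\ref{claim:priv-disc-hyb-1-2} (Strong-DH in the random oracle model for $H_1$, then PRG security). The admissibility conditions are precisely what guarantee these sessions are unexposed: whenever $P_\ptest$ reveals its identity it completes a one-pass session with a peer $Q$, and the exposure restrictions on $P_\ptest$ and on the peers of its completed sessions force $Q$ to be uncorrupted with un-revealed state, while the requirement that no corrupted or broadcast-revealed party satisfies a policy in $\Pi_\ptest$ keeps the semi-static secret $s$ underlying $\htk,\htk'$ hidden. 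Once the key is pseudorandom, semantic security of the authenticated-encryption scheme lets me swap in the dummy payload. The subtlety that the honest peer (simulated by $\calS$) would ordinarily reject a dummy payload is handled exactly as in Claim~\ref{claim:priv-mutual-auth-0-1}: the peer is the matching session, which the adversary may not expose, so its internal accept/reject decision is never observed and, by the both-or-neither restriction on policies, is in any case independent of which test identity is used.

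The main obstacle is the broadcast swap $\hyb_1\to\hyb_2$, which I would prove through sub-hybrids over the broadcasts $P_\ptest$ issues, each reducing to $\indidcca$ security as in Claim~\ref{claim:priv-mutual-auth-1-2}. The reduction $\calB$ receives $\mpk$, chooses the semi-static exponent $s$ itself, and submits the two equal-length plaintexts $(\sind{\id_\ptest}{0},g^s,\sig_S(\bid,\sind{\id_\ptest}{0},g^s))$ and $(\sind{\id_\ptest}{1},g^s,\sig_S(\bid,\sind{\id_\ptest}{1},g^s))$, using the returned ciphertext $\hct_S$ as the challenge broadcast; since $\calB$ knows $s$ it simulates every server response and session key for that broadcast without decrypting $\hct_S$. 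The delicate point---and the reason the scrubbing comes first---is simulating honest clients that satisfy the broadcast policy and must decrypt $\hct_S$: in the prefix scheme this is an IBE decryption under the challenge identity $\bar\pi$, which $\calB$ is forbidden to request. After scrubbing, however, such a client's only observable output is a handshake-key ciphertext that no longer depends on $\id_S$, so $\calB$ need not recover the plaintext of $\hct_S$; it need only reproduce the client's accept/reject decision, which is identical for both test identities (by the both-or-neither restriction on the client's policy and by $\calB$'s knowledge that both signatures it generated are valid). Finally, $\calB$ is admissible because its only extraction queries arise from $\corrupt$ or $\broadcastreveal$ queries on parties that, by admissibility, do not satisfy $\bar\pi$, so it never extracts a key for the challenge identity. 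Chaining the four stages yields indistinguishability of $\expt_0$ and $\expt_1$, establishing Theorem~\ref{thm:disc-privacy}.
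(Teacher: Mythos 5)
Your proposal takes essentially the same route as the paper's proof in Appendix~\ref{app:discovery-protocol-privacy}: work in the one-pass setting, first change the test identity inside all handshake-key-encrypted payloads by arguing that $\htk,\htk'$ are pseudorandom for unexposed sessions (via the Strong-DH/PRG analysis and the admissibility conditions, exactly as you cite), and then swap the identity inside the prefix-encrypted broadcast via an $\indidcca$ reduction in which $\calB$ chooses $s$ itself, signs both test identities, submits both tuples as the challenge, and uses the both-or-neither policy condition to avoid ever decrypting the challenge ciphertext. The only real deviation is that the paper replaces $\sind{\id_\ptest}{0}$ directly by $\sind{\id_\ptest}{1}$ in the handshake payloads (both identities are bound to the same verification key, so the matching session's behavior is genuinely unchanged) rather than scrubbing to a dummy; your variant still works, but your justification that the peer's accept/reject decision ``is never observed'' is imprecise---the adversary does see whether the peer continues the protocol---and the correct fix is that the simulator, which controls both endpoints, processes the matching session on the real payload it already knows.
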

\begin{proof}
  The proof is similar to that of Theorem~\ref{thm:mutual-auth-proto-private}.
  As in the proof of Theorem~\ref{thm:mutual-auth-proto-private}, we first
  define a simulator that simulates the role of the challenger for the
  adversary $\calA$ in the protocol execution environment. We then define
  a series of hybrid experiments.

  Specifically, the simulator $\calS$ takes as input the number of parties
  $n$, the security parameter $\lambda$, and the adversary $\calA$, and plays
  the role of the challenger in the protocol execution experiment with $\calA$.
  At the beginning of the simulation, $\calA$ submits a tuple of distinct identities
  $(\id_1, \ldots, \id_n)$ and two test identities $\sind{\id_\ptest}{0}$
  and $\sind{\id_\ptest}{1}$. During the protocol execution, the simulator
  chooses the parameters for each party and responds to the adversary's queries
  according to the specification of the hybrid experiment. We now define our
  sequence of hybrid experiments:
  \begin{citemize}
    \item \textbf{Hybrid $\hyb_0$:} This is the real experiment $\expt_0$.

    \item \textbf{Hybrid $\hyb_1$:} Same as $\hyb_0$, except
    $\sind{\id_\ptest}{1}$ is used in place of $\sind{\id_\ptest}{0}$
    when simulating handshake messages for sessions where $P_\ptest$
    is the client.

    \item \textbf{Hybrid $\hyb_2$:} Same as $\hyb_1$, except
    $\sind{\id_\ptest}{1}$ is used in place of $\sind{\id_\ptest}{0}$
    when simulating handshake messages for sessions where $P_\ptest$
    is the server.

    \item \textbf{Hybrid $\hyb_2$:} This is the real experiment $\expt_1$.
  \end{citemize}
  We now show that each consecutive pair of hybrid experiments is
  computationally indistinguishable.

  \begin{claim}
    \label{claim:disc-priv-hyb-0-1}
    Hybrids $\hyb_0$ and $\hyb_1$ are computationally indistinguishable in the
    random oracle model assuming the Strong-DH and Hash-DH assumptions hold in
    $\bbG$ and the security of the underlying cryptographic primitives.
  \end{claim}
  \begin{proof}
    In the service discovery protocol, the client in the protocol execution
    always encrypts both its and the server's identity under the handshake
    traffic key $\htk$. Similarly, in the server's response message, the
    server replies with an encryption of both the client's and server's
    identity under the handshake traffic key $\htk'$. Similar to the proof of
    Claim~\ref{claim:priv-mutual-auth-0-1}, it suffices to argue that the
    adversary's view of the handshake secret keys $\htk$ and $\htk'$ in all
    sessions where $P_\ptest$ is the client is computationally
    indistinguishable from uniform. The claim then follows by semantic
    security of the underlying encryption scheme.

    Formally we use the following hybrid argument. Let $q$ be a bound on the
    number of times the adversary $\calA$ activates the test party $P_\ptest$
    to initiate a session $(P_\ptest, \bid, \sid)$. We define a sequence of
    $q + 1$ hybrid experiments $\hyb_{0, 0}, \ldots, \hyb_{0, q}$ where
    in hybrid experiment $\hyb_{0, i}$, the simulator uses $\sind{\id_\ptest}{0}$
    as the identity in $P_\ptest$'s message for the first $q - i$ sessions
    and $\sind{\id_\ptest}{1}$ as the identity in the last $i$ sessions.
    By construction, $\hyb_0 \equiv \hyb_{0,0}$ and $\hyb_1 \equiv \hyb_{0,q}$.
    It suffices to argue that for each $i \in [q]$, hybrids $\hyb_{0,i-1}$ and
    $\hyb_{0,i}$ are computationally indistinguishable.

    We now argue that $\hyb_{0, i-1}$ and $\hyb_{0, i}$ are computationally
    indistinguishable. It suffices to argue that the key $\htk$ the test party
    $P_\ptest$ uses to encrypt its identity in the $\ord{(q-i+1)}$ session is
    uncompromised. Let $(P_\ptest, \sid, \bid)$ be the $\ord{(q-i+1)}$ session
    initiated at $P_\ptest$. Let $\htk$ and $\htk'$ be the handshake keys $P_\ptest$
    derives in this session to encrypt its identity. We first argue that $\htk$
    is uncompromised. An identical argument applies to $\htk'$.
    By construction, if $P_\ptest$ sent an initialization message in session $(P_\ptest, \sid, \bid)$,
    it must have completed its session (in the one-pass sense) with a peer $Q$.
    The overall argument now follows identically to that used to argue
    that the early application data key $\eadk$ in the client's message is
    uncompromised in the proof of Theorem~\ref{thm:one-pass-sec} (since $\htk$ and $\htk'$
    play an analogous role as $\eadk$ in that proof). To apply the argument from
    the proof of Theorem~\ref{thm:one-pass-sec},
    we let $(P_\ptest, \sid, \bid)$ be the test session, and verify that all the
    legal queries in the privacy setting are also admissible in the one-pass
    security setting.

    In the private discovery setting, the adversary cannot corrupt or issue a
    $\statereveal$ query on a session $(P_\ptest, \bid, \sid)$.
    This is because in the one-pass setting, if the adversary activates
    $P_\ptest$ to initiate a session $(P_\ptest, \bid, \sid)$,
    and $P_\ptest$ does not abort when processing the initialization query, then the
    session must have completed in the one-pass sense.

    It suffices to check that the adversary does not corrupt $Q$ or issue a
    $\broadcastreveal$ query on the broadcast $(Q, \bid)$ where $Q$ is
    the peer of $P_\ptest$ in the target session. Since $(P_\ptest, \bid, \sid)$
    completes, by the admissibility condition, $Q$ must not have been corrupt at the time the session
    completed. Thus, we can invoke Lemma~\ref{lem:proper-client-init} to conclude that
    $P_\ptest$ was activated to initiate a session with the broadcast message
    output by $(Q, \bid)$. Note that the proof of Lemma~\ref{lem:proper-client-init}
    only depends on the broadcast message and the first message in the mutual
    authentication handshake, so the statement applies even in our one-pass setting.

    By admissibility in the private discovery setting, if the adversary issued
    a $\broadcastreveal$ query on $(Q, \bid)$, then the policy $\pi_Q$ must not
    satisfy $P_\ptest$. But in this case, an honest $P_\ptest$ would have aborted
    the session $(P_\ptest, \bid, \sid)$. We conclude that an admissible
    adversary could not have made a $\broadcastreveal$ on the peer's broadcast $(Q, \bid)$.
    In this case, the target session $(P_\ptest, \bid, \sid)$ is admissible
    in the one-pass security setting, and so by the same argument as in the
    one-pass security proof (Theorem~\ref{thm:one-pass-sec}), we conclude that
    the handshake security keys $\htk$ and $\htk'$ are uniform and completely
    hidden to the adversary. The claim then follows by semantic security of
    the underlying authenticated encryption scheme.
  \end{proof}

  \begin{claim}
    Hybrids $\hyb_1$ and $\hyb_2$ are computationally indistinguishable in the
    random oracle model assuming Strong-DH and Hash-DH assumptions hold in
    $\bbG$ and the security of the underlying cryptographic primitives.
  \end{claim}
  \begin{proof}
    The proof of this statement follows analogously to that of
    Claim~\ref{claim:disc-priv-hyb-0-1}. We show that in all sessions
    where $P_\ptest$ is the server, the handshake encryption keys $\htk$ and
    $\htk'$ for that session have not been compromised.

    In the one-pass model, a client's initialization message
    for a session $(P, \bid, \sid)$ will contain the identity of the test
    party $\id_\ptest$ if and only if $(P, \bid, \sid)$ completes with
    one-pass session output $(P, \bid, \sid, P_\ptest)$. It suffices to show
    that $(P, \bid, \sid)$ is not exposed, in which case the handshake encryption
    keys $\htk$ and $\htk'$ are uniform and hidden from the adversary (by a similar
    argument as that used in the proof of Theorem~\ref{thm:one-pass-sec}). The claim
    then follows by a hybrid argument similar to the one used in the proof of
    Claim~\ref{claim:disc-priv-hyb-0-1}.

    Consider a session $(P, \bid, \sid)$ that completes with peer $P_\ptest$.
    Since the session completes, $P$ could not have been corrupt. In the
    discovery model, $P_\ptest$ also cannot be corrupted. Next, by
    admissibility, the adversary cannot issue a $\statereveal$ query on $(P,
    \bid, \sid)$. By one-pass security, the handshake encryption keys for
    this session are uncompromised. The claim follows by semantic security
    of the underlying authenticated encryption scheme.
  \end{proof}

  \begin{claim}
    \label{claim:disc-priv-hyb-2-3}
    Hybrids $\hyb_2$ and $\hyb_3$ are computationally indistinguishable if
    the underlying IBE scheme is $\indidcca$-secure.
  \end{claim}
  \begin{proof}
    This proof proceeds very similarly to that of
    Claim~\ref{claim:priv-mutual-auth-1-2}. In particular, we let $q$ be an
    upper bound on the number of sessions where $\calA$ activates the
    test party $P_\ptest$ to initiate a broadcast. We define a sequence of
    $q+1$ hybrid experiments $\hyb_{2,0}, \ldots, \hyb_{2,q}$ where
    hybrid experiment $\hyb_{2,i}$ is defined as follows:
    \begin{citemize}
      \item Same as $\hyb_2$ except the first $i$ times $P_\ptest$ is activated
      to initialize a broadcast, $P_\ptest$ substitutes the identity
      $\sind{\id_\ptest}{1}$ for $\sind{\id_\ptest}{0}$ in its broadcast. In
      all subsequent times $P_\ptest$ is activated to initialize a broadcast,
      it uses the identity $\sind{\id_\ptest}{0}$.
    \end{citemize}
    We now show that for all $i \in [q]$, hybrids $\hyb_{2,i-1}$ and $\hyb_{2,i}$
    are computationally indistinguishable assuming that the IBE scheme is
    $\indidcca$-secure. Suppose $\calA$ is able to distinguish $\hyb_{2,i-1}$
    from $\hyb_{2,i}$. We use $\calA$ to construct an adversary $\calB$
    for the $\indidcca$-security game. First, $\calB$ is given the public
    parameters $\mpk$ for the IBE scheme. Then, $\calB$ begins running
    $\calA$ and obtains a tuple of identities $(\id_1, \ldots, \id_n)$
    and test identities $\sind{\id_\ptest}{0}$ and $\sind{\id_\ptest}{1}$.
    Algorithm $\calB$ simulates the setup procedure in $\hyb_2$ by choosing
    signing and verification keys for each party $P_i$ and $P_\ptest$. It
    also issues certificates binding $\id_i$ to the verification key for
    each $P_i$. It prepares two certificates binding $P_\ptest$ to
    identities $\sind{\id_\ptest}{0}$ and $\sind{\id_\ptest}{1}$.
    Finally, $\calB$ gives $\mpk$ to $\calA$ and begins simulating the
    protocol execution experiment for $\calA$:
    \begin{citemize}
      \item \textbf{Server broadcast queries.} When adversary $\calA$
      activates a party $P$ to initiate a broadcast $(P, \bid)$, if $P \ne P_\ptest$,
      algorithm $\calB$ simulates the response as in the real scheme. If $P = P_\ptest$,
      then let $\ell$ be the number of times $\calA$ has activated $P_\ptest$
      to initiate a broadcast. Let $\pi$ be the policy specified by $\calA$.
      Algorithm $\calB$ then responds as follows:
      \begin{citemize}
        \item If $\ell < i - 1$, $\calB$ constructs the broadcast as in
        $\hyb_3$, that is using $\sind{\id_\ptest}{1}$.

        \item If $\ell \ge i$, $\calB$ constructs the broadcast as in $\hyb_2$,
        that is, using $\sind{\id_\ptest}{0}$.

        \item If $\ell = i - 1$, $\calB$ chooses a random DH share $s \getsr \Z_p$,
        and computes signatures $\sigma_0 = \sig_{P_\ptest}(\bid, \sind{\id_\ptest}{0}, g^s)$
        and $\sigma_1 = \sig_{P_\ptest}(\bid, \sind{\id_\ptest}{1}, g^s)$. It submits
        the tuples $(\sind{\id_\ptest}{0}, g^s, \sigma_0)$ and
        $(\sind{\id_\ptest}{1}, g^s, \sigma_1)$ to the IBE challenger
        with $\pi$ as its challenge identity. It receives a ciphertext
        $\hct_S$ from the challenger. Algorithm $\calB$ outputs the broadcast
        $(\bid, \hct_S  )$.
      \end{citemize}

      \item \textbf{Client initialization queries.} When an adversary activates
      a client $P$ to initiate a session $(P, \bid, \sid)$ with a broadcast $(\bid, \ct_S)$
      and server policy $\pi_S$, $\calB$ does the following:
      \begin{cenumerate}
        \item If there is already a session $(P, \bid, \sid)$, $\calB$ aborts the session.
        If $P \ne P_\ptest$ and $\id_P$ does not satisfy $\pi_S$, $\calB$
        aborts the session. If $P = P_\ptest$ and $\sind{\id_\ptest}{0}$ does not
        satisfy $\pi_S$, $\calB$ also aborts the session. Recall that our admissibility
        requirement states that either both $\sind{\id_\ptest}{0}$ and $\sind{\id_\ptest}{1}$
        satisfy $\pi_S$ or neither satisfy $\pi_S$.

        \item If $\calB$ is still in the pre-challenge phase, or if $\calB$ is in the
        post-challenge phase and either $\ct_S \ne \hct_S$ or $\pi_S \ne \bar \pi$
        (where $\bar \pi$ is the identity $\calB$ submitted to the IBE challenger in the
        challenge phase), then $\calB$ queries the IBE decryption oracle on $\ct_S$ and
        identity $\pi_S$ to obtain a decrypted broadcast message (or $\bot$). Algorithm
        $\calB$ performs the checks on the decrypted broadcast message and simulates the
        response as described in $\hyb_2$.
        \\ \\
        If $\calB$ is in the post-challenge phase and $\ct_S = \hct_S$ and
        $\pi_S = \bar \pi$, then $\calB$ aborts the session if
        $\sind{\id_\ptest}{0}$ does not satisfy the the client's policy
        (specified by the adversary). Again, by admissibility, either both
        $\sind{\id_\ptest}{0}$ and $\sind{\id_\ptest}{1}$ satisfy the client's
        policy or neither do. If $\calB$ does not abort the session, then
        $\calB$ simulates the client's response message as described in
        $\hyb_2$ (always using the identity $\sind{\id_\ptest}{1}$ for
        $\id_\ptest$).
      \end{cenumerate}

      \item \textbf{Server response queries.} These are handled exactly as in
      $\hyb_2$ and $\hyb_3$. They are independent of the IBE parameters.

      \item \textbf{Client finish queries.} These are handled exactly as in
      $\hyb_2$ and $\hyb_3$.

      \item \textbf{$\statereveal$ and $\keyreveal$ queries.} These are handled
      exactly as in $\hyb_2$ and $\hyb_3$.

      \item \textbf{$\corrupt$ queries.} If $\calA$ asks to corrupt a party $P
      \ne P_\ptest$ (since $\calA$ is admissible), $\calB$ queries the IBE extraction
      oracle for the secret keys for $\id_P$ and each prefix of $\id_P$. It gives
      these secret keys to $\calA$, the long-term signing key associated with $\id_P$,
      and any ephemeral secrets for incomplete sessions currently in the local storage
      of $P$.
    \end{citemize}
    At the end of the game, adversary $\calA$ outputs a guess for whether it is
    in $\hyb_2$ or $\hyb_3$. Algorithm $\calB$ echoes this guess.

    To complete the proof, we show that $\calB$ is an admissible IBE adversary in the
    $\indidcca$-security game. By construction, $\calB$ never asks the adversary to
    decrypt the challenge ciphertext. Similar to the proof of Claim~\ref{claim:priv-mutual-auth-1-2}
    we appeal to the admissibility of $\calA$ to argue that algorithm $\calB$ never
    needs to query the extraction oracle for the identity $\bar \pi$ in the challenge
    query during the simulation.

    By construction, if $\calB$ receives an encryption of $\sind{\id_\ptest}{0}$ from
    the IBE challenger, then it has correctly simulated the broadcast queries according
    to the specification of hybrid $\hyb_{2,i-1}$ for $\calA$. If
    it receives an encryption of $\sind{\id_\ptest}{1}$ from the IBE
    challenger, then it has correctly simulated the broadcast queries according to the
    specification of hybrid $\hyb_{2,i}$ for $\calA$.

    To conclude the proof, we
    check that the client initialization queries are correctly simulated. The only
    non-trivial case is when the adversary submits $\ct_S = \hct_S$ and $\pi_S = \bar \pi_S$.
    All other cases are processed exactly as in $\hyb_2$ and $\hyb_3$. In the case where
    $\ct_S = \hct_S$ and $\pi_S = \bar \pi_S$, the ciphertext is either a valid
    encryption of a broadcast from $P_\ptest$ with identity $\sind{\id_\ptest}{0}$
    or with identity $\sind{\id_\ptest}{1}$. By admissibility, the client will
    either accept both $\sind{\id_\ptest}{0}$ and $\sind{\id_\ptest}{1}$ or neither.
    Thus, in the real scheme, the client's decision to abort the session is {\em independent}
    of whether the server's identity is $\sind{\id_\ptest}{0}$ or $\sind{\id_\ptest}{1}$.
    The response generation step in $\hyb_2$ and $\hyb_3$ depends only on
    $\sind{\id_\ptest}{1}$, and is in particular, independent of
    the world bit for the IBE security game. Thus, correctness of the simulation follows.
    Thus, if the IBE scheme is $\indidcca$ secure, then $\hyb_2$ and $\hyb_3$ are computationally
    indistinguishable.
  \end{proof}

  \noindent Combining Claims~\ref{claim:disc-priv-hyb-0-1} through~\ref{claim:disc-priv-hyb-2-3},
  we conclude that the service discovery protocol in Figure~\ref{fig:formal-discovery-protocol}
  (where the broadcasts are encrypted under the server's policy using the prefix encryption
  scheme) is a private service discovery protocol.
\end{proof}

\end{document}